\newtheorem{theorem}{Theorem}[section]
\newtheorem{lemma}[theorem]{Lemma}
\newtheorem*{remark}{Remark}
\newtheorem{assumption}{Assumption}
\newcommand{\Rem}{\textnormal{Rem}}
\newcommand{\calM}{\mathcal{M}}
\newcommand{\calT}{\mathcal{T}}
\newcommand{\calC}{\mathcal{C}}
\newcommand{\calW}{\mathcal{W}}
\newcommand{\EE}{\mathbb{E}}
\newcommand{\rconly}{^{\text{rc only}}}
\newcommand{\eff}{^{\textnormal{eff}}}
\newcommand{\I}[1]{\mathbbm{1}(#1)}
\title{Efficient Inference for Time‐to‐Event Outcomes by Integrating Right‐Censored and Current Status Data}
\author{Xiudi Li$^{1}$\footnote{These authors contributed to this work equally.} \ and Sijia Li$^{2*}$}
\date{$^1$ Division of Biostatistics, University of California, Berkeley \\
$^2$ Department of Biostatistics, University of California, Los Angeles}
\begin{document}

\maketitle
\begin{abstract}
We propose a semiparametric data fusion framework for efficient inference on survival probabilities by integrating right-censored and current status data. Existing data fusion methods focus largely on fusing right-censored data only,  while standard meta-analysis approaches are inadequate for combining right-censored and current status data, as estimators based on current status data alone typically converge at slower rates and have non-normal limiting distributions. In this work, we consider a semiparametric model under exchangeable event time distribution across data sources. We derive the canonical gradient of the survival probability at a given time, and develop one-step estimators along with the corresponding inference procedure. Specifically, we propose a doubly robust estimator and an efficient estimator that attains the semiparametric efficiency bound under mild conditions. Importantly, we show that incorporating current status data can lead to meaningful efficiency gains despite the slower convergence rate of current status–only estimators. We demonstrate the performance of our proposed method in simulations and discuss extensions to settings with covariate shift. We believe that this work has the potential to open new directions in data fusion methodology, particularly for settings involving mixed censoring types.
\end{abstract}

\section{Introduction}
\label{s:introduction}
A major challenge in survival analysis is that event times are not always fully observed. Censoring is especially common and gives rise to two frequently encountered data types: right-censored and current status data. Right-censored data arise when for some individuals, the exact event time is unknown but it is known to occur after some censoring time. Right censoring is common in longitudinal cohort studies and clinical trials where subjects are followed over time, and occurs when a subject is lost to follow-up or the study ends prior to the event of interest. Current status data arise when each subject is assessed only once at some inspection time to determine whether the event has occurred, without observing the exact timing. As an extreme form of interval censoring, current status data provide a snapshot in time instead of a longitudinal record, and are often encountered in cross-sectional surveys \citep{diamond1986proportional,keiding1991age}, screening programs \citep{jewell1990statistical}, and epidemiological studies \citep{becker2017analysis}. 

Importantly, both right-censored data and current status data may arise side-by-side within the same research context. For example, in cancer epidemiology, patients enrolled in prospective cohorts are followed over time until a diagnosis or censoring, generating right-censored data. Meanwhile, population-based cancer registries such as the Surveillance, Epidemiology, and End Results (SEER) Program, or cross-sectional surveys \citep{caplan1995use,iezzoni2020cross}, collect cancer prevalence data by recording whether individuals have been diagnosed at a specific point in time, resulting in current status data. In infectious diseases research, prospective surveillance of at-risk individuals in clinical trials captures infection times subject to right censoring \citep{polack2020safety,cohen2011prevention}, whereas periodic seroprevalence surveys \citep{havers2020seroprevalence,wolock2025investigating} provide current status data on infection in the community. Notably, both sources of data contain valuable information on the time to event outcome of interest. 

Under minimal model assumptions, the statistical properties of nonparametric estimators for the survival function constructed from right-censored and current status data differ substantially. For right-censored data under independent or conditional independent censoring, widely used methods such as the Kaplan-Meier estimator \citep{kaplan1958nonparametric}, inverse probability of censoring weighted estimators \citep{robins1992recovery}, and augmented inverse probability of censoring weighted estimators \citep{van2003unified,tsiatis2006semiparametric}, are root-n consistent and asymptotically normal. In contrast, when using only current status data without strong parametric assumptions, the survival function is not pathwise differentiable in general. As a result, estimators typically converge at a slower rate with a non-Gaussian limiting distribution. \cite{ayer1955empirical} and \cite{groeneboom2001computing} proposed cubic root-n rate nonparametric maximum likelihood estimators when the inspection time is marginally independent of the event time. However, such condition may fail in practice, and \cite{lagakos1988use} discussed several examples in which these two times are correlated. To handle such dependence, many existing methods impose parametric assumptions on the underlying data generating distributions \citep{zhang2005statistical} or model the dependence using a pre-specified copula function \citep{zheng1995estimates,wang2012nonparametric,ma2015sieve}. More recently, \cite{wolock2025investigating} studied the setting where the inspection time and event time are conditionally independent given baseline covariates, an assumption that is often more plausible in practice. Their proposed causal isotonic regression estimator is cubic-root consistent and tends to a scaled Chernoff distribution asymptotically \citep{groeneboom2001computing}.

Despite the differences in estimation efficiency, both data types are often available for estimating the survival function. This motivates a key question: how can we efficiently combine the information from both right-censored and current status data to achieve better inference than using either alone? To the best of our knowledge, there is no established nonparametric or semiparametric framework for integrating these two types of data when the corresponding estimators differ fundamentally in convergence rates and asymptotic behaviors. Standard meta-analysis methods, such as inverse-variance weighting, are inadequate in this context due to the non-Gaussian limiting distribution. While alternatives based on the confidence distribution \citep{liu2022nonparametric}, repro-sampling methodologies \citep{xie2022repro}, and more general Bayesian and Fiducial approaches may be applicable, these methods have not been formally studied in our current context.  Moreover, as sample sizes of the two data types grow proportionally, standard combination methods tend to place all weights on the estimator converging at root-n rate, rendering the contribution of current status data asymptotically negligible. Another line of recent works on fusing survival data has been limited to integrating multiple right-censored datasets \citep{gao2024doubly,wen2025multi,liu2025targeted}. Therefore, efficiently fusing right-censored and current status data in a way that fully leverages both sources remains an important open problem.

In this work, we develop a principled approach to integrating right-censored and current status data for estimating marginal survival probability at a given time point of interest. We consider the practical setting under conditionally uninformative censoring and inspection times. When the joint distribution of event times and covariates is exchangeable between the two types of data, we establish the semiparametric efficiency bound for estimating survival probabilities and propose a one-step estimator that achieves this efficiency bound. Two key implications of our results are as follows. First, enriching a right-censored sample with current status data preserves the desirable root-n convergence rate while yielding additional efficiency gains over using right-censored data alone. Second, augmenting current status data with right-censored observations drastically improves efficiency and raises the convergence rate from $n^{-1/3}$ to $n^{-1/2}$. Moreover, we also propose another one-step estimator that enjoys the double robustness property --- one can employ misspecified working models for either the conditional distribution of the event time or the conditional distributions of the censoring time and inspection time, and the resulting estimator remains consistent. We further extend our method to settings with covariate shift. While our main results focus on marginal survival probabilities and two data sources, the methodology naturally generalizes to multiple data sources or to causal parameters such as average treatment effects measured as differences in survival probabilities. We believe this work has the potential to open a new line of exciting research around data fusion with mixed censoring types. 

The rest of this paper is organized as follows. In the remainder of this section, we review relevant literature and introduce key notations. In Section~\ref{sec: fully observed and current status}, we consider the fusion of fully observed event-time data with current status data, and derive the canonical gradient for the survival probability. In doing so, we introduce an integral equation central to our proposed approach whose solution is a key component of the gradients used to construct our estimators for both the fully observed and right-censored settings. In Section~\ref{sec: right censored and current status}, we study the fusion of right-censored and current status data and derive the corresponding canonical gradient. In Section~\ref{sec: one-step estimation}, we introduce the one-step estimators and establish their statistical properties. In Section~\ref{sec: experiments}, we demonstrate the proposed method through simulation studies. In Section~\ref{sec: discussion}, we conclude with a discussion.

\subsection{Related literature}
Our work builds on the growing literature on data fusion. \cite{li2023efficient} developed a general semiparametric framework for estimating pathwise-differentiable target estimands from multiple data sources. \cite{graham2024towards} extended this approach to accommodate situations where the conditional distributions of data sources are aligning under different factorizations of the target distribution. Other works have tailored these approaches specifically to right-censored outcomes: for example,  \cite{liu2025targeted} studied the estimation for causal survival differences, and  \cite{gao2024doubly}  introduced a selective borrowing approach for estimating the restricted mean survival time. However, our setting falls outside these prior works due to two unique features: (1) the event time of interest is partially observed in all data sources, and (2) the available data sources exhibit different censoring types. 

To address these gaps, our approach leverages tools from semiparametric efficiency theory. Central to this framework is the gradient (or influence function), which characterizes the sensitivity of the estimand to local perturbations in the data distribution and enables the construction of regular and asymptotically linear estimators. By identifying the canonical gradient, that is, the gradient with the smallest variance, we can construct estimators that are both consistent and efficient, achieving the semiparametric efficiency bound. In this paper, we employ the one-step estimation strategy, although alternative approaches such as targeted minimum loss based estimation \citep{van2011targeted} could also be used. We refer the readers to \cite{bickel1993efficient,ibragimov2013statistical,bickel1982adaptive} for overviews of semiparametric efficiency theory and one-step estimation.

\subsection{Notations}
Throughout, we use capital letters to denote random variables and the corresponding lowercase letters for their realizations. We condition on lowercase letters in expectations to indicate conditioning on a random variable taking a specific value. For a generic distribution $\nu$, we let $\EE_\nu$ denote the expectation operator under $\nu$.

We use $W \in \mathcal{W} \subset \mathbb{R}^d$ to denote the covariate vector and $T \in \mathbb{R}$ to denote the time to the event of interest. In the right-censored data, we use $R \in \mathbb{R}$ to denote the censoring time, and we observe $(W, \Delta_R, Y)$ where $\Delta_R=\mathbbm{1}(T\leq R)$ is the event indicator and $Y= \min(T,R)$ is the observed time. In the current status data, we let  $C$ denote the inspection time, and observe $(W, \Delta_C,C)$ where $\Delta_C= \mathbbm{1}(T\le C)$ is a binary indicator of whether the event has occurred by $C$. We let $\calC$ denote the support of $C$. In addition, let $S$ denote the type of data source, with $S=0$ for current status data and $S=1$  for right-censored data. The observed data unit can then be compactly written as $X=(S, W,(1-S)\Delta_C, (1-S)C,S\Delta_R,SY)$, whose distribution is denoted by $P$.  

\section{Integrating fully observed event time and current status data}
\label{sec: fully observed and current status}
We first consider the case where the event time $T$ is fully observed in data $S=1$. This scenario arises, for instance, in infectious disease studies where outbreak investigations provide exact infection times for a subset of cases with no right-censoring, while a parallel community survey collects only whether individuals have been infected by a certain survey date. Under such settings, the event time $T$ is fully observed for $S=1$ and $(C,\Delta_C)$ is observed for $S=0$. We refer to this as the ideal fused-data model, as it is absent of right-censoring. In what follows, we first introduce relevant assumptions, and then derive the corresponding tangent space and the canonical gradient of survival probabilities with respect to this model. This result could be of independent interest, but more importantly we introduce an integral equation whose solution $h^*$ is central to our later developments. For this purpose, assuming in this section that the observed data unit is $X^I = (S, W,ST,(1-S)\Delta_C,(1-S)C)$, whose distribution is denoted as $P^I$.

We impose the following assumptions for our main results.
\begin{assumption}[Exchangeability] $(T,W) \perp S$. \label{cond:exchangeability}
\end{assumption}
\noindent Assumption~\ref{cond:exchangeability} requires the joint distribution of time to event $T$ and covariates $W$ to be the same for $S=0$ and $S=1$. In Appendix~\ref{app: extensions}, we relax this assumption to conditional exchangeability that only requires $T \perp S \mid W$, and generalize our results to settings with covariate shifts. However, for presentation clarity, we decided to introduce our main results under the more restrictive assumption. In practice, such exchangeability may be reasonable if the two sets of data are collected from the same population within a similar time frame. 

\begin{assumption}[Conditional independent inspection times] $T \perp C \mid W, S=0$. \label{cond:uninformative inspection}
\end{assumption}

\noindent Assumption~\ref{cond:uninformative inspection} relaxes the marginal independence between inspection times $C$ and event times $T$, and only requires independence within strata defined by covariates $W$. Such independence condition is likely to hold if $W$ contains all covariates that inform the relationship between the inspection times and event times. 

Under Assumptions~\ref{cond:exchangeability} and \ref{cond:uninformative inspection}, the distribution $P^I$ of $X^I$ is fully determined by (1) the marginal distribution of $S$, denoted as $\Pi$, (2) the marginal distribution of $W$, denoted as $P_W$ with denisty $p_W$, (3) the conditional distribution of the event time given covariates, $P_{T|W}$ with conditional CDF $F_{T|W}$ and density function $f_{T|W}$, (4) the conditional distribution of inspection time given covariates $P_{C|W}$ with CDF $G_{C|W}$ and density $g$. That is, $P^I = \mathbb{P}(\Pi,P_W,P_{T|W},P_{C|W})$ for some mapping $\mathbb{P}$. Let $\calM^I$ denote a semiparametric model containing all induced distributions of $X^I$ satisfying Assumptions~\ref{cond:exchangeability} and \ref{cond:uninformative inspection}, that is, $\calM^I = \{\mathbb{P}(\Pi,\tilde P_W,\tilde P_{T|W},\tilde P_{C|W})\}$, with $\Pi$ known by design and no additional assumptions imposed on the remaining three component distributions. Then, $P^I \in \calM^I$. 

Finally, we impose the following assumption on the support of inspection time.
\begin{assumption}[Inspection window positivity]\label{cond:support of C}
The support of $C$ under $P$, denoted by $\calC$, is compact. Moreover, there exists some constant $\zeta >0$ such that $F_{T|W}(c|w) \in [\zeta, 1-\zeta]$ for all $c \in \mathcal{C}$ and $w \in \mathcal{W}$.
\end{assumption}
\noindent Assumption~\ref{cond:support of C} is a technical assumption requiring that the support of the inspection time is bounded and lies within the support of the event time. Such a condition is often reasonable in practice, as inspection is typically scheduled at a time when the event could plausibly occur. Furthermore, if needed, one can restrict to the subset of current status observations whose inspection times fall within a chosen compact set contained in the support of $T$, thereby ensuring this assumption holds. While such pre-processing may alter the covariate distribution $P_{W|S=0}$, we show in Appendix~\ref{app: extensions} that our framework accommodates covariate shifts naturally.

Under Assumptions~\ref{cond:exchangeability} to \ref{cond:support of C}, the likelihood of one observation is given by:
\begin{align*}
    p_W(w)f_{T\mid W}(t\mid w)^s \left\{g(c\mid w) \left(\int_0^c f_{T\mid W}(u\mid w)du\right)^{\delta_c}\left(1-\int_0^c f_{T\mid W}(u\mid w)du\right)^{1-\delta_c}\right\}^{1-s}. \label{eq:lik_full}
\end{align*}
By considering local perturbations of the distribution $P^I$, we can derive the tangent space at $P^I$ relative to the model $\calM^I$ as given in the following lemma. 
\begin{lemma}\label{lemma: tangent space FO and CS}
The tangent space at $P^I$ with respect to the model $\calM^I$ takes the following form
\begin{align}
    \calT^I &= \overline{\textnormal{span}}\Bigg\{h_W(w) + sh(t,w) + (1-s)h_C(c,w) + \frac{(1-s)(\delta_C - F_{T|W}(c|w))}{F_{T|W}(c|w)(1-F_{T|W}(c|w))}\int_0^c f_{T|W}(u|w)h(u,w)du, \nonumber \\
    &\quad \quad \quad h_W \in L_2^0(P_W), h(t,w) \in L_2^0(P_{T|W}), h_C(c,w) \in L_2^0(P_{C|W})\Bigg\},
\end{align}
where $\overline{\textnormal{span}}$ denotes the closure of the linear span of a set.
\end{lemma}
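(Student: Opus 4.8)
The plan is to realize $\calT^I$ as the closed linear span of scores of regular one-dimensional parametric submodels through $P^I$, exploiting the factorization of the likelihood. Since $\Pi$ is known by design, the only free components are $P_W$, $P_{T|W}$, and $P_{C|W}$, and these are variationally independent: perturbing any one while holding the others fixed stays inside $\calM^I$ (the single common $P_{T|W}$ and $P_W$ automatically preserve exchangeability, and $g$ enters only the $S=0$ factor). I would therefore perturb each component separately, compute the associated score, and then argue both inclusions: every such score belongs to the displayed set, and conversely, by the chain rule applied to the factorized likelihood, every submodel score splits into a sum of terms of exactly these three types, so the closed span of the three families coincides with $\calT^I$.

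The two routine components come first. Perturbing the covariate law via $p_W^\epsilon = p_W(1+\epsilon h_W)$ with $\EE_{P_W}[h_W(W)] = 0$ yields the score $h_W(w)$, giving the first summand ranging over $L_2^0(P_W)$; the factor $\Pi$ contributes nothing as it is fixed, and under Assumption~\ref{cond:exchangeability} $h_W$ does not depend on $s$. Perturbing the inspection-time law via $g^\epsilon = g(1+\epsilon h_C)$ with $\EE[h_C(C,W)\mid W, S=0] = 0$ affects only the $S=0$ factor and produces the score $(1-s)h_C(c,w)$ with $h_C \in L_2^0(P_{C|W})$. In both cases, boundedness of the perturbation ensures a valid submodel and the closure recovers the full $L_2^0$ space.

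The hard part will be the event-time component, because $f_{T|W}$ enters the likelihood twice: directly in the $S=1$ factor and through the CDF $F_{T|W}(c|w) = \int_0^c f_{T|W}(u|w)\,du$ in the $S=0$ factor. Taking $f_{T|W}^\epsilon = f_{T|W}(1+\epsilon h)$ with $\EE[h(T,W)\mid W] = 0$, I would first record $\tfrac{d}{d\epsilon}\big|_0 F_{T|W}^\epsilon(c|w) = \int_0^c f_{T|W}(u|w)h(u,w)\,du$, then differentiate the Bernoulli-type log-likelihood $\delta_c \log F_{T|W} + (1-\delta_c)\log(1-F_{T|W})$ and collapse the resulting coefficient via the identity $\tfrac{\delta_c}{F} - \tfrac{1-\delta_c}{1-F} = \tfrac{\delta_c - F}{F(1-F)}$. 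This delivers precisely the two coupled terms $sh(t,w)$ and $\tfrac{(1-s)(\delta_C - F_{T|W}(c|w))}{F_{T|W}(c|w)(1-F_{T|W}(c|w))}\int_0^c f_{T|W}(u|w)h(u,w)\,du$, with $h$ ranging over $L_2^0(P_{T|W})$. To close the argument I would verify membership in $L_2^0(P^I)$: Assumption~\ref{cond:support of C} bounds $F_{T|W}(1-F_{T|W})$ away from zero on $\calC \times \calW$, which secures square-integrability of the integral term, while Assumptions~\ref{cond:exchangeability} and~\ref{cond:uninformative inspection} give $\EE[\Delta_C \mid W, C, S=0] = F_{T|W}(C\mid W)$ and hence the mean-zero property of each score. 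Summing the three component scores and taking the closed linear span then yields the stated form for $\calT^I$.
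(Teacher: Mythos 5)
Your proposal is correct and follows essentially the same route as the paper's proof: both compute scores from multiplicative perturbations $(1+\epsilon h_W)$, $(1+\epsilon h)$, $(1+\epsilon h_C)$ of the three variationally independent components, with the key point being that the single event-time perturbation $h$ generates both the $sh(t,w)$ term and, via $\frac{d}{d\epsilon}F^\epsilon_{T|W}(c|w) = \int_0^c f_{T|W}(u|w)h(u,w)\,du$ in the Bernoulli factor, the coupled $(1-s)$ augmentation term. Your added checks of square-integrability via Assumption~\ref{cond:support of C} and the mean-zero property are sensible supplements the paper leaves implicit.
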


We aim to estimate the survival function at a given time point of interest $t^*$. Define a functional $\Psi_{t^*}: \calM^I \rightarrow \mathbb{R}$ such that $\Psi_{t^*}(P^I) = \EE_{P_W}[P_{T|W}(T>t^*|W)]$. For simplicity, we write $\psi_{t^*} = \Psi_{t^*}(P^I)$. Note that the parameter $\psi_{t^*}$ is a function of the distribution $P^I$, as it can be identified using the full data $S=1$ alone and estimated by the empirical mean of $\mathbbm{1}(T>t^*)$. Subsequently, a valid corresponding gradient of $\Psi_{t^*}$ is
\begin{align*}
    \tau_{P^I}^{\text{full only}}: x^I \mapsto \frac{s}{\pi} (\mathbbm{1}(t>t^*) - \psi_{t^*}).
\end{align*}

When $t^*$ lies in the support of $C$, under Assumption~\ref{cond:uninformative inspection}, $\psi_{t^*}$ can also be identified using the current status data alone. Specifically, \cite{wolock2025investigating} showed that, by recasting the survival function as a regression problem under a monotonicity constraint, the estimand $\psi_{t^*}$ can be identified by $1- \EE_{P_W}[\EE_{P^I}[\Delta_C \mid W, C=t^*]]$, and they developed an estimator with cubic-root-n convergence accordingly. However, these approaches --using either fully observed event time or current status data alone -- do not make use of all available information when both data types are present. To quantify the potential gains from integrating both sources, we characterize the semiparametric efficiency bound for $\Psi_{t^*}$ relative to the model $\calM^I$ by deriving its canonical gradient. Let $\pi = \Pi(S=1)$ and $\mu(w) = P_{T|W}(T>t^*|W=w)$.

\begin{lemma}\label{lemma: EIF FO and CS}
The parameter $\Psi_{t^*}: \calM^I \rightarrow \mathbb{R}$ is pathwise differentiable and its canonical gradient is
\begin{equation*}\label{eq: EIF FO and CS}
    \tau_{P^I}: x^I \mapsto sh^*(t,w) + \frac{(1-s)(\delta_C - F_{T|W}(c|w))}{F_{T|W}(c|w)(1-F_{T|W}(c|w))}\int_0^c f_{T|W}(u|w)h^*(u,w)du + \mu(w) - \psi_{t^*},
\end{equation*}
where  $h^*$ is the unique solution to
\begin{equation}\label{eq: integral equation}
    \pi h^*(t,w)  + (1-\pi)  \int_t^\infty \frac{H^*(c,w) g(c|w)}{F_{T|W}(c|w)(1-F_{T|W}(c|w))} dc - \gamma(w) - \mathbbm{1}(t>t^*) + \mu(w) = 0,
\end{equation}
for almost every $(t,w)$, with
\begin{align*}
    H^*(c,w) &= \int_0^c f_{T|W}(t|w)h^*(t,w)dt; \\
    \gamma(w) &= (1-\pi)  \int \int_t^\infty \frac{H^*(c,w) g(c|w)}{F_{T|W}(c|w)(1-F_{T|W}(c|w))} f_{T|W}(t|w)dcdt.
\end{align*}
\end{lemma}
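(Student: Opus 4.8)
The plan is to obtain the canonical gradient as the $L_2(P^I)$-projection of the already-available gradient $\tau_{P^I}^{\text{full only}}$ onto the tangent space $\calT^I$ derived in Lemma~\ref{lemma: tangent space FO and CS}. Since $\Psi_{t^*}$ already admits the gradient $\tau_{P^I}^{\text{full only}}$, pathwise differentiability is immediate; the content of the lemma is to identify the projection explicitly. Rather than compute the projection head-on, I would verify that the proposed $\tau_{P^I}$ (i) lies in $\calT^I$ and (ii) is itself a valid gradient, i.e. satisfies $\EE_{P^I}[\tau_{P^I}\, g] = \EE_{P^I}[\tau_{P^I}^{\text{full only}}\, g]$ for every score $g \in \calT^I$; these two facts together identify $\tau_{P^I}$ as the canonical gradient. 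For (i), note that $\tau_{P^I}$ has exactly the tangent-space form with $W$-component $h_W(w) = \mu(w) - \psi_{t^*}$ (which is centered under $P_W$), with $T$-component equal to $h^*$, and with $C$-component equal to zero; it remains only to confirm $h^* \in L_2^0(P_{T|W})$, which I address below.

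The key algebraic identities driving (ii) are the conditional moments of the current-status indicator. Under Assumptions~\ref{cond:exchangeability} and \ref{cond:uninformative inspection}, $\EE_{P^I}[\delta_C - F_{T|W}(c|w) \mid C=c, W=w, S=0] = 0$ and $\EE_{P^I}[(\delta_C - F_{T|W}(c|w))^2 \mid C=c, W=w, S=0] = F_{T|W}(c|w)(1 - F_{T|W}(c|w))$. These make every cross term pairing the factor $(1-s)(\delta_C - F_{T|W})$ against a $W$- or $T$-only function vanish, and collapse the one surviving $(1-s)$--$(1-s)$ product into $(1-\pi)\EE[H^*(C,W)H(C,W)/\{F_{T|W}(1-F_{T|W})\} \mid S=0]$, where $H(c,w)=\int_0^c f_{T|W}(u|w)h(u,w)\,du$. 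Decomposing a generic score $g$ into its $h_W$-, $h$-, and $h_C$-directions and testing against $\tau_{P^I}$ and $\tau_{P^I}^{\text{full only}}$ separately, I expect the $h_W$- and $h_C$-directions to match trivially (the latter because $h_C$ is conditionally centered, annihilating the $\mu(w)-\psi_{t^*}$ term), leaving a single nontrivial requirement in the $h$-direction.

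The $h$-direction requirement is where the integral equation arises. Writing out $\EE_{P^I}[(\tau_{P^I} - \tau_{P^I}^{\text{full only}}) g] = 0$ restricted to the $h$-component and applying Fubini to the double integral --- swapping the order of $\int_0^c(\cdot)\,du$ and the outer $\int g(c|w)\,dc$ over the region $\{u \le c\}$ --- converts the current-status contribution into $\int f_{T|W}(t|w) h(t,w)\big(\int_t^\infty g(c|w) H^*(c,w)/\{F_{T|W}(1-F_{T|W})\}\, dc\big)\, dt$. Since $h(\cdot,w)$ ranges over all conditionally centered functions, the resulting identity forces the bracketed expression $\pi h^*(t,w) + (1-\pi)\int_t^\infty g H^*/\{F_{T|W}(1-F_{T|W})\}\, dc - \mathbbm{1}(t>t^*)$ to be constant in $t$ for each $w$; calling this constant $\gamma(w) - \mu(w)$ gives exactly Equation~\eqref{eq: integral equation}. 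Integrating that equation against $f_{T|W}(t|w)\,dt$ then both pins down $\gamma(w)$ as stated and confirms $\EE[h^*(T,w)\mid W=w]=0$, closing the gap in (i).

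The main obstacle is establishing that Equation~\eqref{eq: integral equation} has a unique solution $h^*$. The plan is to recast it as an operator equation on $L_2^0(P_{T|W})$: define $\mathcal{K}h(t,w) = \int_t^\infty g(c|w)\big(\int_0^c f_{T|W}(u|w)h(u,w)\,du\big)/\{F_{T|W}(c|w)(1-F_{T|W}(c|w))\}\,dc$ and its conditionally-centered version $\widetilde{\mathcal{K}} = (I - \EE[\,\cdot \mid W])\mathcal{K}$, so that the equation reads $(\pi I + (1-\pi)\widetilde{\mathcal{K}})h^* = \mathbbm{1}(\cdot>t^*) - \mu$. The same Fubini identity shows $\mathcal{K}$ (hence $\widetilde{\mathcal{K}}$ on the centered subspace) is self-adjoint and nonnegative, since $\langle \mathcal{K}h, h\rangle_{L_2(P_{T|W})} = \int p_W(w)\int g(c|w) H(c,w)^2/\{F_{T|W}(c|w)(1-F_{T|W}(c|w))\}\,dc\,dw \ge 0$. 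Assumption~\ref{cond:support of C} (compact $\calC$ and $F_{T|W}$ bounded away from $0$ and $1$) guarantees $\mathcal{K}$ is bounded, so $\pi I + (1-\pi)\widetilde{\mathcal{K}}$ is a bounded, self-adjoint operator bounded below by $\pi I$ with $\pi \in (0,1)$; it is therefore coercive and boundedly invertible, yielding existence and uniqueness of $h^*$. Verifying the boundedness bookkeeping for $\mathcal{K}$ under Assumption~\ref{cond:support of C} is the one genuinely delicate piece; the remaining projection algebra is routine given the two conditional-moment identities.
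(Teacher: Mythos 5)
Your proposal is correct and, at its core, runs the same computation as the paper: the decisive step in both is testing the candidate gradient against scores generated by perturbing $P_{T|W}$, using $\EE[\delta_C - F_{T|W}(c|w)\mid c,w,S=0]=0$ and the conditional variance identity to kill the cross terms, applying Fubini over $\{u\le c\}$ to move the current-status contribution inside the $t$-integral, and concluding that the conditionally centered bracketed expression must vanish a.e., which is exactly \eqref{eq: integral equation}. The only framing difference is cosmetic --- you characterize the gradient condition as $\EE[\tau_{P^I}g]=\EE[\tau_{P^I}^{\text{full only}}g]$ for all $g\in\calT^I$, while the paper computes $\tfrac{d}{d\epsilon}\Psi_{t^*}(P^I_\epsilon)$ directly; these are the same requirement since $\tau_{P^I}^{\text{full only}}$ is a gradient. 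Where you genuinely go beyond the paper's proof is existence and uniqueness of $h^*$: the paper's proof simply derives the equation and says ``solving this equation, we obtain $h^*$,'' deferring to a remark in the main text that square-integrability of the Fredholm kernel gives existence and that uniqueness follows from uniqueness of the canonical gradient. Your operator-theoretic argument --- writing the equation as $(\pi I + (1-\pi)\widetilde{\mathcal{K}})h^* = \mathbbm{1}(\cdot>t^*)-\mu$ on $L_2^0(P_{T|W})$, noting $\widetilde{\mathcal{K}}$ is self-adjoint and nonnegative on the centered subspace (via the same Fubini identity, $\langle\mathcal{K}h,h\rangle = \int\int g\,H^2/\{F_{T|W}(1-F_{T|W})\}\ge 0$), and invoking coercivity $\ge\pi I$ with boundedness supplied by Assumption~\ref{cond:support of C} --- is cleaner and gives existence and uniqueness unconditionally, rather than as a side condition. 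That is a worthwhile strengthening; the rest is the paper's argument in different clothing.
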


Unlike $\tau_{P^I}^{\text{full only}}$, which relies solely on fully observed event data from $S=1$, $\tau_{P^I}$ leverages both data sources for a more efficient estimation for $\psi_{t^*}$. Instead of the simple indicator $\mathbbm{1}(t > t^*) - \psi_{t^*}$, $\tau_{P^I}$ incorporates a refined function $h^*(t, w)$, as well as an augmentation term that draws on the current status observations. The function $h^*$ plays a central role, serving as the common component that links the contributions from both data sources. This structure reflects the fundamental exchangeability assumption of $\mathcal{M}^I$, which posits a shared event time distribution for both data types.  

\begin{remark}
To further illustrate the source of efficiency gain when incorporating current status data, consider the special case where the distribution of $C|W=w$ is degenerate at some fixed time $t^\dagger$ for all $w \in \calW$. In this case, a current status observation reduces to $\mathbbm{1}(T \leq t^\dagger)$ whose mean can be estimated from the current status sample. This is analogous to having an `external' summary statistic available when estimating the survival probability from the `internal' data with fully observed $T$. Since $\mathbbm{1}(T \leq t^\dagger)$ and $\mathbbm{1}(T > t^*)$ are generally correlated for any $t^\dagger$, incorporating the external summary can improve estimation, as suggested by \citet{hu2022semiparametric}. Although in practice $C$ will typically vary, this simplified scenario nonetheless sheds light on the reason behind the expected efficiency gain.
\end{remark}

It is straightforward to show that \eqref{eq: integral equation} can be equivalently written as
\begin{equation*}
    h^*(t,w) = \frac{\mathbbm{1}(t>t^*) - \mu(w)}{\pi} + \int h^*(s,w)K(t,s\mid w)dF_{T\mid W}(s\mid w),
\end{equation*}
with the kernel function 
\begin{equation*}
    K(t,s\mid w) = \frac{\pi - 1}{\pi} \int_s^\infty \frac{g(c\mid w)}{F_{T\mid W}(c\mid w)(1 - F_{T\mid W}(c\mid w))}\left(\mathbbm{1}(c\geq t)  - F_{T\mid W}(c\mid w)\right) dc.
\end{equation*}
Hence, equation \eqref{eq: integral equation} is a Fredholm equation of the second kind. A square-integrability condition, $\int\int K(t,s|w)^2dF_{T|W}(t|w)dF_{T|W}(s|w) < \infty$, is sufficient to guarantee the existence of a solution to equation \eqref{eq: integral equation}. Moreover, because $h^*$ is obtained via orthogonal projection onto the tangent space, its existence implies uniqueness by the pathwise differentiability of $\Psi_{t^*}$ and the uniqueness of the canonical gradient. Although the analytical form of the solution may be intractable, equation~\eqref{eq: integral equation} can be solved numerically. We discuss several strategies in details in Section~\ref{sec: one-step estimation}.

\section{Integrating right-censored and current status data}
\label{sec: right censored and current status}
We now proceed to the setting where the time-to-event data is subject to right censoring for $S=1$. Recall that in this setting, the observation unit is $X = (S, W, SY, S\Delta_R, (1-S)C, (1-S)\Delta_C)$ with distribution $P$. In addition to Assumptions~\ref{cond:exchangeability} to \ref{cond:support of C}, we impose the following assumption on the censoring mechanism: 
\begin{assumption}[Conditional independent censoring] $T \perp R \mid W,S=1$. \label{cond:uninformative censoring}
\end{assumption}

\noindent Assumption~\ref{cond:uninformative censoring} is a common censoring at random assumption for right-censored data and is a special case of coarsening at random \citep{tsiatis2006semiparametric}. Under Assumptions~\ref{cond:exchangeability}-\ref{cond:uninformative censoring}, the distribution $P$ is uniquely determined by $\Pi$, $P_W$, $P_{T|W}$, $P_{C|W}$, and $P_{R|W}$ which is the conditional distribution of the censoring time given covariates. That is, $P = \mathbb{Q}(\Pi,P_W, P_{T|W}, P_{C|W}, P_{R|W})$ for some functional $\mathbb{Q}$. We consider a semiparametric model for $P$, denoted as $\calM$. Specifically, $\calM = \{\mathbb{Q}(\Pi,\tilde P_W, \tilde P_{T|W}, \tilde P_{C|W}, \tilde P_{R|W})\}$ with the distribution $\Pi$ known by design and no restrictions imposed on the other four component distributions. Note that the observed data unit $X$ can be regarded as a further coarsening of the intermediate observation unit $X^I$ introduced in Section~\ref{sec: fully observed and current status}. Consequently, the distribution $P$ can be regarded as an observed data distribution induced by $P^I$ and $P_{R|W}$ that satisfies the coarsening at random assumption. 

We define a functional $\Phi_{t^*}: \calM \rightarrow \mathbb{R}$ such that $\Phi_{t^*}(P) = \EE_P[\EE_P[\Delta_R/\Gamma(Y\mid W)\mathbbm{1}(Y>t^*)\mid W, S=1]]$, with $\Gamma(t|w) = P_{R|W}(R\geq t |W=w)$. For simplicity, we write $\phi_{t^*} = \Phi_{t^*}(P)$. Under Assumptions~\ref{cond:exchangeability} and \ref{cond:uninformative censoring}, we have $\phi_{t^*} = \psi_{t^*} = P(T>t^*)$. Letting $\lambda_{T|W}$ and $\Lambda_{T|W}$ denote the hazard function and cumulative hazard function of $T$ given $W$ respectively, we can obtain a valid gradient of $\Phi_{t^*}$ using only the right-censored data:
\begin{align*}
    & \tau_P^{\text{rc only}}:\\
    & \quad x \mapsto \frac{s}{\pi} \Big\{\int_0^\infty \frac{\mathbbm{1}(u>t^*) - \EE_{T|W}[\mathbbm{1}(T>t^*)|T\geq u,w]}{\Gamma(u|w)}\left\{d\I{y \leq u,\delta_R =1} -\I{y \geq u} d\Lambda_{T|W}(u|w)\right\} \\
    & \hspace{3em}  + \mu(w) - \phi_{t^*} \Big\}.
\end{align*}
The augmented inverse probability of censoring weighted estimator can be regarded as a one-step estimator based on the gradient $\tau_P\rconly$ using only the right-censored data. Alternatively, we can leverage the additional current status data. Specifically, the following lemma gives a gradient of the target parameter $\Phi_{t^*}$ that is based on the previously derived gradient in Lemma~\ref{lemma: EIF FO and CS} in the case without right censoring.

\begin{lemma}\label{lemma: IF RC and CS}
The following function is a gradient of $\Phi_{t^*}: \calM \rightarrow \mathbb{R}$ at $P$
\begin{align}\label{eq: EIF RC and CS}
    \tau_P: x &\mapsto \nonumber \\
    & s\int_0^\infty \frac{h^*(u,w) - \EE_{T|W}[h^*(T,W)|T\geq u,w]}{\Gamma(u|w)}\left\{d\I{y \leq u,\delta_R =1} - \I{y \geq u} d\Lambda_{T|W}(u|w)\right\} \nonumber \\
    &\quad + \frac{(1-s)(\delta_C-F_{T|W}(c|w))}{F_{T|W}(c|w)(1-F_{T|W}(c|w))}\int_0^c f_{T|W}(u|w)h^*(u,w)du + \mu(w) - \phi_{t^*},
\end{align}
with $h^*$ defined as in Lemma~\ref{lemma: EIF FO and CS}.
\end{lemma}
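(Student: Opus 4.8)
The plan is to exploit the observation, noted just before the statement, that $P$ is a coarsening at random (CAR) of the ideal fused-data distribution $P^I$: the current-status stratum $S=0$ is observed identically in $X$ and $X^I$, while on the stratum $S=1$ the exact event time $T$ is replaced by the right-censored pair $(Y,\Delta_R)$ through the conditionally independent censoring mechanism $P_{R\mid W}$. Since $\Phi_{t^*}(P)=\phi_{t^*}=\psi_{t^*}=\Psi_{t^*}(P^I)$ depends on $P$ only through the shared components $(P_W,P_{T\mid W})$, I would transport the canonical gradient $\tau_{P^I}$ of Lemma~\ref{lemma: EIF FO and CS} to the coarsened model $\calM$ and verify that $\tau_P$ is a valid gradient of $\Phi_{t^*}$ at $P$ relative to $\calM$.

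The central identity I would establish is $\EE_P[\tau_P(X)\mid X^I]=\tau_{P^I}(X^I)$. On $S=0$ this is immediate, since the augmentation term and the term $\mu(w)-\phi_{t^*}$ are identical functions of the commonly observed $(W,C,\Delta_C)$. On $S=1$ the work is in the martingale term: writing $g(u,w)=h^*(u,w)-\EE_{T\mid W}[h^*(T,W)\mid T\ge u,w]$ and $M_T$ for the event counting-process martingale with increments $d\I{y\le u,\delta_R=1}-\I{y\ge u}\,d\Lambda_{T\mid W}(u\mid w)$, I would compute, under Assumption~\ref{cond:uninformative censoring},
\begin{equation*}
\EE_P\!\left[\int_0^\infty \frac{g(u,W)}{\Gamma(u\mid W)}\,dM_T(u)\,\Big|\,T,W,S=1\right]=h^*(T,W)-\EE_{T\mid W}[h^*(T,W)\mid W],
\end{equation*}
which is the standard inverse-probability-of-censoring representation: inverse weighting by $\Gamma$ exactly offsets the censoring probability because $\EE[\Delta_R\mid T,W]=\Gamma(T\mid W)$ and $\EE[\I{Y\ge u}\mid T,W]=\I{T\ge u}\,\Gamma(u\mid W)$. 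It then remains to show the subtracted term vanishes, i.e. $\EE_{T\mid W}[h^*(T,W)\mid W]=0$. I would obtain this by integrating the defining integral equation~\eqref{eq: integral equation} against $f_{T\mid W}(t\mid w)$: the inner double-integral term reproduces $\gamma(w)$ and cancels it, while $\EE_{T\mid W}[\I{T>t^*}\mid w]=\mu(w)$ cancels $\mu(w)$, leaving $\pi\,\EE_{T\mid W}[h^*(T,W)\mid w]=0$. With this, $\EE_P[\tau_P\mid X^I]=\tau_{P^I}$ holds on both strata.

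To conclude that $\tau_P$ is a gradient, I would verify the pathwise-derivative relation $\tfrac{d}{d\epsilon}\Phi_{t^*}(P_\epsilon)\big|_{\epsilon=0}=\EE_P[\tau_P(X)s(X)]$ against each piece of the tangent space, using the CAR decomposition $\calT(P)=\calT_W\oplus\calT_{T\mid W}\oplus\calT_{C\mid W}\oplus\calT_{R\mid W}$. For scores $s\in\calT_{R\mid W}$ the left side is zero because $\Phi_{t^*}$ does not depend on the censoring mechanism, and the right side is zero because such scores are stochastic integrals against the censoring martingale $M_R$, which is orthogonal to the event martingale $M_T$ driving the $S=1$ term of $\tau_P$ (the $S=0$ and $W$-only terms are orthogonal to $S=1$-supported scores as well). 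For scores of the shared components $(P_W,P_{T\mid W},P_{C\mid W})$, I would use the CAR fact that the observed-data score is the conditional expectation of the corresponding full-data score, $s(X)=\EE_{P^I}[s^I(X^I)\mid X]$, together with the identity $\EE_P[\tau_P\mid X^I]=\tau_{P^I}$, to obtain by iterated expectation $\EE_P[\tau_P s]=\EE_{P^I}[\tau_{P^I}s^I]$; since $\tau_{P^I}$ is a gradient of $\Psi_{t^*}$ in $\calM^I$ (Lemma~\ref{lemma: EIF FO and CS}) and a submodel of $\calM$ perturbing only these components induces a submodel of $\calM^I$ along which $\Phi_{t^*}(P_\epsilon)=\Psi_{t^*}(P^I_\epsilon)$, this equals $\tfrac{d}{d\epsilon}\Psi_{t^*}(P^I_\epsilon)=\tfrac{d}{d\epsilon}\Phi_{t^*}(P_\epsilon)$, as required. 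Mean-zero and finite variance of $\tau_P$ follow from Assumption~\ref{cond:support of C} and $\EE_P[\tau_P]=\EE_{P^I}[\tau_{P^I}]=0$.

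I expect the main obstacle to be the $S=1$ martingale computation together with its orthogonality to $\calT_{R\mid W}$: the inverse-censoring reweighting and the martingale representation must be handled carefully under Assumption~\ref{cond:uninformative censoring}, and one must confirm $\langle M_T,M_R\rangle=0$ so that the censoring nuisance genuinely drops out. A secondary delicate point is making the CAR score-correspondence $s(X)=\EE_{P^I}[s^I(X^I)\mid X]$ precise for the shared components, which requires specifying the coarsening tangent space. The two simplifications that reduce everything to the already-proven full-data Lemma~\ref{lemma: EIF FO and CS} are the conditional-mean identity $\EE_P[\tau_P\mid X^I]=\tau_{P^I}$ and the vanishing mean $\EE_{T\mid W}[h^*(T,W)\mid W]=0$.
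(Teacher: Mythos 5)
Your proposal is correct and follows essentially the same route as the paper: recognize that $X$ is a coarsening-at-random of $X^I$ (with $S=0$ observed exactly and $S=1$ right-censored under Assumption~\ref{cond:uninformative censoring}) and transport the full-data gradient $\tau_{P^I}$ of Lemma~\ref{lemma: EIF FO and CS} to the observed-data model. The only difference is that the paper disposes of the verification by citing the standard CAR mapping results (Example 1.12 of \citet{van2003unified} and \citet{van2007note}), whereas you prove that mapping directly via the conditional-mean identity $\EE_P[\tau_P\mid X^I]=\tau_{P^I}$, the fact that $\EE_{T\mid W}[h^*(T,W)\mid W]=0$ (which the paper also verifies, in Appendix~\ref{app: solve equation}), and orthogonality to the censoring scores.
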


The gradient $\tau_P$ resembles both $\tau_{P^I}$ and $\tau_P^{\text{rc only}}$ but in different ways. $\tau_P$ and $\tau_{P^I}$ share the same component that corresponds to $s=0$, which can be viewed as an augmentation term that extracts extra information from the current status data. They differ in the component that corresponds to $s=1$, due to the presence of right censoring. In the meantime, this term is similar to the one in $\tau_P^{\text{rc only}}$, except that the indicator function $\mathbbm{1}(u>t^*)$ is now replaced by the function $h^*(u,w)$. While it would thus seem natural that $\tau_P$, which leverages both data sources, must have lower variance than $\tau_P^{\text{rc only}}$, this turns out to not necessarily be the case.  This disappointing phenomenon can arise because the gradient $\tau_P$ in Lemma~\ref{lemma: IF RC and CS} is not necessarily the canonical gradient of $\Phi_{t^*}$ relative to the model $\mathcal{M}$. The exchangeability assumption of the event time distribution on distributions in $\mathcal{M}$ imposes additional constraints on functions in the corresponding tangent space. In particular, scores in the tangent space must arise from the same perturbation to the event time distribution in both the right censored and the current status data. As a result, $\tau_P$ is generally not in the tangent space, except when $\Gamma(\cdot|w)$ is the constant function with value 1 for almost every $w$, that is, when there is in fact no censoring for $S=1$.

To obtain the canonical gradient of $\Phi_{t^*}$, we project the gradient $\tau_P\rconly$ onto the tangent space with respect to $\mathcal{M}$ at $P$ \citep{van2003unified}, which we characterize in the following lemma.

\begin{lemma}\label{lemma: EIF RC and CS}
The canonical gradient of $\Phi_{t^*}: \calM \rightarrow \mathbb{R}$ at $P$ is
\begin{equation}
     \tau_P\eff: x \mapsto s \left\{\delta_R \eta^*(y,w) - \int_0^y \eta^*(t,w)\lambda_{T|W}(t|w) dt\right\} \nonumber + (1-s)\left\{\frac{\delta_C - F_{T|W}(c|w)}{F_{T|W}(c|w)}\right\}\Theta^*(c,w) + \mu(w)-\phi_{t^*},
\end{equation}
where $\eta^*$ is the solution to the following integral equation for almost every $(t,w)$
\begin{equation}\label{eq: integral equation with rc}
    \pi\eta^*(t,w)\Gamma(t|w)S_{T|W}(t|w) + S_{T|W}(t^*|w)\I{t \leq t^*} + (1-\pi)\int_t^\infty \frac{S_{T|W}(c|w)\Theta^*(c,w)}{F_{T|W}(c|w)} g(c|w) dc = 0
\end{equation}
with
\begin{equation}\label{eq: define Theta}
    \Theta^*(c,w) = \int_0^c \eta^*(u,w)\lambda_{T|W}(u|w) du. \nonumber 
\end{equation}
\end{lemma}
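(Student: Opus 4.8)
The plan is to obtain $\tau_P\eff$ as the orthogonal projection of the valid gradient $\tau_P\rconly$ onto the tangent space $\calT$ of $\calM$ at $P$, exactly as prescribed by the coarsened-data projection recipe. The one genuinely new feature, relative to the right-censored-only problem, is that the exchangeability in Assumption~\ref{cond:exchangeability} forces a single event-time perturbation to act simultaneously on both data sources; consequently the first and most delicate step is to characterize the event-time block of $\calT$.

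Concretely, I would write $\calT = \calT_W \oplus \calT_T \oplus \calT_C \oplus \calT_R$ as the orthogonal sum of the score spaces obtained by perturbing $P_W$, $P_{T|W}$, $P_{C|W}$, and $P_{R|W}$. The block $\calT_W = \{h_W : \EE[h_W(W)]=0\}$ and the nuisance blocks $\calT_C$, $\calT_R$ (supported on $S=0$ and $S=1$ respectively, with the usual conditional-mean-zero structure) are standard. The heart of the matter is $\calT_T$: parametrizing a perturbation of $P_{T|W}$ through its conditional hazard, $\lambda_{T|W}(t|w)\{1+\epsilon a(t,w)\}$, a single direction $a$ generates the coupled observed-data score
\[
\begin{aligned}
D_a(x) = {}& s\Big\{\delta_R\, a(y,w) - \int_0^y a(u,w)\lambda_{T|W}(u|w)\,du\Big\} \\
&+ (1-s)\,\frac{\delta_C - F_{T|W}(c|w)}{F_{T|W}(c|w)}\int_0^c a(u,w)\lambda_{T|W}(u|w)\,du,
\end{aligned}
\]
in which the $S=1$ summand is the counting-process martingale integral and the $S=0$ summand follows by differentiating the current-status log-likelihood together with $\tfrac{d}{d\epsilon}F_{T|W,\epsilon}(c|w) = S_{T|W}(c|w)\int_0^c a\,\lambda_{T|W}$. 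Thus $\calT_T = \overline{\textnormal{span}}\{D_a\}$, and setting $\Theta_a(c,w) = \int_0^c a(u,w)\lambda_{T|W}(u|w)\,du$ already matches the shape of the two $\eta^*$-dependent terms in the claimed $\tau_P\eff$.

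Since every gradient of $\Phi_{t^*}$ induces the same bounded linear functional on $\calT$, projecting $\tau_P\rconly$ onto $\calT_T$ is equivalent to finding $\eta^*$ such that $\langle \tau_P\eff, D_a\rangle_P = \tfrac{d}{d\epsilon}\Phi_{t^*}$ holds for every $a$, with $\tau_P\eff = \mu(w)-\phi_{t^*} + D_{\eta^*}$ and the $\calT_W$-part pinned to $\mu(w)-\phi_{t^*}$. The right-hand side I would compute from $\tfrac{d}{d\epsilon}S_{T|W,\epsilon}(t^*|w) = -S_{T|W}(t^*|w)\int_0^{t^*} a\lambda_{T|W}$, giving $-\int_\calW S_{T|W}(t^*|w)\int_0^{t^*} a(u,w)\lambda_{T|W}(u|w)\,du\,dP_W(w)$. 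The left-hand side reduces to $\langle D_{\eta^*}, D_a\rangle_P$ (as $\mu-\phi_{t^*}\perp D_a$), which I would evaluate using the predictable variation of the right-censored martingale, contributing $\pi\int_\calW\int_0^\infty \eta^* a\, S_{T|W}\Gamma\lambda_{T|W}$, and the Bernoulli variance $\EE[(\delta_C-F_{T|W})^2\mid W,C,S=0]=F_{T|W}(1-F_{T|W})$, contributing $(1-\pi)\int_\calW\int_\calC \tfrac{S_{T|W}}{F_{T|W}}\Theta^*\Theta_a\, g$. A Fubini interchange rewrites the current-status contribution as an integral of $a(u,w)\lambda_{T|W}(u|w)$ against $\int_u^\infty \tfrac{S_{T|W}(c|w)\Theta^*(c,w)}{F_{T|W}(c|w)}g(c|w)\,dc$; both sides are then integrals of $a\lambda_{T|W}$ against a common kernel, and matching integrands pointwise gives exactly \eqref{eq: integral equation with rc}.

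I expect the main obstacle to be the tangent-space characterization itself --- rigorously showing that exchangeability couples the two sources precisely through $D_a$, and that $\calT_C$ and $\calT_R$ are genuine orthogonal complements so that the projection of $\tau_P\rconly$ carries no inspection- or censoring-time component; I would discharge the latter by checking $\tau_P\eff \perp \calT_C\oplus\calT_R$ directly from the conditional-mean-zero properties of $\delta_C - F_{T|W}$ and of the martingale increment. A second, more bookkeeping-heavy obstacle is the Fubini step, where the support $\calC$ and the limits of integration must be tracked carefully so that integrands may be equated almost everywhere, together with the appeal to the Fredholm-second-kind structure (as for \eqref{eq: integral equation}) needed to guarantee that $\eta^*$ exists; its uniqueness then follows from pathwise differentiability and the uniqueness of the canonical gradient.
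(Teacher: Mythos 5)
Your proposal is correct and follows essentially the same route as the paper: the same decomposition of $\calT$ with the hazard perturbation $a$ generating the coupled score $D_a$ across both sources, the same orthogonality argument disposing of the $\calT_C$ and $\calT_R$ components, and the same predictable-variation plus Bernoulli-variance computation followed by a Fubini interchange to arrive at \eqref{eq: integral equation with rc}. The only (immaterial) difference is that you pin down $\eta^*$ directly from the Riesz identity $\langle \tau_P\eff, D_a\rangle_P = \tfrac{d}{d\epsilon}\Phi_{t^*}$, whereas the paper requires $\langle \tau_P\rconly - D_{\eta^*}, D_a\rangle_P = 0$; since $\tau_P\rconly$ is a valid gradient these conditions coincide and produce the identical integral equation.
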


Compare to the results in Lemma~\ref{lemma: EIF FO and CS}, the canonical gradient $\tau_P^{\text{eff}}$ relative to $\mathcal{M}$ and its associated integral equation are now presented under a perturbation to the conditional hazard $\lambda_{T|W}$, denoted by $\eta^*$. This perturbation framework allows us to decompose the tangent space into orthogonal subspaces. Notably, unlike in Lemma~\ref{lemma: IF RC and CS}, where the distribution of the censoring time $\Gamma$ only acts on the right-censored data, $\Gamma$ now enters directly into the integral equation \eqref{eq: integral equation with rc}. Thus, the probability $\Gamma(t|w)$, which can be regarded as the probability of no missingness in the right-censored data, also affects the relative weighting of the terms related to the right-censored and the current status data at given value of $(t,w)$ in the integral equation. Moreover, as we have noted previously, $\Gamma(t|w) \equiv 1$ corresponds to the case without censoring for $S=1$. Therefore, setting $\Gamma(t|w)\equiv 1$ in Lemma~\ref{lemma: EIF RC and CS} recovers results presented in Lemma~\ref{lemma: EIF FO and CS}.

\section{One-step estimation and theoretical properties}\label{sec: one-step estimation}
We now use the derived gradients to construct estimators and develop the corresponding inference procedure. Let $\{X_i = (S_i, W_i,S_iY_i,S_i\Delta_{R,i},(1-S_i)\Delta_{C,i},(1-S_i)C_i), i=1,\ldots,n\}$ be an i.i.d. sample drawn from $P$. Let $n_1 = \sum_i S_i$ and $n_0 = n-n_1$. We construct an estimator using the one-step estimation strategy \citep[see, for example, ][]{bickel1993efficient}. Given an initial estimate $\widehat{P}$ of $P$, a one-step estimator takes the form of $\widetilde{\phi}_{t^*} = \Phi_{t^*}(\widehat{P}) + \frac{1}{n}\sum_{i=1}^n\tilde\tau_{\widehat{P}}(X_i)$, where $\Phi_{t^*}(\widehat{P})$ is the plug-in estimator and $\tilde\tau_{\widehat{P}}$ is a gradient of $\Phi_{t^*}$ evaluated at $\widehat P$. Under appropriate conditions, this estimator will be consistent and asymptotically normal with asymptotic variance given by the variance of $\tilde\tau_P$, thus facilitating the construction of confidence intervals.

Notably, the estimate $\widehat{P}$ must include
estimates of all components of $P$ necessary for the evaluation of $\tau$. Under our setting, we need estimates for $F_{T|W}$, $\lambda_{T|W}$, $\Lambda_{T|W}$, $\Gamma$, and $g$ to evaluate $\tau$, and we denote these estimates as $\widehat F_{T|W}$, $\widehat\lambda_{T|W}$, $\widehat\Lambda_{T|W}$,  $\widehat \Gamma$, and $\widehat g$, respectively. Furthermore, we estimate the marginal distribution of the covariate $W$ with its corresponding empirical distribution. Given these nuisance estimates, we can construct the following one-step estimator based on $\tau_P$:
\begin{multline*}
    \widehat{\phi}_{t^*} = \frac{1}{n}\sum_{i=1}^n \Bigg[ \widehat\mu(W_i) + \frac{(1-S_i)(\Delta_{C,i}-\widehat F_{T|W}(C_i|W_i))}{\widehat F_{T|W}(C_i|W_i)(1-\widehat F_{T|W}(C_i|W_i))}\int_0^{C_i} \widehat h^*(u,W_i)d\widehat F_{T|W}(u|W_i) + \\
    S_i \int_0^\infty \frac{\widehat h^*(u,W_i) - \widehat\EE_{T|W}[\widehat h^*(T,W)|T\geq u,W_i]}{\widehat\Gamma(u|W_i)}\left\{d\I{Y_i \leq u,\Delta_{R,i} =1} -\I{Y_i \geq u} d\widehat\Lambda_{T|W}(u|W_i)\right\}\Bigg],
\end{multline*}
where $\widehat\mu(w) = 1-\widehat F_{T|W}(t^*|w)$ and $\widehat{\Lambda}_{T\mid W} = -\log (1-\widehat F_{T|W})$. The estimate $\widehat\EE_{T|W}[\widehat h^*(T,W)|T\geq u,W]$ can be obtained by noting that, for a generic $\widetilde h$, we have $\EE_{T|W}[\widetilde h(T,W)|T\geq u,W] = \{P_{T|W}(T \geq u |W)\}^{-1} \EE_{T|W}[\widetilde h(T,W)I\{T\geq u\}|W].$ In addition, $\widehat{h}^*$ is obtained as the solution to the empirical version of the multi-dimensional Fredholm equation of the second kind in \eqref{eq: integral equation}, where all unknown nuisance functions in $P$ are replaced by their corresponding estimates under $\widehat{P}$. Similarly, we can also construct a one-step estimator based on $\tau_P\eff$:
\begin{multline*}
    \widehat\phi_{t^*}\eff = \frac{1}{n}\sum_{i=1}^n \Bigg[S_i \left\{\Delta_{R,i} \widehat\eta^*(Y_i,W_i) - \int_0^{Y_i} \widehat\eta^*(t,W_i)\widehat\lambda_{T|W}(t|W_i) dt\right\}
    + \\
    (1-S_i)\left\{\frac{\Delta_{C,i} - \widehat F_{T|W}(C_i|W_i)}{\widehat F_{T|W}(C_i|W_i)}\right\}\widehat\Theta^*(C_i,W_i) + \widehat \mu(W_i)\Bigg],
\end{multline*}
where $\widehat\eta^*$ and $\widehat\Theta^*$ are again obtained by solving the integral equation \eqref{eq: integral equation with rc} with all unknown nuisance functions replaced by their corresponding estimates.

Although the relevant Fredholm equations do not have closed-form solutions in general, various numerical approaches are available for solving such integral equations, including kernel approximation \citep{atkinson1976survey,kagiwada1978imbedding}, projection methods \citep{golberg1979solution}, and quadrature methods \citep{prenter1981numerical}. Specifically, we note that \eqref{eq: integral equation} (or \eqref{eq: integral equation with rc}) must be satisfied for (almost) all values of $w$. Thus, for the purpose of implementing the one-step estimator, we can solve one equation for each observed covariate value $w_i$, using these aforementioned approaches. In Appendix~\ref{app: solve equation}, we provide one practical approach based on a discrete approximation, which involves solving a system of linear equations for each covariate value. Alternatively, representing $h^*(t, w_i)$ or $\widehat h^*(t, w_i)$ as a linear combination of basis functions in $t$ reduces the problem to solving for the coefficients of the chosen bases at each covariate value. This allows for smooth evaluation of $h^*(t, w_i)$ at arbitrary $t$ and reduces the computation cost. To further expedite computation, we can consider approximating the solution $h^*$ or $\widehat h^*$ using a basis expansion with, for example, tensor-product bases in $t$ and $w$. Subsequently, we can solve for the coefficients of these basis functions by minimizing the discrepancy between the left-hand and right-hand sides of \eqref{eq: integral equation} over different values of time and covariate in one optimization problem. Similar techniques can be used to solve for $\eta^*$ or $\widehat\eta^*$.

In the following, we will formally establish the asymptotic properties of our proposed estimators under suitable requirements on the nuisance function estimates. First, the following theorem characterizes the consistency of $\widehat{\phi}_{t^*}$. 

\begin{theorem}[Double robustness of $\widehat{\phi}_{t^*}$] \label{thm:double robust phi}
Let $\lambda_{R\mid W}$ denote the conditional hazard function of the censoring time $R$ given $W$, the estimated support of inspection time $\widehat{\mathcal{C}} \in [c_l,c_u]$, and $\|\cdot \|_{L_2(P_W)}$ denote the $L_2(P_W)$-norm.
Under Assumptions~\ref{cond:exchangeability} to \ref{cond:uninformative censoring}, $\widehat{\phi}_{t^*}\xrightarrow[]{p} \phi_{t^*}$ if either (a) $\mathrm{sup}_{c\in [c_l,c_u]}\|\widehat g(c\mid \cdot) - g(c\mid \cdot)\|_{L_2(P_W)} \xrightarrow[]{p} 0$ and $\mathrm{sup}_{c\in [c_l,c_u]}\|\widehat \lambda_{R\mid W}(c\mid \cdot) - \lambda_{R\mid W}(c\mid \cdot) \|_{L_2(P_W)} \xrightarrow[]{p} 0$ or (b) $\mathrm{sup}_{u\leq \max(c_u,t^*)}\|\widehat{F}_{T\mid W}(u\mid \cdot) - F_{T\mid W}(u\mid \cdot) \|_{L_2(P_W)} \xrightarrow[]{p} 0$ and $\mathrm{sup}_{u\leq \max(c_u,t^*)}\|\widehat{\EE}_{T\mid W}[\widehat{h}^*(T,W)\mid T \geq u, W=\cdot] - \EE_{T\mid W}[\widehat{h}^*(T,W)\mid T \geq u, W=\cdot]\|_{L_2(P_W)} \xrightarrow[]{p} 0$.
\end{theorem}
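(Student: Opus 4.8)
The plan is to treat $\widehat\phi_{t^*}$ as an empirical mean of a fixed estimating function evaluated at estimated nuisances, and to split a law-of-large-numbers term from a negligible remainder. Write $\widehat\phi_{t^*}=\frac{1}{n}\sum_i g_{\widehat P}(X_i)$, where $g_{\widehat P}$ is the summand in the definition of $\widehat\phi_{t^*}$, a functional of the nuisances $(\widehat F_{T|W},\widehat g,\widehat\Gamma,\widehat\lambda_{T|W},\widehat\EE_{T|W}[\widehat h^*\mid T\ge\cdot],\widehat h^*)$. I would first assume, as is implicit in the statement, that each nuisance estimator converges in probability to a fixed limit, with the components named in (a) or (b) converging to the truth; invoking continuity of the Fredholm solution map of \eqref{eq: integral equation} under the positivity of Assumption~\ref{cond:support of C}, $\widehat h^*$ then converges to the solution $\bar h^*$ of \eqref{eq: integral equation} at the limiting nuisances. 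Decomposing $\widehat\phi_{t^*}=\frac{1}{n}\sum_i g_{\bar P}(X_i)+\frac{1}{n}\sum_i\{g_{\widehat P}(X_i)-g_{\bar P}(X_i)\}$, the first term converges to $\EE_P[g_{\bar P}(X)]$ by the law of large numbers, while the second is $o_P(1)$ by the stated $L_2(P_W)$ convergences together with the boundedness of the weights $1/\{F_{T|W}(1-F_{T|W})\}$ and $1/\Gamma$ guaranteed by Assumption~\ref{cond:support of C} (controlled either by a Donsker condition or by cross-fitting). It therefore suffices to prove the population unbiasedness $\EE_P[g_{\bar P}(X)]=\phi_{t^*}$ under (a) or under (b), using Assumption~\ref{cond:exchangeability} to identify $P_{W\mid S=0}=P_{W\mid S=1}=P_W$ and the common conditional law $P_{T|W}$ across sources.

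Scenario (b) is the simpler one. When $\bar F_{T|W}=F_{T|W}$, the current-status augmentation has conditional mean zero because $\EE_P[\Delta_C\mid W=w,C=c,S=0]=F_{T|W}(c|w)=\bar F_{T|W}(c|w)$ by Assumption~\ref{cond:uninformative inspection}, so the residual $\Delta_C-\bar F_{T|W}$ averages out. The right-censored augmentation also has conditional mean zero: since $\bar\Lambda_{T|W}=-\log(1-\bar F_{T|W})$ then equals the true cumulative hazard, $\I{Y\le u,\Delta_R=1}-\int_0^u\I{Y\ge s}\,d\Lambda_{T|W}(s|w)$ is a genuine martingale in $u$ under Assumption~\ref{cond:uninformative censoring}, and the integrand is a deterministic (hence predictable) function of $u$ given $w$, so the stochastic integral integrates to zero irrespective of $\bar\Gamma$ and $\bar m$. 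The plug-in term contributes $\EE_{P_W}[1-F_{T|W}(t^*|W)]=\phi_{t^*}$, giving unbiasedness; the assumed consistency of $\widehat\EE_{T|W}[\widehat h^*\mid T\ge\cdot]$ is used only to control the remainder from the first paragraph.

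Scenario (a) is the crux, because when $\bar F_{T|W}\ne F_{T|W}$ the plug-in, the current-status augmentation, and the right-censored augmentation are each nonzero and must cancel exactly, with the cancellation mediated by \eqref{eq: integral equation}. The plan has three steps. First, with the censoring nuisances correct ($\bar\Gamma=\Gamma$, from $\widehat\lambda_{R|W}\to\lambda_{R|W}$), I would evaluate the conditional expectation of the AIPCW term by the inverse-censoring-weighting identity and then collapse it using the elementary ODE $(\bar h^*-\bar m)\bar\lambda_{T|W}=-\bar m'$ (immediate from the definition of $\bar m$) together with one integration by parts, whose boundary terms vanish by Assumption~\ref{cond:support of C} and square-integrability of $h^*$; this reduces the right-censored contribution to $\pi\{\EE_{F_{T|W}}[\bar h^*\mid w]-\EE_{\bar F_{T|W}}[\bar h^*\mid w]\}$. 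Second, I would rewrite the current-status augmentation, via Fubini, as an integral of $\bar B(t,w):=(1-\pi)\int_t^\infty \bar H(c,w)\bar g(c|w)/\{\bar F_{T|W}(c|w)(1-\bar F_{T|W}(c|w))\}\,dc$ against $d(F_{T|W}-\bar F_{T|W})$, where $\bar H(c,w)=\int_0^c\bar h^*(u,w)\,d\bar F_{T|W}(u|w)$. Third, substituting \eqref{eq: integral equation} evaluated at the limiting nuisances, I would show both that $\EE_{\bar F_{T|W}}[\bar h^*\mid w]=0$ and that the plug-in plus right-censored contributions equal $\bar\gamma(w)-\int\bar B\,dF_{T|W}$, which is exactly the negative of the current-status contribution; the three pieces then sum to zero for almost every $w$, whence $\EE_P[g_{\bar P}(X)]=\phi_{t^*}$.

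The main obstacle is precisely this exact cancellation in scenario (a). Unlike standard doubly robust arguments, where the bias factorizes transparently into a product of nuisance errors, here the full-data function $\bar h^*$ is only implicitly defined as the solution of the Fredholm equation \eqref{eq: integral equation}, so the bias cannot be read off directly; the work lies in the integration-by-parts and Fubini manipulations that recast all three contributions into a common form against which \eqref{eq: integral equation} can be invoked to force cancellation. A subsidiary technical point, already needed in the first paragraph, is establishing continuity of the solution map of \eqref{eq: integral equation} in the nuisances so that $\widehat h^*\to\bar h^*$; this rests on the second-kind Fredholm structure and the square-integrability of the kernel noted after \eqref{eq: integral equation}.
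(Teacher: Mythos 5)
Your proposal is correct in its essentials, but it takes a genuinely different route from the paper. The paper never passes to a limiting $\bar P$: it computes the remainder $\Rem(\widehat P,P)=P\tau_{\widehat P}+\Phi(\widehat P)-\Phi(P)$ exactly for an arbitrary estimate $\widehat P$ and shows it is a sum of two \emph{products} of nuisance errors --- the current-status piece equals
\begin{equation*}
(1-\pi)\int \frac{\widehat H(c,w)\{\widehat g(c|w)-g(c|w)\}\{\widehat F_{T|W}(c|w)-F_{T|W}(c|w)\}}{\widehat F_{T|W}(c|w)\,(1-\widehat F_{T|W}(c|w))}\,p_W(w)\,dc\,dw,
\end{equation*}
and the right-censored piece is an integral of $\widehat\EE_{T\mid W}[\widehat h^*(T,W)\mid T\geq u,W]-\EE_{T\mid W}[\widehat h^*(T,W)\mid T\geq u,W]$ against $d\Lambda_{R\mid W}-d\widehat\Lambda_{R\mid W}$ --- so that either hypothesis (a) or (b) annihilates both products at once. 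The crucial ingredient there is that $\widehat h^*$ solves the Fredholm equation \emph{under $\widehat P$}; your scenario-(a) cancellation (integrating the equation at the limiting nuisances against the true $f_{T\mid W}$) is precisely the special case of the paper's term-B identity when $\bar g=g$, and I verified your reduction of the AIPCW term to $\pi\{\EE_{F_{T\mid W}}[\bar h^*\mid w]-\EE_{\bar F_{T\mid W}}[\bar h^*\mid w]\}$ via $(\bar h^*-\bar m)\bar\lambda_{T\mid W}=-\bar m'$ does go through when $\bar\Gamma=\Gamma$. What your route buys is elementarity: the direct martingale argument in scenario (b) and the ODE collapse in scenario (a) avoid the paper's projection onto the coarsening-at-random tangent space. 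What it costs is twofold. First, you need the extra hypotheses that the misspecified nuisances converge to fixed limits and that the Fredholm solution map is continuous so that $\widehat h^*\to\bar h^*$; the paper's product-form remainder needs neither, since $\widehat h^*$ enters only through the equation it solves under $\widehat P$ (the paper makes this point explicitly after the theorem). Second, the product form is reused verbatim to obtain the $o_p(n^{-1/2})$ rate conditions of Theorem~\ref{thm:efficiency}, which your scenario-by-scenario cancellation does not yield. On the empirical-process term you flag, the paper is no more rigorous than you are: its proof establishes only the population identity under exactly correct nuisances, so neither argument fully supplies the uniform law of large numbers needed to upgrade unbiasedness to $\widehat\phi_{t^*}\xrightarrow{p}\phi_{t^*}$.
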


Although the estimator $\widehat{\phi}_{t^*}$ generally has a larger variance compared to $\widehat{\phi}^{\text{eff}}_{t^*}$, it offers the appealing advantage of double robustness. Indeed, Theorem~\ref{thm:double robust phi} suggests that the estimator $\widehat\phi_{t^*}$ is doubly robust in the following sense: if either the conditional distribution of inspection time $C$ given $W$ and that of censoring time $R$ given $W$ are consistently estimated, or the conditional distribution of $T$ given $W$ is consistently estimated, $\widehat\phi_{t^*}$ is consistent. Notably, the conditional expectation $\widehat{\EE}[\widehat{h}^*(T,W)\mid T \geq u, W]$ can be computed under the estimate $\widehat{F}_{T\mid W}$. As a result, condition (b) simply reduces to requiring  $\widehat{F}_{T\mid W}$ being consistent. Meanwhile, as long as $\widehat{h}^*$ solves \eqref{eq: integral equation} under $\widehat{P}$, the difference between $\widehat h^*$ and $h^*$ does not directly appear in the estimation error. Theorem~\ref{thm:double robust phi} implies that, if the relevant nuisances are estimated via parametric or semiparametric models such as Weibull regression or Cox proportional hazards model, the resulting estimator remains consistent if either the models for $C\mid W$ and $R\mid W$ are correctly specified or the model for $T\mid W$ is correctly specified but not necessarily both sets. 

The next theorem shows that when all nuisance functions are consistently estimated with sufficiently fast convergence rate, the one-step estimator is asymptotically linear with influence function $\tau_P$ and thus asymptotically normal.

\begin{theorem}[Asymptotic behavior of $\widehat{\phi}_{t^*}$] \label{thm:efficiency}
Under Assumptions~\ref{cond:exchangeability} to \ref{cond:uninformative censoring}, if $\widehat{g}$, $\widehat{F}_{T\mid W}$, and $\widehat{\lambda}_{R\mid W}$ all belong to a fixed $P$-Donsker class of functions with probability tending to one,  $\Gamma(\max(c_u,t^*) \mid W)>\epsilon$ with probability one for some constant $\epsilon>0$, and 
    \begin{enumerate}
        \item $\mathrm{sup}_{c\in [c_l,c_u]}\|\widehat g(c\mid \cdot) - g(c\mid \cdot)\|_{L_2(P_W)}\lVert \widehat{F}_{T\mid W}(c\mid \cdot)- F_{T\mid W}(c\mid \cdot)\rVert_{L_2(P_W)}  = o_p(n^{-1/2})$; and
        \item $\mathrm{sup}_{u \leq \max(c_u,t^*)}\lVert \widehat \EE_{T\mid W}[\widehat{h}(T,W)\mid T \geq u,W=\cdot]- \EE_{T\mid W}[\widehat{h}(T,W)\mid T \geq u,W=\cdot]\rVert_{L_2(P_W)}  \lVert \widehat{\lambda}_{R\mid W}(u\mid \cdot) - \lambda_{R\mid W}(u\mid \cdot)  \rVert_{L_2(P_W)}  = o_p(n^{-1/2})$.
    \end{enumerate}
     Then, $$\sqrt{n}(\widehat{\phi}_{t^*} - {\phi}_{t^*}) \rightarrow N(0,\text{var}(\tau_{P})).$$
\end{theorem}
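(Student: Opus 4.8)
The plan is to analyze $\widehat{\phi}_{t^*}$ as a one-step estimator and split its error into a leading linear term, an empirical-process term, and a second-order remainder, showing the first is asymptotically Gaussian while the other two are $o_p(n^{-1/2})$. Writing $P_n$ for the empirical measure and noting that the summand defining $\widehat{\phi}_{t^*}$ equals $\tau_{\widehat{P}}(x) + \Phi_{t^*}(\widehat{P})$ (the augmentation terms are mean-zero scores under $\widehat{P}$ by construction, and since $P_W$ is estimated by the empirical distribution we have $\Phi_{t^*}(\widehat{P}) = P_n\widehat{\mu}$ and $\EE_{\widehat{P}}[\tau_{\widehat{P}}]=0$), I would first establish the exact identity
\begin{align*}
\widehat{\phi}_{t^*} - \phi_{t^*} = (P_n - P)\tau_P + (P_n - P)(\tau_{\widehat{P}} - \tau_P) + R(\widehat{P},P),
\end{align*}
where $R(\widehat{P},P) := \Phi_{t^*}(\widehat{P}) - \phi_{t^*} + P\tau_{\widehat{P}}$ is the second-order remainder of the von Mises expansion.

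For the leading term, $\sqrt{n}(P_n-P)\tau_P \to N(0,\mathrm{var}(\tau_P))$ by the central limit theorem, once $\mathrm{var}(\tau_P)<\infty$; this follows from Assumption~\ref{cond:support of C} together with $\Gamma(\max(c_u,t^*)\mid W)>\epsilon$ almost surely, which bound the inverse weights $1/\Gamma$ and $1/\{F_{T\mid W}(1-F_{T\mid W})\}$ appearing in $\tau_P$, so that $\tau_P \in L_2(P)$. For the empirical-process term, I would show $\sqrt{n}(P_n-P)(\tau_{\widehat{P}}-\tau_P)=o_p(1)$ by invoking the Donsker hypothesis: since $\widehat{g},\widehat{F}_{T\mid W},\widehat{\lambda}_{R\mid W}$ lie in fixed $P$-Donsker classes with probability tending to one, and $\tau_{\widehat{P}}$ is obtained from these nuisances (and from $\widehat{h}^*$, the solution of the empirical version of \eqref{eq: integral equation}) through Lipschitz maps, finite products, and bounded integral operators that preserve the Donsker property, $\tau_{\widehat{P}}$ also lies in a fixed Donsker class with high probability. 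Combined with $\|\tau_{\widehat{P}}-\tau_P\|_{L_2(P)}\xrightarrow[]{p}0$, which follows from the consistency built into the stated rate conditions, the standard equicontinuity lemma for Donsker classes delivers the claim.

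The real work lies in showing $R(\widehat{P},P)=o_p(n^{-1/2})$. Expanding $\Phi_{t^*}(\widehat{P})-\phi_{t^*}+P\tau_{\widehat{P}}$ and cancelling the first-order contributions, the crucial point (already flagged after Theorem~\ref{thm:double robust phi}) is that, because $\widehat{h}^*$ solves \eqref{eq: integral equation} under $\widehat{P}$, that equation is precisely the orthogonality condition annihilating the first-order dependence of the remainder on $\widehat{h}^*-h^*$; hence $R$ contains no term linear in the error of $\widehat{h}^*$. What survives should factorize into two bilinear pieces: one of the form $\int(\widehat{g}-g)(\widehat{F}_{T\mid W}-F_{T\mid W})\,dP_W$ from the current-status augmentation, and one of the form $\int(\widehat{\EE}_{T\mid W}[\widehat{h}^*\mid T\ge u]-\EE_{T\mid W}[\widehat{h}^*\mid T\ge u])(\widehat{\lambda}_{R\mid W}-\lambda_{R\mid W})\,dP_W$ from the augmented inverse-probability-of-censoring martingale term. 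Applying Cauchy--Schwarz to each piece and bounding the remaining factors via the positivity assumptions, these are controlled exactly by the product-rate conditions (1) and (2) and are therefore $o_p(n^{-1/2})$. Combining the three steps and applying Slutsky's theorem yields $\sqrt{n}(\widehat{\phi}_{t^*}-\phi_{t^*})\to N(0,\mathrm{var}(\tau_P))$.

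I expect the main obstacle to be the third step, and in particular the bookkeeping that verifies that solving the Fredholm equation under $\widehat{P}$ eliminates the first-order $\widehat{h}^*$ error and that the residual remainder reduces to exactly the two advertised cross-products. This requires careful simultaneous manipulation of the two coupled augmentation terms that share the common component $h^*$, and is the place where the exchangeability structure of $\calM$ must be used most delicately.
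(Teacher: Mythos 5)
Your proposal is correct and follows essentially the same route as the paper: the same one-step decomposition into a CLT term, a Donsker-controlled empirical-process term, and a remainder $P\tau_{\widehat P}+\Phi_{t^*}(\widehat P)-\phi_{t^*}$, with the key observation that $\widehat h^*$ solving the estimated Fredholm equation cancels the first-order dependence on $\widehat h^*-h^*$ and leaves exactly the two bilinear cross-products $(\widehat g-g)(\widehat F_{T\mid W}-F_{T\mid W})$ and $(\widehat \EE_{T\mid W}[\widehat h\mid T\ge u]-\EE_{T\mid W}[\widehat h\mid T\ge u])(\widehat\lambda_{R\mid W}-\lambda_{R\mid W})$, each handled by Cauchy--Schwarz under the stated product-rate conditions. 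The only details the paper supplies that you leave implicit are the coarsening-at-random projection computation that puts the censoring term into bilinear form and the observation that $\widehat h^*$ has the closed form $(\mathbbm{1}(t>t^*)-\widehat\mu(w))/\pi$ for $t>\max(c_u,t^*)$, which restricts both integrals to the compact region where the sup-norm rate conditions and the bound $\widehat\Gamma>\epsilon$ apply.
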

The asymptotic normality implies that we can construct a $1-\alpha$ Wald-type confidence interval for $\phi_{t^*}$, $(\widehat{\phi}_{t^*} - n^{-1/2}z_{1-\alpha/2}\widehat\sigma, \widehat{\phi}_{t^*} + n^{-1/2}z_{1-\alpha/2}\widehat\sigma)$ where $\widehat\sigma^2$ is the empirical variance of the estimated gradient $\tau_{\widehat P}$. The estimation errors of the nuisance functions appear in product forms. This suggests that we may use flexible statistical learning tools to consistently estimate the nuisance functions with a sufficiently fast convergence rate, but the convergence rate can be slower than root n. We provide more detailed discussion regarding this point after Theorem~\ref{thm:efficiency real}. 

Similar results can be obtained for the one-step estimator $\widehat\phi_{t^*}\eff$ based on the canonical gradient $\tau_P\eff$ as follows.

\begin{theorem}[Asymptotic behavior of $\widehat{\phi}_{t^*}\eff$] \label{thm:efficiency real}
Under Assumptions~\ref{cond:exchangeability} to \ref{cond:uninformative censoring}, if $\widehat{g}$, $\widehat{\lambda}_{T\mid W}$, and $\widehat{\lambda}_{R\mid W}$ all belong to a fixed $P$-Donsker class of functions with probability tending to one,  $\Gamma(\max(c_u,t^*) \mid W)>\epsilon$ with probability one for some constant $\epsilon>0$, and 
\begin{enumerate}
    \item $\mathrm{sup}_{c\in [c_l,c_u]}\|\widehat g(c\mid \cdot) - g(c\mid \cdot)\|_{L_2(P_W)}  = o_p(n^{-1/4})$; and
    \item $\mathrm{sup}_{u\leq \max(c_u,t^*)} \lVert \widehat{\lambda}_{R\mid W}(u\mid \cdot) - \lambda_{R\mid W}(u\mid \cdot)  \rVert_{L_2(P_W)}  = o_p(n^{-1/4})$; and
    \item $\mathrm{sup}_{u\leq \max(c_u,t^*)}\|\widehat \lambda_{T\mid W}(u\mid \cdot) - \lambda_{T\mid W}(u\mid \cdot)\|_{L_2(P_W)}  = o_p(n^{-1/4})$.
\end{enumerate}
    Then, $$\sqrt{n}(\widehat{\phi}_{t^*}\eff - {\phi}_{t^*}) \rightarrow N(0,\text{var}(\tau_P\eff)).$$
\end{theorem}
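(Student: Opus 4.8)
The plan is to treat $\widehat{\phi}_{t^*}\eff$ as the empirical mean $P_n D(\cdot;\widehat P)$ of the uncentered efficient gradient $D(x;P) := \tau_P\eff(x) + \phi_{t^*}$, where $P_n$ is the empirical measure, and to carry out the standard von Mises expansion for one-step estimators. Since $\EE_P[\tau_P\eff]=0$ and the constant $\phi_{t^*}$ is annihilated by $P_n-P$, I would decompose
\begin{align*}
\widehat{\phi}_{t^*}\eff - \phi_{t^*} = (P_n-P)\tau_P\eff + (P_n-P)\big[D(\cdot;\widehat P)-D(\cdot;P)\big] + R(\widehat P,P),
\end{align*}
where the second-order remainder is $R(\widehat P,P) := \EE_P[D(X;\widehat P)]-\phi_{t^*}$. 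The first term is an average of i.i.d.\ mean-zero summands, so a CLT gives $\sqrt n(P_n-P)\tau_P\eff\Rightarrow N(0,\mathrm{var}(\tau_P\eff))$ once $\tau_P\eff\in L_2(P)$, which holds under the positivity conditions ($\Gamma(\max(c_u,t^*)\mid W)>\epsilon$ and the bounds on $F_{T\mid W}$ from Assumption~\ref{cond:support of C}). It then remains to show the other two terms are $o_p(n^{-1/2})$; controlling $R$ is the heart of the argument.

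For the empirical-process term I would invoke asymptotic equicontinuity: if $D(\cdot;\widehat P)$ lies in a fixed $P$-Donsker class with probability tending to one and $\|D(\cdot;\widehat P)-D(\cdot;P)\|_{L_2(P)}\xrightarrow{p}0$, then $(P_n-P)[D(\cdot;\widehat P)-D(\cdot;P)]=o_p(n^{-1/2})$. The Donsker hypothesis on $\widehat g,\widehat\lambda_{T\mid W},\widehat\lambda_{R\mid W}$, together with preservation of the Donsker property under the Lipschitz operations building $D$ (sums, products, and integrals against bounded kernels), gives the first requirement; a point I would need to verify is that the solution map of the Fredholm equation~\eqref{eq: integral equation with rc} sending the nuisances to $(\widehat\eta^*,\widehat\Theta^*)$ is itself Lipschitz and preserves the Donsker property, which follows from boundedness of the inverse operator under positivity. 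The $L_2(P)$-consistency follows from the $o_p(n^{-1/4})$ rates in conditions 1--3 and continuity.

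The crux is showing $R(\widehat P,P)=o_p(n^{-1/2})$. I would compute $\EE_P[D(X;\widehat P)]$ term by term, using exchangeability (Assumption~\ref{cond:exchangeability}), conditional independent inspection (Assumption~\ref{cond:uninformative inspection}) to replace $\EE[\Delta_C\mid W,C]$ by $F_{T\mid W}$, and conditional independent censoring (Assumption~\ref{cond:uninformative censoring}) to evaluate the martingale-type $S=1$ contribution against the true $P_{R\mid W}$ and $P_{T\mid W}$. This produces terms that are superficially linear in the nuisance errors. The essential device is to substitute, inside the $S=1$ contribution, the quantity $\pi\widehat\eta^*\widehat\Gamma\widehat S_{T\mid W}$ using the empirical version of~\eqref{eq: integral equation with rc} that $\widehat\eta^*$ solves; the discrepancy between $\Gamma S_{T\mid W}$ and $\widehat\Gamma\widehat S_{T\mid W}$ is already a product of errors, while the substituted expression recombines with the $\widehat\mu$ term and the current-status ($S=0$) contribution. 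After exchanging the order of integration and writing $\int_0^c(\widehat\lambda_{T\mid W}-\lambda_{T\mid W})=\widehat\Lambda_{T\mid W}-\Lambda_{T\mid W}$, a Taylor expansion of $\widehat S_{T\mid W}=\exp(-\widehat\Lambda_{T\mid W})$ shows the apparent first-order pieces cancel, leaving a bilinear form whose summands are products such as $(\widehat g-g)(\widehat\Lambda_{T\mid W}-\Lambda_{T\mid W})$, $(\widehat\Gamma-\Gamma)(\widehat\lambda_{T\mid W}-\lambda_{T\mid W})$, and $(\widehat\Lambda_{T\mid W}-\Lambda_{T\mid W})^2$. Bounding each by Cauchy--Schwarz in $L_2(P_W)$ and matching them to the rate products implied by conditions 1--3 (each a product of two $o_p(n^{-1/4})$ factors, hence $o_p(n^{-1/2})$) completes the bound, and assembling the three terms yields the stated asymptotic normality.

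The main obstacle I anticipate is precisely this remainder analysis: making the integral-equation substitution rigorous and confirming the cancellation of first-order terms requires careful bookkeeping of $\widehat\eta^*$ and $\widehat\Theta^*$, which are defined only implicitly through~\eqref{eq: integral equation with rc}, together with a priori uniform boundedness and stability bounds on these solutions (and on the associated kernels) that hold under Assumption~\ref{cond:support of C} and the censoring positivity. A secondary technical point is verifying the Donsker-preservation of the solution map, so that the empirical-process term can be discharged without resorting to sample splitting.
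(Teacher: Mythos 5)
Your proposal is correct and follows essentially the same route as the paper's proof: the same one-step/von Mises decomposition, the Donsker (or cross-fitting) argument for the empirical-process term, and—crucially—the same device of substituting $\pi\widehat\eta^*\widehat\Gamma\widehat S_{T\mid W}$ via the empirical integral equation \eqref{eq: integral equation with rc} so that the first-order pieces cancel against the $\widehat\mu$ and current-status contributions, leaving bilinear terms $(\widehat g-g)(\widehat\Lambda_{T\mid W}-\Lambda_{T\mid W})$, $(\Gamma S_{T\mid W}-\widehat\Gamma\widehat S_{T\mid W})(\lambda_{T\mid W}-\widehat\lambda_{T\mid W})$, and squared errors from the Taylor expansion of $\widehat S_{T\mid W}=\exp(-\widehat\Lambda_{T\mid W})$, each handled by Cauchy--Schwarz and the $o_p(n^{-1/4})$ rates. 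The only detail worth adding from the paper is that the integrals are first restricted to $[0,\max(c_u,t^*)]$ using $\eta^*\equiv 0$ for $t>\max(c_u,t^*)$ and $\widehat g\equiv 0$ off $[c_l,c_u]$, which is what makes the positivity bound $\Gamma(\max(c_u,t^*)\mid W)>\epsilon$ sufficient.
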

The variance of the canonical gradient $\tau_P\eff$ characterizes the semiparametric efficiency bound in estimating the survival probability relative to the model $\calM$. Therefore, the above theorem implies that, $\widehat{\phi}_{t^*}\eff$ will attain this semiparametric efficiency bound if the nuisance function estimates converge fast enough but each can be slower than root-n. As a result, data-adaptive approaches can be used for estimating these nuisance functions. For instance, $\widehat \lambda_{T|W}$ and $\widehat\lambda_{R\mid W}$ may be obtained via fitting (semi)parametric models, kernel methods with inverse probability of censoring weighting or statistical learning methods for survival data, using right-censored data only. The inspection time $C$ is observed for current-status data, and its distribution $g$ can be estimated either parametrically or through nonparametric regression and machine learning methods. Importantly, estimating $\widehat\eta^*$ does not introduce additional asymptotic variance; any error in $\widehat\eta^*$ is entirely attributable to the estimation error in the nuisance functions. Intuitively, this is because the canonical gradient  closely mirrors the structure of the integral equation for $\eta^*$, resulting in cancellation of higher-order errors. Furthermore, the Donsker condition can be removed by employing cross-fitting, thereby permitting the use of flexible statistical learning methods. Given the asymptotic normality, we can form a $1-\alpha$ Wald-type confidence interval for $\phi_{t^*}$, $(\widehat{\phi}_{t^*}\eff - n^{-1/2}z_{1-\alpha/2}\widehat\sigma, \widehat{\phi}_{t^*}\eff + n^{-1/2}z_{1-\alpha/2}\widehat\sigma)$ with $\widehat\sigma^2$ being the empirical variance of the estimated canonical gradient $\tau_{\widehat P}\eff$.

\section{Experiments}\label{sec: experiments}
We conducted a simulation study to evaluate the performance of the proposed data fusion estimators relative to estimators that use only right-censored data or only current status data. We generated a right-censored dataset and a current status dataset, with a sample size ratio of $1{:}2$ and total sample sizes of $n \in \{300, 600, 1500\}$. For both sources, we simulated two covariates: $W_1 \sim \mathrm{Uniform}(0,1)$ and $W_2 \sim \mathrm{Bernoulli}(0.5)$. The event time $T$ was generated from an exponential distribution with rate $0.8 + 0.4 W_1 + 0.2 W_1 W_2$. For the right-censored data, the censoring time was generated from an exponential distribution with rate $1.5 - 0.2 W_1 - 0.5 W_2$ so that it is independent of $T$ given $(W_1,W_2)$. For the current status data, the inspection time $C$ was generated as $C = 0.5 + 0.5  \mathrm{Beta}(1, 0.75 + 0.5 W_1 + 0.1 W_2)$, yielding inspection times $C \in (0.5, 1)$.

We aim to estimate the survival probability at time points $t^* \in \{0.2, 0.7, 0.9\}$. We compared the following estimators: (1) the current status only estimator (CS) proposed by \cite{wolock2025investigating}, (2) the right-censored only AIPCW estimator (RC) constructed via $\tau^{\text{rc only}}_{P}$, (3) the data fusion estimator constructed via $\tau_P$, and (4) the efficient data fusion estimator constructed via $\tau_P^{\text{eff}}$. The CS estimator was constructed using \texttt{survML} R package \citep{survml}, with a library of generalized linear model and generalized additive model under 5-fold cross validation. For the other estimators, the conditional survival and cumulative hazard functions of the event time and censoring time were estimated via \texttt{survSuperLearner} \citep{westling2024inference} with a library consisting of the Kaplain-Meier estimator, Cox proportional hazards model, generalized additive model, and random survival forest, under default settings for all individual learners and \texttt{survSuperLearner}. The conditional density of the inspection times, $g$, is estimated via \texttt{np} R package \citep{hayfield2007np} with default settings. To solve for $h^*$ and $\eta^*$, we solved one equation for each observed covariate value. For given covariate value, we employed a basis expansion in time $t$ using polynomial bases up to degree 10 and their interactions with the indicator $\mathbbm{1}(t > t^*)$, resulting in 22 basis functions in total. All integrals were numerically approximated using a grid of 2000 points. Each simulation scenario was replicated 500 times.

Table~\ref{tab:main_result} displays the main result. When $t^*=0.2$, $\phi_{0.2}$ cannot be identified using current status data alone since it falls out of the support of the inspection time. However, leveraging the current status dataset still contributes to efficiency gain as the proposed data fusion estimator achieves $12\%$ to $13\%$ reduction in CI length compared to the right-censored only estimator (RC). Such efficiency gains are more pronounced at $t^*=0.7$ and $t^*=0.9$, with an overall reduction ranging from $36\%$ to $44\%$. Looking at the results from another perspective, augmenting current status data with right-censored data also yields substantial gains in efficiency, particularly as sample size increases. This pattern is further illustrated in Figure~\ref{fig:rate}. The CS only estimator has a slower convergence rate of $n^{-1/3}$, highlighting the fact that current status data provide substantially less information for estimating survival probabilities compared to right-censored data. When augmenting current status data with right-censored data, however, the proposed approach boosts the convergence rate from cubic root-n to root-n, indicated by the steeper slope. When augmenting right-censored data with current status data, the proposed estimator maintains the root-n rate but with a reduced variance, as indicated by the lower intercept. 

Finally, although developing a method for combining the CS only and RC only estimators with theoretical guarantees falls outside the scope of the current paper, we may consider a naive inverse-variance-weighting approach and use the resulting CI length as a benchmark. For instance, when $N = 1500$ and $t^*=0.9$, the naive inverse-variance weighting approach gives a CI length of 0.100 which is larger than the CI lengths with the data fusion estimators. This is to be expected because the weight assigned to the CS only estimator will approach zero due to its slower convergence rate as sample sizes of the CS and RC data tends to infinity proportionally. As a result, the length of the naive inverse-variance-weighting CI approaches the length of the RC only CI.

\begin{table}
\begin{center}
\caption{Estimated survival probabilities using current status (CS) data only, right-censored (RC) data only, or both. The sample size ratio of CS : RC is 2:1. For the CS-only estimator, only results for $t^*=0.7$ and $t^*=0.9$ are presented since $\phi_{0.2}$ is not identifiable using CS alone.}\label{tab:main_result}
\begin{adjustbox}{width=\textwidth} 
\begin{tabular}{lrrrrrrrrr}
\toprule
   & \multicolumn{3}{c}{$t^* = 0.2$} 
 & \multicolumn{3}{c}{$t^* = 0.7$}  & \multicolumn{3}{c}{$t^* = 0.9$}  \\
\cmidrule(l){2-4} \cmidrule(l){5-7}\cmidrule(l){8-10}
  & Estimate & CI length  &  Coverage    & Estimate & CI length  &  Coverage  & Estimate & CI length  &  Coverage  \\
\midrule
\textbf{N=300} \\
\hspace{2em}CS &  $\cdot$ & $\cdot$ & $\cdot$ & 0.484 &0.197  &0.920  & 0.361 &0.287  & 0.892\\
\hspace{2em}RC &  0.802 & 0.175 & 0.949 &0.468 & 0.291 & 0.929 & 0.380 & 0.301 & 0.931\\
\hspace{2em}Both & 0.805 & 0.153 & 0.935 &  0.475 & 0.167 & 0.943 &  0.387 & 0.183 & 0.945\\
\hspace{2em}Both eff & 0.807 & 0.149 & 0.976 & 0.480 & 0.167 & 0.929 & 0.396 & 0.173 & 0.921\\
\textbf{N=600} \\
\hspace{2em}CS  &  $\cdot$ & $\cdot$ & $\cdot$ & 0.482 & 0.158 & 0.936 & 0.376 &0.221  &0.940 \\
\hspace{2em}RC & 0.809 & 0.124 & 0.950 &  0.473 & 0.200 & 0.952 & 0.384 & 0.212 & 0.948\\
\hspace{2em}Both & 0.812 & 0.109 & 0.946 & 0.479 & 0.112 & 0.956  & 0.390 & 0.131 & 0.944\\
\hspace{2em}Both eff & 0.813 & 0.110 & 0.960 &  0.480 & 0.117 & 0.924  & 0.394 & 0.130 & 0.904\\
\textbf{N=1500} \\
\hspace{2em}CS  & $\cdot$ & $\cdot$ & $\cdot$ & 0.483  & 0.116 & 0.922 & 0.385 & 0.157 & 0.942\\
\hspace{2em}RC & 0.808 & 0.077 & 0.938 &0.477 & 0.122 & 0.960 & 0.386 & 0.130 & 0.958\\
\hspace{2em}Both & 0.810 & 0.067 & 0.956 & 0.481 & 0.071 & 0.954 & 0.391 & 0.083 & 0.942\\
\hspace{2em}Both eff & 0.811 & 0.067 & 0.964 &  0.483 & 0.069 & 0.932 &  0.394 & 0.078 & 0.928\\
\bottomrule
\end{tabular}
\end{adjustbox}
\end{center}
\end{table}

\begin{figure}
    \centering
    \includegraphics[width=0.6\linewidth]{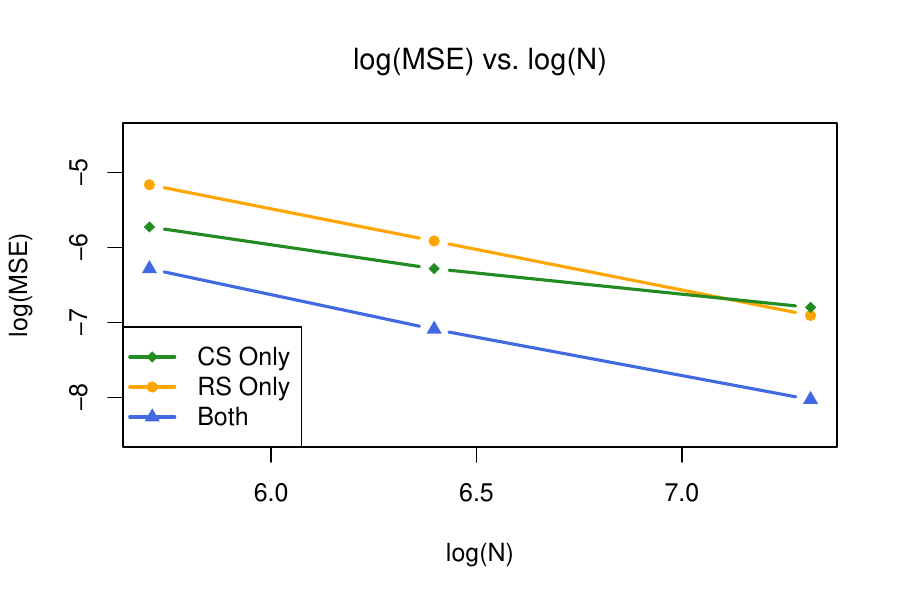}
    \caption{$\log$(MSE) vs. $\log$(N) when $t^*=0.7$ for current status only estimator (CS), right censored only AIPCW estimator (RS), and the proposed data fusion estimator (Both).}
    \label{fig:rate}
\end{figure}

\section{Discussion}\label{sec: discussion}
In this work, we consider integrating right-censored and current status data to efficiently estimate the survival function. Under an exchangeability assumption, we derive the gradients of the survival probability with respect to the observed data fusion model. Subsequently, we develop doubly robust and semiparametrically efficient one-step estimators and corresponding inferential procedures. 

Our exchangeability condition, Assumption~\ref{cond:exchangeability}, requires that both the conditional distribution of the event time given covariates and the marginal distribution of the covariates are the same in the right-censored and current status data. While the exchangeability of the conditional distribution $T \mid W$ is crucial for fusing these two sets of data, the assumption on the covariate distribution can be relaxed. In fact, the covariate distributions are likely to be different when the two datasets are collected from different populations. They may also differ when regular monitoring during study follow-up can only be achieved for a subset of individuals, so that right-censored data are available for this subset and only current status at some random inspection time are available for the rest of the subjects. We refer to the situation where $P_{W|S=1}$ and $P_{W|S=0}$ differ as one with covariate shift. In Appendix~\ref{app: extensions}, we extend our proposed framework to handle such covariate shift and develop efficient estimators for the survival probability in a target population, which can be defined by either the $S=1$ or $S=0$ cohort. Moreover, while we consider two datasets in this paper, our proposed methods can be easily extended to accommodate multiple datasets under suitable exchangeability assumptions and a well-defined target population.

We have focused on inferring the marginal survival probability in observational settings. However, our framework can be further extended to compare the survival probabilities between subgroups defined by some discrete covariate such as the treatment group. Under additional assumptions such as no unmeasured confounding \citep[see, for example, ][]{baer2025causal}, this can correspond to a causal parameter measuring the average treatment effect in terms of the risk difference between the treatment and control groups. In observational studies, such an extension will involve modeling the treatment assignment mechanism and incorporating it into our current framework. We leave this for future investigation.

We have primarily considered statistical inference about the survival probability at some pre-specified time point. While this estimand is of interest in many practical applications, other summaries of the distribution of the survival time can also be considered. One example is the restricted mean survival time (RMST). The RMST can be estimated at root-n rate using current status data alone, and asymptotically normal estimators have been proposed \citep{van1997estimation,van1998locally}. Thus, a standard meta-analysis approach such as inverse variance weighting is directly applicable. However, it remains to be seen whether adapting the current data fusion framework can provide additional efficiency gain under the exchangeability of the conditional distribution of $T|W$.

Both right-censored data and current status data can be regarded as coarsening of the time to event outcome. To the best of our knowledge, existing works on data fusion with coarsened data have focused on fusing data with the same type of coarsening, for example, all time-to-event outcomes subject to right censoring. To this end, our study represents an important step in studying data fusion with mixed types of coarsening in a semiparametric framework. It is worthwhile in the future to study fusing right-censored data with other types of coarsened or incomplete data, including left truncation which falls outside of the coarsening-at-random framework but about which significant progress \citep{wang2024doubly,wang2024learning} has been made recently.

\bibliographystyle{abbrvnat}
\bibliography{reference}

\begin{thebibliography}{48}
\providecommand{\natexlab}[1]{#1}
\providecommand{\url}[1]{\texttt{#1}}
\expandafter\ifx\csname urlstyle\endcsname\relax
  \providecommand{\doi}[1]{doi: #1}\else
  \providecommand{\doi}{doi: \begingroup \urlstyle{rm}\Url}\fi

\bibitem[Atkinson(1976)]{atkinson1976survey}
K.~E. Atkinson.
\newblock A survey of numerical methods for the solution of fredholm integral equations of the second kind.
\newblock \emph{(No Title)}, 1976.

\bibitem[Ayer et~al.(1955)Ayer, Brunk, Ewing, Reid, and Silverman]{ayer1955empirical}
M.~Ayer, H.~D. Brunk, G.~M. Ewing, W.~T. Reid, and E.~Silverman.
\newblock An empirical distribution function for sampling with incomplete information.
\newblock \emph{The annals of mathematical statistics}, pages 641--647, 1955.

\bibitem[Baer et~al.(2025)Baer, Ertefaie, and Strawderman]{baer2025causal}
B.~R. Baer, A.~Ertefaie, and R.~L. Strawderman.
\newblock On causal inference for the survivor function.
\newblock \emph{arXiv preprint arXiv:2507.16691}, 2025.

\bibitem[Becker(2017)]{becker2017analysis}
N.~G. Becker.
\newblock \emph{Analysis of infectious disease data}.
\newblock Chapman and Hall/CRC, 2017.

\bibitem[Bickel(1982)]{bickel1982adaptive}
P.~J. Bickel.
\newblock On adaptive estimation.
\newblock \emph{The Annals of Statistics}, pages 647--671, 1982.

\bibitem[Bickel et~al.(1993)Bickel, Klaassen, Bickel, Ritov, Klaassen, Wellner, and Ritov]{bickel1993efficient}
P.~J. Bickel, C.~A. Klaassen, P.~J. Bickel, Y.~Ritov, J.~Klaassen, J.~A. Wellner, and Y.~Ritov.
\newblock \emph{Efficient and adaptive estimation for semiparametric models}, volume~4.
\newblock Springer, 1993.

\bibitem[Caplan et~al.(1995)Caplan, Lane, and Grimson]{caplan1995use}
L.~Caplan, D.~S. Lane, and R.~Grimson.
\newblock The use of cohort vs repeated cross-sectional sample survey data in monitoring changing breast cancer screening practices.
\newblock \emph{Preventive Medicine}, 24\penalty0 (6):\penalty0 553--556, 1995.

\bibitem[Cohen et~al.(2011)Cohen, Chen, McCauley, Gamble, Hosseinipour, Kumarasamy, Hakim, Kumwenda, Grinsztejn, Pilotto, et~al.]{cohen2011prevention}
M.~S. Cohen, Y.~Q. Chen, M.~McCauley, T.~Gamble, M.~C. Hosseinipour, N.~Kumarasamy, J.~G. Hakim, J.~Kumwenda, B.~Grinsztejn, J.~H. Pilotto, et~al.
\newblock Prevention of hiv-1 infection with early antiretroviral therapy.
\newblock \emph{New England journal of medicine}, 365\penalty0 (6):\penalty0 493--505, 2011.

\bibitem[Diamond et~al.(1986)Diamond, McDonald, and Shah]{diamond1986proportional}
I.~D. Diamond, J.~W. McDonald, and I.~H. Shah.
\newblock Proportional hazards models for current status data: application to the study of differentials in age at weaning in pakistan.
\newblock \emph{Demography}, pages 607--620, 1986.

\bibitem[Fleming and Harrington(2013)]{fleming2013counting}
T.~R. Fleming and D.~P. Harrington.
\newblock \emph{Counting processes and survival analysis}.
\newblock John Wiley \& Sons, 2013.

\bibitem[Gao et~al.(2024)Gao, Yang, Shan, Ye, Lipkovich, and Faries]{gao2024doubly}
C.~Gao, S.~Yang, M.~Shan, W.~W. Ye, I.~Lipkovich, and D.~Faries.
\newblock Doubly protected estimation for survival outcomes utilizing external controls for randomized clinical trials.
\newblock \emph{arXiv preprint arXiv:2410.18409}, 2024.

\bibitem[Golberg(1979)]{golberg1979solution}
M.~A. Golberg.
\newblock \emph{Solution methods for integral equations}.
\newblock Springer, 1979.

\bibitem[Graham et~al.(2024)Graham, Carone, and Rotnitzky]{graham2024towards}
E.~Graham, M.~Carone, and A.~Rotnitzky.
\newblock Towards a unified theory for semiparametric data fusion with individual-level data.
\newblock \emph{arXiv preprint arXiv:2409.09973}, 2024.

\bibitem[Groeneboom and Wellner(2001)]{groeneboom2001computing}
P.~Groeneboom and J.~A. Wellner.
\newblock Computing chernoff's distribution.
\newblock \emph{Journal of Computational and Graphical Statistics}, 10\penalty0 (2):\penalty0 388--400, 2001.

\bibitem[Havers et~al.(2020)Havers, Reed, Lim, Montgomery, Klena, Hall, Fry, Cannon, Chiang, Gibbons, et~al.]{havers2020seroprevalence}
F.~P. Havers, C.~Reed, T.~Lim, J.~M. Montgomery, J.~D. Klena, A.~J. Hall, A.~M. Fry, D.~L. Cannon, C.-F. Chiang, A.~Gibbons, et~al.
\newblock Seroprevalence of antibodies to sars-cov-2 in 10 sites in the united states, march 23-may 12, 2020.
\newblock \emph{JAMA internal medicine}, 180\penalty0 (12):\penalty0 1576--1586, 2020.

\bibitem[Hayfield and Racine(2007)]{hayfield2007np}
T.~Hayfield and J.~S. Racine.
\newblock The np package.
\newblock \emph{Manual, McMaster University}, 2007.

\bibitem[Hu et~al.(2022)Hu, Wang, Li, and Miao]{hu2022semiparametric}
W.~Hu, R.~Wang, W.~Li, and W.~Miao.
\newblock Semiparametric efficient fusion of individual data and summary statistics.
\newblock \emph{arXiv preprint arXiv:2210.00200}, 2022.

\bibitem[Ibragimov and Has'~Minskii(2013)]{ibragimov2013statistical}
I.~A. Ibragimov and R.~Z. Has'~Minskii.
\newblock \emph{Statistical estimation: asymptotic theory}, volume~16.
\newblock Springer Science \& Business Media, 2013.

\bibitem[Iezzoni et~al.(2020)Iezzoni, Rao, Agaronnik, and El-Jawahri]{iezzoni2020cross}
L.~I. Iezzoni, S.~R. Rao, N.~D. Agaronnik, and A.~El-Jawahri.
\newblock Cross-sectional analysis of the associations between four common cancers and disability.
\newblock \emph{Journal of the National Comprehensive Cancer Network}, 18\penalty0 (8):\penalty0 1031--1044, 2020.

\bibitem[Jewell and Shiboski(1990)]{jewell1990statistical}
N.~P. Jewell and S.~C. Shiboski.
\newblock Statistical analysis of hiv infectivity based on partner studies.
\newblock \emph{Biometrics}, pages 1133--1150, 1990.

\bibitem[Kagiwada and Kalaba(1978)]{kagiwada1978imbedding}
H.~Kagiwada and R.~Kalaba.
\newblock Imbedding methods for integral equations with applications.
\newblock \emph{Journal of Optimization Theory and Applications}, 24\penalty0 (1):\penalty0 29--57, 1978.

\bibitem[Kaplan and Meier(1958)]{kaplan1958nonparametric}
E.~L. Kaplan and P.~Meier.
\newblock Nonparametric estimation from incomplete observations.
\newblock \emph{Journal of the American statistical association}, 53\penalty0 (282):\penalty0 457--481, 1958.

\bibitem[Keiding(1991)]{keiding1991age}
N.~Keiding.
\newblock Age-specific incidence and prevalence: a statistical perspective.
\newblock \emph{Journal of the Royal Statistical Society Series A: Statistics in Society}, 154\penalty0 (3):\penalty0 371--396, 1991.

\bibitem[Lagakos and Louis(1988)]{lagakos1988use}
S.~W. Lagakos and T.~A. Louis.
\newblock Use of tumour lethality to interpret tumorigenicity experiments lacking cause-of-death data.
\newblock \emph{Journal of the Royal Statistical Society Series C: Applied Statistics}, 37\penalty0 (2):\penalty0 169--179, 1988.

\bibitem[Li and Luedtke(2023)]{li2023efficient}
S.~Li and A.~Luedtke.
\newblock Efficient estimation under data fusion.
\newblock \emph{Biometrika}, 110\penalty0 (4):\penalty0 1041--1054, 2023.

\bibitem[Liu et~al.(2022)Liu, Liu, and Xie]{liu2022nonparametric}
D.~Liu, R.~Y. Liu, and M.-g. Xie.
\newblock Nonparametric fusion learning for multiparameters: Synthesize inferences from diverse sources using data depth and confidence distribution.
\newblock \emph{Journal of the American Statistical Association}, 117\penalty0 (540):\penalty0 2086--2104, 2022.

\bibitem[Liu et~al.(2025)Liu, Levis, Zhu, Yang, Gilbert, and Han]{liu2025targeted}
Y.~Liu, A.~W. Levis, K.~Zhu, S.~Yang, P.~B. Gilbert, and L.~Han.
\newblock Targeted data fusion for causal survival analysis under distribution shift.
\newblock \emph{arXiv preprint arXiv:2501.18798}, 2025.

\bibitem[Ma et~al.(2015)Ma, Hu, and Sun]{ma2015sieve}
L.~Ma, T.~Hu, and J.~Sun.
\newblock Sieve maximum likelihood regression analysis of dependent current status data.
\newblock \emph{Biometrika}, 102\penalty0 (3):\penalty0 731--738, 2015.

\bibitem[Polack et~al.(2020)Polack, Thomas, Kitchin, Absalon, Gurtman, Lockhart, Perez, P{\'e}rez~Marc, Moreira, Zerbini, et~al.]{polack2020safety}
F.~P. Polack, S.~J. Thomas, N.~Kitchin, J.~Absalon, A.~Gurtman, S.~Lockhart, J.~L. Perez, G.~P{\'e}rez~Marc, E.~D. Moreira, C.~Zerbini, et~al.
\newblock Safety and efficacy of the bnt162b2 mrna covid-19 vaccine.
\newblock \emph{New England journal of medicine}, 383\penalty0 (27):\penalty0 2603--2615, 2020.

\bibitem[Prenter(1981)]{prenter1981numerical}
P.~Prenter.
\newblock The numerical treatment of integral equations (cth baker).
\newblock \emph{SIAM Review}, 23\penalty0 (2):\penalty0 266, 1981.

\bibitem[Robins and Rotnitzky(1992)]{robins1992recovery}
J.~M. Robins and A.~Rotnitzky.
\newblock Recovery of information and adjustment for dependent censoring using surrogate markers.
\newblock In \emph{AIDS epidemiology: methodological issues}, pages 297--331. Springer, 1992.

\bibitem[Tsiatis(2006)]{tsiatis2006semiparametric}
A.~A. Tsiatis.
\newblock \emph{Semiparametric theory and missing data}.
\newblock Springer, 2006.

\bibitem[van~der Laan and Hubbard(1997)]{van1997estimation}
M.~J. van~der Laan and A.~Hubbard.
\newblock Estimation with interval censored data and covariates.
\newblock \emph{Lifetime Data Analysis}, 3\penalty0 (1):\penalty0 77--91, 1997.

\bibitem[Van Der~Laan and Robins(1998)]{van1998locally}
M.~J. Van Der~Laan and J.~M. Robins.
\newblock Locally efficient estimation with current status data and time-dependent covariates.
\newblock \emph{Journal of the American Statistical Association}, 93\penalty0 (442):\penalty0 693--701, 1998.

\bibitem[Van~der Laan and Robins(2003)]{van2003unified}
M.~J. Van~der Laan and J.~M. Robins.
\newblock \emph{Unified methods for censored longitudinal data and causality}.
\newblock Springer Science \& Business Media, 2003.

\bibitem[van~der Laan and Rubin(2007)]{van2007note}
M.~J. van~der Laan and D.~Rubin.
\newblock A note on targeted maximum likelihood and right censored data.
\newblock 2007.

\bibitem[Van~der Laan et~al.(2011)Van~der Laan, Rose, et~al.]{van2011targeted}
M.~J. Van~der Laan, S.~Rose, et~al.
\newblock \emph{Targeted learning: causal inference for observational and experimental data}, volume~4.
\newblock Springer, 2011.

\bibitem[Wang et~al.(2012)Wang, Sun, Sun, Zhou, and Wang]{wang2012nonparametric}
C.~Wang, J.~Sun, L.~Sun, J.~Zhou, and D.~Wang.
\newblock Nonparametric estimation of current status data with dependent censoring.
\newblock \emph{Lifetime data analysis}, 18\penalty0 (4):\penalty0 434--445, 2012.

\bibitem[Wang et~al.(2024{\natexlab{a}})Wang, Ying, and Xu]{wang2024doubly}
Y.~Wang, A.~Ying, and R.~Xu.
\newblock Doubly robust estimation under covariate-induced dependent left truncation.
\newblock \emph{Biometrika}, 111\penalty0 (3):\penalty0 789--808, 2024{\natexlab{a}}.

\bibitem[Wang et~al.(2024{\natexlab{b}})Wang, Ying, and Xu]{wang2024learning}
Y.~Wang, A.~Ying, and R.~Xu.
\newblock Learning treatment effects under covariate dependent left truncation and right censoring.
\newblock \emph{arXiv preprint arXiv:2411.18879}, 2024{\natexlab{b}}.

\bibitem[Wen et~al.(2025)Wen, Steingrimsson, Robertson, and Dahabreh]{wen2025multi}
L.~Wen, J.~A. Steingrimsson, S.~E. Robertson, and I.~J. Dahabreh.
\newblock Multi-source analyses of average treatment effects with failure time outcomes: L. wen et al.
\newblock \emph{Lifetime Data Analysis}, pages 1--29, 2025.

\bibitem[Westling et~al.(2024)Westling, Luedtke, Gilbert, and Carone]{westling2024inference}
T.~Westling, A.~Luedtke, P.~B. Gilbert, and M.~Carone.
\newblock Inference for treatment-specific survival curves using machine learning.
\newblock \emph{Journal of the American Statistical Association}, 119\penalty0 (546):\penalty0 1541--1553, 2024.

\bibitem[Wolock(2025)]{survml}
C.~Wolock.
\newblock \emph{survML: Tools for Flexible Survival Analysis Using Machine Learning}, 2025.
\newblock URL \url{https://github.com/cwolock/survML}.
\newblock R package version 1.2.0.9000.

\bibitem[Wolock et~al.(2025)Wolock, Jacob, Bennett, Elias-Warren, O’Hanlon, Kenny, Jewell, Rotnitzky, Cole, Weil, et~al.]{wolock2025investigating}
C.~J. Wolock, S.~Jacob, J.~C. Bennett, A.~Elias-Warren, J.~O’Hanlon, A.~Kenny, N.~P. Jewell, A.~Rotnitzky, S.~R. Cole, A.~A. Weil, et~al.
\newblock Investigating symptom duration using current status data: A case study of postacute covid-19 syndrome.
\newblock \emph{Epidemiology}, pages 10--1097, 2025.

\bibitem[Xie and Wang(2022)]{xie2022repro}
M.-g. Xie and P.~Wang.
\newblock Repro samples method for finite-and large-sample inferences.
\newblock \emph{arXiv preprint arXiv:2206.06421}, 2022.

\bibitem[Zhang et~al.(2005)Zhang, Sun, and Sun]{zhang2005statistical}
Z.~Zhang, J.~Sun, and L.~Sun.
\newblock Statistical analysis of current status data with informative observation times.
\newblock \emph{Statistics in Medicine}, 24\penalty0 (9):\penalty0 1399--1407, 2005.

\bibitem[Zheng and Klein(1995)]{zheng1995estimates}
M.~Zheng and J.~P. Klein.
\newblock Estimates of marginal survival for dependent competing risks based on an assumed copula.
\newblock \emph{Biometrika}, 82\penalty0 (1):\penalty0 127--138, 1995.

\bibitem[Zheng and Van Der~Laan(2010)]{zheng2010asymptotic}
W.~Zheng and M.~J. Van Der~Laan.
\newblock Asymptotic theory for cross-validated targeted maximum likelihood estimation.
\newblock 2010.

\end{thebibliography}

\appendix

\section{Solving the integral equation}\label{app: solve equation}
In this section, we examine more closely the integral equation involved in the gradient of the survival function. We focus on the equation involving $h^*$ and rewrite it into a Fredholm equation of the second kind and propose some practical strategies to obtain a solution. Similar strategies are applicable when we solve the equation involving $\eta^*$. 

Recall that we define the function $h^*(t,w)$ as the solution to the following equation: 
\begin{equation*}
    \pi h^*(t,w)  + (1-\pi)  \int_t^\infty \frac{H^*(c,w) g(c|w)}{F_{T|W}(c|w)(1-F_{T|W}(c|w))} dc - \gamma(w) - I\{t>t^*\} + \mu(w) = 0, \label{eq:solve_h_with_w}
\end{equation*}
where 
\begin{align*}
    H^*(c,w) &= \int_0^c f_{T|W}(t|w)h^*(t,w)dt; \\
    \gamma(w) &= (1-\pi)  \int \int_t^\infty \frac{H^*(c,w) g(c|w)}{F_{T|W}(c|w)(1-F_{T|W}(c|w))} f_{T|W}(t|w)dcdt.
\end{align*}
For the ease of notation, let $Q(c,w): = g(c\mid w)/\{F_{T\mid W}(c|w) (1 - F_{T\mid W}(c|w))\}$. We can write out Equation~\eqref{eq:solve_h_with_w} as
\begin{align*}
    0 & = \pi h^*(t,w) - \mathbbm{1}(t >t^*) + \mu(w) \\
    & \quad +(1-\pi)\int_t^\infty Q(c,w)\int_0^c f_{T\mid W}(s\mid w) h^*(s,w)ds dc \\
    & \quad - (1-\pi) \int \int_t^\infty Q(c,w)\int_0^c f_{T\mid W}(s\mid w) h^*(s,w)ds dc f_{T\mid W}(t\mid w) dt
\end{align*}
Changing the order of variables, we have
\begin{align*}
    0 & = \pi h^*(t,w) - \mathbbm{1}(t >t^*) + \mu(w)\\
    & \quad +(1-\pi)\int f_{T\mid W}(s\mid w) h^*(s,w) \int_{max(t,s)}^\infty Q(c,w)dc ds  \\
    & \quad - (1-\pi) \int f_{T\mid W}(s\mid w) h^*(s,w) \int_s^\infty F_{T\mid W}(c\mid w)Q(c,w)dcds \\
     & = \pi h^*(t,w) - \mathbbm{1}(t >t^*) + \mu(w)\\
    & \quad +(1-\pi)\int f_{T\mid W}(s\mid w) h^*(s,w) \int_{s}^\infty Q(c,w) \mathbbm{1}(c\geq t)dc ds  \\
    & \quad - (1-\pi) \int f_{T\mid W}(s\mid w) h^*(s,w) \int_s^\infty F_{T\mid W}(c\mid w)Q(c,w)dcds \\
    & = \pi h^*(t,w)- \mathbbm{1}(t >t^*) + \mu(w) \\
    & \quad +(1-\pi)\int f_{T\mid W}(s\mid w) h^*(s,w) \int_{s}^\infty Q(c,w) \left(\mathbbm{1}(c\geq t)  - F_{T\mid W}(c\mid w)\right) dc ds.
\end{align*}
Rearranging the above, we have
\begin{equation*}
    h^*(t,w) = \frac{\mathbbm{1}(t>t^*) - \mu(w)}{\pi} + \int h^*(s,w)K(t,s\mid w)ds,
\end{equation*}
where 
\begin{align*}
    K(t,s\mid w) & = - \frac{1-\pi}{\pi}f_{T\mid W}(s\mid w)\int_s^\infty Q(c,w)\left(\mathbbm{1}(c\geq t)  - F_{T\mid W}(c\mid w)\right) dc \\
    &= - \frac{1-\pi}{\pi}f_{T\mid W}(s\mid w)\int_0^\infty \frac{\mathbbm{1}(c>s) g(c\mid w)}{F_{T\mid W}(c|w) (1 - F_{T\mid W}(c|w))}\left(\mathbbm{1}(c\geq t)  - F_{T\mid W}(c\mid w)\right) dc \\
    &= \frac{1-\pi}{\pi}f_{T\mid W}(s\mid w) \left\{ \int_0^t  \frac{\mathbbm{1}(c>s) g(c\mid w)}{ (1 - F_{T\mid W}(c|w))}dc  - \int_t^\infty \frac{\mathbbm{1}(c>s) g(c\mid w)}{F_{T\mid W}(c|w))} dc\right\}.
\end{align*}
We see that the integral limit of $\int h^*(s,w)K(t,s\mid w)ds$ does not involve $t$, and therefore this is a multi-dimensional Fredholm equation of second kind (since $w$ can be multi-dimensional). Note that any such solution will automatically satisfy $$E_P[h^*(T,W)\mid W] = 0.$$ To see this,  we integrate both sides with respect to $f_{T\mid W}$ over $t$ and obtain
\begin{align*}
    \int h^*(t,w)f_{T\mid W}(t\mid w)dt & = \int \left[\int h^*(s,w)K(t,s \mid w)ds\right]f_{T\mid W}(t\mid w)dt\\
    & = \int h^*(s,w)\left[\int K(t,s \mid w)f_{T\mid W}(t\mid w)dt\right]ds,
\end{align*}
where we change the order of integration, and the inner integral can be explicitly written out as
\begin{align*}
    \int K(t,s \mid w)f_{T\mid W}(t\mid w)dt =   \frac{\pi-1}{\pi}f_{T\mid W}(s\mid w) \int_s^\infty Q(c,w) \left[\int \left(\mathbbm{1}(c \geq t) - F_{T\mid W}(c \mid w)\right)f_{T\mid W}(t\mid w)dt\right]dc.
\end{align*}
Computing the bracket, we get
\begin{align*}
    \int \left(\mathbbm{1}(c \geq t) - F_{T\mid W}(c \mid w)\right)f_{T\mid W}(t\mid w)dt &= F_{T\mid W}(c\mid w) - F_{T\mid W}(c\mid w) =0.
\end{align*}

To solve this integral equation, we can solve it for any given value $w$ of $W$. Specifically, we may consider using a dense grid of points $\{u_j: j=1,\ldots,B\}$ for some large number $B$ that adequately covers the support of $T$ and $C$. Then, we can solve for $h^*(u_j,w)$ and then interpolate between the grid point using, for example, nearest neighbor. Specifically, we approximate all integrals with discrete sum and get
\begin{equation*}
    L_j(w):= \pi h^*(u_j,w)  + (1-\pi)  \sum_{u_k > t} \frac{H^*(u_k,w) \{G(u_{k}|w) - G(u_{k-1}|w)\} }{F_{T|W}(u_k|w)(1-F_{T|W}(u_k|w))} - \gamma(w) - I\{u_j>t^*\} + \mu(w) = 0, 
\end{equation*}
with 
\begin{align*}
    H^*(u_k,w) &= \sum_{u_j < u_k} h^*(u_j,w)\{F_{T|W}(u_j|w) - F_{T|W}(u_{j-1}|w)\}; \\
    \gamma(w) &= (1-\pi) \sum_{u_j} \sum_{u_k > u_j} \frac{H^*(u_k,w) \{G(u_{k}|w) - G(u_{k-1}|w)\} }{F_{T|W}(u_k|w)(1-F_{T|W}(u_k|w))}\{F_{T|W}(u_j|w) - F_{T|W}(u_{j-1}|w)\}.
\end{align*}
This defines a system of linear equations in $h^*(u_j,w)$ for each value $w$.

Solving the above linear system for every $w$ value can be computationally intensive, especially if we use a very dense grid to reduce the approximation error and there are many distinct values $w$. Alternatively, for each $w$ value, we may consider a basis approximation of $h^*(t,w)$. Note that the solution $h^*$ must be discontinuous at $t^*$. To adequately capture this discontinuity, we use basis functions interacted with the indicator $I\{t>t^*\}$. Specifically, let $\{b_j(t):j=1,\ldots,\infty\}$ be a set of basis functions in $t$. We may approximate $h^*(t,w)$ with $\sum_{j=1}^J \alpha_j(w)b_j(t)I\{t>t^*\}+\sum_{j=1}^J \gamma_j(w)b_j(t)I\{t\leq t^*\}$ and solve for $\alpha_j(w)$ and $\gamma_j(w)$ by minimizing $\sum_j L_j^2(w)$. This is done for each value $w.$

\sloppy To expedite the computation even further, we can consider the following approximation with a set of tensor-product bases, $h^*(t,w) \approx \sum_{k=1}^K \sum_{j=1}^J \alpha_{jk}\beta_k(w)b_j(t)I\{t>t^*\}+\sum_{k=1}^K \sum_{j=1}^J \gamma_{jk}\beta_k(w)b_j(t)I\{t\leq t^*\}$, where $\{\beta_k(w):k=1,\ldots,\infty\}$ is a set of bases in $w$. We can then solve for the coefficients $\alpha$ and $\gamma$ by minimizing $\sum_k \sum_j L_j^2(w_k)$ with the grid in $t$ and a set of covariate values $\{w_k, 1\leq k \leq B_W\}$. In practice, the set of covariate values can be chosen as the set of observed covariate values.

\section{Inference under covariate shift}\label{app: extensions}
Our exchangeability condition in Assumption~\ref{cond:exchangeability} is two-folded: it imposes (1) that the conditional distributions of the event time given covariates are identical between the right-censored data and the current status data and (2) that the marginal distributions of the covariates are also the same between the two datasets. While the first requirement is often crucial for successfully integrating both sets of data for more efficient estimation and inference, the second requirement can be restrictive. In fact, the study populations from which the two sets of data are collected may differ, for example, if the current status data arise from an observation setting and the right-censored data are collected through careful longitudinal follow-up of a defined cohort. In this case, our second requirement in Assumption~\ref{cond:exchangeability} may be violated.

In this section, we relax the exchangeability condition to the conditional exchangeability condition defined below.
\begin{assumption}[Conditional exchangeability:]\label{cond:conditional exchangeability}
    $T \perp S \mid W$. 
\end{assumption}
\noindent Specifically, this condition assumes that the conditional distributions of the event time $T$ given covariate $W$ are the same for $S=1$ and $S=0$ but places no restrictions on the covariate distributions in the two datasets. Under Assumptions~\ref{cond:uninformative inspection} to \ref{cond:conditional exchangeability}, the distribution of the observed data unit $X = (S,W,SY,S\Delta_R,(1-S)C,(1-S)\Delta_C)$ is uniquely determined by $\Pi$, $P_{T|W}$, $P_{C|W}$ and $P_{R|W}$ introduced in Sections~\ref{sec: fully observed and current status} and \ref{sec: right censored and current status}, as well as two covariate distributions $P_{0,W}$ and $P_{1,W}$. Here, $P_{0,W}$ and $P_{1,W}$ denote the conditional distributions of $W$ given $S=0$ and $S=1$, respectively. Compactly, $P = \mathbb{Q}_2(\Pi, P_{0,W}, P_{1,W} P_{T|W}, P_{C|W}, P_{R|W})$ for some functional $\mathbb{Q}_2$. Two estimands can be of interest, $\Phi_0(P) = \EE_{P_{0,W}}[P_{T|W}(T>t^*|W)]$ and $\Phi_1(P) = \EE_{P_{1,W}}[P_{T|W}(T>t^*|W)]$. The canonical gradients of $\Phi_0$ and $\Phi_1$ can be derived, based on which corresponding efficient estimators can be developed. We provide more details below but first introduce an additional technical assumption.
\begin{assumption}[Overlap]\label{cond:overlap}
    There exists some constant $c_0$ such that $c_0^{-1} \leq dP_{1,W}(w)/dP_{0,W}(w) \leq c_0$, where $dP_{i,W}(\cdot)$ denotes the Radon-Nikodyn derivative of $P_{i,W}$ with respect to some common dominating measure for $i \in \{0,1\}$. 
\end{assumption}
\noindent This condition assumes overlap between the covariate distributions under $S=0$ and $S=1$. For more discussion on this condition, we refer to the readers to \citet{li2023efficient} where the same condition is assumed.

Define the model $\calM_2 = \{\mathbb{Q}_2(\Pi, \widetilde P_{0,W}, \widetilde P_{1,W}, \widetilde P_{T|W}, \widetilde P_{C|W}, \widetilde P_{R|W})\}$ for known $\Pi$ and $(\widetilde P_{0,W}, \widetilde P_{1,W})$ satisfying Assumption~\ref{cond:overlap} but otherwise all distributions unrestricted. For simplicity, let $\phi_i = \Phi_i(P)$. The canonical gradients of $\Phi_0: \calM_2 \rightarrow \mathbb{R}$ and $\Phi_1: \calM_2 \rightarrow \mathbb{R}$ evaluated at $P$ with respective to the model $\calM_2$ are as follows.
\begin{lemma}\label{lemma: EIF covariate shift}
For $i \in \{0,1\}$, a valid gradient of $\Phi_i: \calM_2 \rightarrow \mathbb{R}$ with respect to model $\calM_2$ evaluated at $P$ is given by
\begin{align}\label{eq: IF covariate shift}
    \xi_{i,P}: x &\mapsto s\int_0^\infty \frac{h^*(u,w) - \EE_{T|W}[h^*(T,W)|T\geq u,W=w]}{\Gamma(u|w)}d\left\{\I{y \leq u,\delta_R =1} -\int_0^u \I{y \geq v} d\Lambda_{T|W}(v|w)\right\} \nonumber \\
    &\quad + \frac{(1-s)(\delta_C-F_{T|W}(c|w))}{F_{T|W}(c|w)(1-F_{T|W}(c|w))}\int_0^c f_{T|W}(u|w)h^*(u,w)du + \frac{\mathbbm{1}(s=i)}{\Pi(S=i)}\left\{\mu(w) - \phi_i\right\},
\end{align}
where $h^*$ is the unique solution to
\begin{equation}\label{eq: integral equation covariate shift}
    \pi \frac{dP_{1,W}(w)}{dP_{i,W}(w)} h^*(t,w)  + \frac{dP_{0,W}(w)}{dP_{i,W}(w)}\left\{ (1-\pi)  \int_t^\infty \frac{H^*(c,w) g(c|w)}{F_{T|W}(c|w)(1-F_{T|W}(c|w))} dc - \gamma(w)\right\} - I\{t>t^*\} + \mu(w) = 0,
\end{equation}
with $H^*$ and $\gamma$ defined in the same way as in Lemma~\ref{lemma: EIF FO and CS}.


\end{lemma}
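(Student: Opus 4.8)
The plan is to verify that $\xi_{i,P}$ in \eqref{eq: IF covariate shift} is a valid gradient of $\Phi_i$ by checking the defining identity $\frac{d}{d\epsilon}\Phi_i(P_\epsilon)\big|_{\epsilon=0} = \EE_P[\xi_{i,P}(X)\,g(X)]$ for every score $g$ in the tangent space of $\calM_2$ at $P$, and then confirming that \eqref{eq: integral equation covariate shift} admits a unique solution. First I would characterize the tangent space. Because $\calM_2$ differs from $\calM$ only in that the single covariate law $P_W$ is replaced by two unrestricted laws $P_{0,W}$ and $P_{1,W}$, the perturbation analysis underlying Lemma~\ref{lemma: tangent space FO and CS} shows this tangent space to be the orthogonal direct sum of (i) two covariate pieces $L_2^0(P_{0,W})$ and $L_2^0(P_{1,W})$, carried on the $\{S=0\}$ and $\{S=1\}$ strata respectively, (ii) a single shared event-time piece generated by $h\in L_2^0(P_{T|W})$ that enters both data sources through the coarsening map (this is where Assumption~\ref{cond:conditional exchangeability} is used), and (iii) inspection-time and censoring-time pieces generated by $L_2^0(P_{C|W})$ and $L_2^0(P_{R|W})$.

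The key structural observation is that the $s=1$ martingale-integral term and the $s=0$ current-status augmentation term in \eqref{eq: IF covariate shift} are \emph{identical in form} to those in the gradient $\tau_P$ of Lemma~\ref{lemma: IF RC and CS}; only the covariate-distribution term is new. Consequently, the verification that $\xi_{i,P}$ is orthogonal to the censoring score and correctly reproduces the vanishing derivative along the inspection-time score (for both of which $\frac{d}{d\epsilon}\Phi_i=0$) transfers essentially verbatim from the proof of Lemma~\ref{lemma: IF RC and CS}, since the coarsening-at-random and martingale arguments never involve the covariate law. This reduces the work to the two covariate scores and the shared $T\mid W$ score.

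For a covariate score $g_{i,W}\in L_2^0(P_{i,W})$, with observed-data representation $\I{S=i}g_{i,W}(W)$, the derivative is $\EE_{P_{i,W}}[g_{i,W}(W)\mu(W)]$, and since $\EE_P\!\big[\tfrac{\I{S=i}}{\Pi(S=i)}(\mu(W)-\phi_i)g_{i,W}(W)\big] = \EE_{P_{i,W}}[(\mu(W)-\phi_i)g_{i,W}(W)] = \EE_{P_{i,W}}[\mu(W)g_{i,W}(W)]$ by mean-zeroness, the final term of \eqref{eq: IF covariate shift} supplies exactly this contribution, while the conditional mean-zero property of the other two terms (given $W$ and source) makes their inner products vanish; for the complementary covariate score $g_{j,W}$ with $j\neq i$ the derivative is zero and the indicator $\I{S=i}$ forces the inner product to zero as well. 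The crux is the shared event-time score: perturbing $P_{T|W}$ along $h$ gives $\frac{d}{d\epsilon}\Phi_i = \EE_{P_{i,W}}\!\big[\EE_{T|W}[\I{T>t^*}\,h(T,W)\mid W]\big]$, and I would expand $\EE_P[\xi_{i,P}\,\mathrm{score}(h)]$, collecting the $h^*$-dependent contributions from both sources. The right-censored contribution is an expectation carried by $\pi\,dP_{1,W}$ and the current-status contribution by $(1-\pi)\,dP_{0,W}$, whereas the target is expressed relative to $P_{i,W}$; re-expressing both relative to the common reference $P_{i,W}$ is precisely what introduces the Radon–Nikodym factors $dP_{1,W}/dP_{i,W}$ and $dP_{0,W}/dP_{i,W}$. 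Requiring the resulting identity to hold for every $h\in L_2^0(P_{T|W})$ yields, as its pointwise first-order condition, exactly equation \eqref{eq: integral equation covariate shift}.

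The main obstacle I anticipate is this last matching step together with establishing existence and uniqueness of $h^*$. As in Section~\ref{sec: fully observed and current status}, I would recast \eqref{eq: integral equation covariate shift} as a Fredholm equation of the second kind whose kernel now carries the density ratios; Assumption~\ref{cond:overlap} guarantees $c_0^{-1}\le dP_{1,W}/dP_{i,W},\,dP_{0,W}/dP_{i,W}\le c_0$, so the kernel remains square-integrable and the existence argument of Lemma~\ref{lemma: EIF FO and CS} carries over, with uniqueness following once $h^*$ is identified as the relevant projection. The remaining care is purely bookkeeping: confirming $\xi_{i,P}\in L_2^0(P)$ and tracking the transfer of the event-time perturbation from the fully observed source to the right-censored martingale integral under the two distinct covariate laws, neither of which requires a new idea beyond those already used in Lemma~\ref{lemma: IF RC and CS}.
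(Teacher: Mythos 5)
Your proposal is correct and follows essentially the same route as the paper: the covariate-score matching that yields the $\mathbbm{1}(s=i)\{\mu(w)-\phi_i\}/\Pi(S=i)$ term and the shared $T\mid W$-score matching that forces the density-ratio-weighted integral equation \eqref{eq: integral equation covariate shift} are exactly the paper's computations, and the censoring and inspection pieces are handled by the same coarsening-at-random and martingale facts already used for Lemma~\ref{lemma: IF RC and CS}. The only organizational difference is that the paper first derives the canonical gradient in the ideal fully-observed model $\calM_2^I$ and then lifts it to the observed-data model via the CAR mapping, whereas you verify the gradient identity directly against the observed-data tangent space; both yield the same first-order condition and the same Fredholm-type existence and uniqueness argument for $h^*$ under Assumption~\ref{cond:overlap}.
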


Compared to the gradient we derived in Lemma~\ref{lemma: IF RC and CS}, there are two key differences. First, the integral equation that defines $h^*$ now includes the density ratios of $W$, $dP_{1,W}(w)/dP_{i,W}(w)$ and $dP_{0,W}(w)/dP_{i,W}(w)$, one of which is 1. Second, only observations with $S=i$ contribute to the second term $\mu(W) - \phi_i$, which is not surprising given that we marginalize over $P_{i,W}$ in defining the estimand. 

Based on the derived gradient, we propose the following estimator for $\phi_j$, $j \in \{0,1\}$
\begin{multline*}
    \widehat{\phi}_j = \frac{1}{n}\sum_{i=1}^n \Bigg[\frac{\mathbbm{1}(S_i=j)}{\Pi(S=j)}\widehat\mu(W_i)+  \frac{(1-S_i)(\Delta_{C,i}-\widehat F_{T|W}(C_i|W_i))}{\widehat F_{T|W}(C_i|W_i)(1-\widehat F_{T|W}(C_i|W_i))}\int_0^{C_i} \widehat h^*(u,W_i)d\widehat F_{T|W}(u|W_i) +\\
    S_i \int_0^\infty \frac{\widehat h^*(u,W_i) - \widehat\EE_{T|W}[\widehat h^*(T,W)|T\geq u,W=W_i]}{\widehat\Gamma(u|W_i)}\left\{d\I{Y_i \leq u,\Delta_{R,i} =1} -\I{Y_i \geq u} d\widehat\Lambda_{T|W}(u|W_i)\right\}\Bigg],
\end{multline*}
with $\widehat h^*$ solving the equation analogous to \eqref{eq: integral equation covariate shift} but with all unknown nuisance functions replaced by their corresponding estimates. In addition to the nuisance estimates in Section~\ref{sec: one-step estimation}, we need to estimate the density ratio $dP_{0,W}(w)/dP_{1,W}(w)$. This can be done by estimating the two densities separately via, for example, kernel density estimation and then forming the ratio. Alternatively, by applying the Bayes rule, this can be reduced to estimating $P(S=1|W)$, and a variety of flexible statistical learning tools for classification problems can be applied. Wald-type confidence intervals can be used for statistical inference.

The canonical gradient of $\Phi_i$ can also be derived by projecting a valid gradient onto the observed data tangent space under covariate shift. This will involve an integral equation involving $\eta^*$. Here, a valid initial gradient can be obtained by the influence function of the estimator that uses only the right-censored data to estimate the conditional distribution of the event time given covariates and uses the dataset with $S=i$ for the marginal distribution of the covariates.

Finally, one may also define an estimand by marginalizing the conditional survival probability over the covariate distribution in an external target population. Similar estimators can be formed given a sample of covariate vectors from this target distribution. We conjecture that the gradient will take a very similar form, except that (1) only observations from this external covariate sample contribute the term $\mu(W) - \phi$; and (2) in the definition of $h^*$ (or $\eta^*$), all density ratios have the density function of the external covariate distribution as the denominator. 

\section{Proof of lemmas and theorems}\label{app: proof}

\subsection{Proof of lemmas in Section~\ref{sec: fully observed and current status} and Section~\ref{sec: right censored and current status}}
\begin{proof}[Proof of Lemma~\ref{lemma: tangent space FO and CS}]
Recall that we have introduced an intermediate observation unit $X^I = (W,ST,(1-S)\Delta_C, (1-S)C)$ with distribution $P^I$. Under Assumptions~\ref{cond:exchangeability} and \ref{cond:uninformative inspection}, the density function of $P^I$, denoted as $p$, is given by
\begin{align*}
    p(x^I) &= \pi^s (1-\pi)^{1-s}p_W(w)f_{T|W}(t|w)^s \left\{g(c|w)\left(\int_0^c f_{T|W}(t|w)dt\right)^\delta \left(1-\int_0^c f_{T|W}(t|w)dt\right)^{1-\delta} \right\}^{1-s}.
\end{align*}
We recall that $\pi$ is the probability of $S=1$, $p_W$ is the density function of the covariate distribution, $f_{T|W}$ is the conditional density function of $T|W$ and $g$ is the conditional density function of $C|W$. Now consider a perturbed distribution with density function $p_\epsilon$ such that
\begin{align*}
   p_\epsilon(x^I) &= \pi^s (1-\pi)^{1-s}p_W(w)(1+\epsilon h_W(w))\left\{f_{T|W}(t|w)(1+\epsilon h(t,w))\right\}^s \left\{g(c|w)(1+\epsilon h_C(c,w))\right\}^{1-s}\\
   &\quad \times \left\{\left(\int_0^c f_{T|W}(t|w)(1+\epsilon h(t,w))dt\right)^{\delta_C} \left(1-\int_0^c f_{T|W}(t|w)(1+\epsilon h(t,w))dt\right)^{1-\delta_C} \right\}^{1-s},
\end{align*}
for some functions $h_W \in L_2^0(P_W)$, $h(t,w) \in L_2^0(P_{T|W})$ and $h_C(c,w) \in L_2^0(P_{C|W})$. The log density is given by
\begin{align*}
    \log p_\epsilon(x) &= \log p_W(w)+ \log(1+\epsilon h_W(w)) + s\log f(t|w) + s\log (1+ \epsilon h(t,w)) \\
    &\quad + (1-s)\log g(c|w) + (1-s)\log (1+ h_C(c,w)) \\
    &\quad + (1-s)\delta_C \log \int_0^c f(t|w)(1+\epsilon h(t,w))dt \\
    &\quad + (1-s)(1-\delta_C) \log \left\{1-\int_0^c f(t|w)(1+\epsilon h(t,w))dt \right\}.
\end{align*}
Taking derivatives, we get the tangent space of the following form
\begin{align*}
    \calT^I &= \overline{\textnormal{span}}\Bigg\{h_W(w) + sh(t,w) + (1-s)h_C(c,w) + \frac{(1-s)(\delta_C-F_{T|W}(c|w))}{F_{T|W}(c|w)(1-F_{T|W}(c|w))}\int_0^c f(u|w)h(u,w)du, \\
    &\quad \quad \quad h_W \in L_2^0(P_W), h(t,w) \in L_2^0(P_{T|W}), h_C(c,w) \in L_2^0(P_{C|W})\Bigg\}.
\end{align*}
\end{proof}

\begin{proof}[Proof of Lemma~\ref{lemma: EIF FO and CS}]
To derive the canonical gradient of $\Psi$, we first note that the functional $\Psi$ does not depend on $P_{C|W}$. Therefore, for the purpose of deriving the canonical gradient, it suffices to consider a reduced model with $P_{C|W}$ fixed whose tangent space is as follows
\begin{align*}
    \widetilde\calT^I &= \Bigg\{h_W(w) + sh(t,w) + \frac{(1-s)(\delta_C -F_{T|W}(c|w))}{F_{T|W}(c|w)(1-F_{T|W}(c|w))}\int_0^c f(u|w)h(u,w)du, \\
    &\quad \quad \ h_W \in L_2^0(P_W), h(t,w) \in L_2^0(P_{T|W})\Bigg\}.
\end{align*}

The canonical gradient is an element in the tangent space such that its inner product with the score corresponding to any regular parametric submodel through $P^I$ is equal to the derivative of the parameter of interest along that submodel at $P^I$. First, we consider a submodel $\{P_\epsilon^I:\epsilon\}$ that only perturbs the marginal distribution of $W$ such that $p_{\epsilon,W}(w) = p_W(w)(1+\epsilon h_W(w))$. Recall that $\mu(w) = P_{T|W}(t>t^*|W=w)$ and let $\mu = E_{P_W}[\mu(W)] = S(t^*)$. The following holds for all $h_W \in L_2^0(P_W)$ for some $h_W^* \in L_2^0(P_W)$ and some $h^* \in L_2^0(P_{T|W})$:
\begin{align*}
    \frac{d\Psi(P_\epsilon)}{d\epsilon} &= \int \left( I\{t > t^* \} - S(t^*)\right) h_W(w) f(t|w) p_W(w) dtdw \\
    &= \int (\mu(w)-\mu) h_W(w)p_W(w)dw \\
    &= \EE_P\left[h_W(W)\left\{h_W^*(W) + Sh^*(T,W) + \frac{(1-S)(\Delta_C - F_{T|W}(C|W))}{F_{T|W}(C|W)(1-F_{T|W}(C|W))}\int_0^C f(u|W)h^*(u,W)du\right\}\right] \\
    &= E_{P_W}\left[h_W(W)h_W^*(W)\right] \\
    &= \int h_W(w)h_W^*(w)p_W(w)dw,
\end{align*}
where from the third to the fourth line, we used the fact that $\EE_{P_{T|W,S=1}}[h^*(T,W)|W=w] = 0$ for all $w$, and that conditional on $S=0$ and $(C,W)$, $\Delta_C$ has mean $F_{T|W}(C|W)$. As the above holds for all $h_W$, we must have
\begin{equation*}
    h_W^*(w) = \mu(w) - \mu.
\end{equation*}

Next, we consider parametric submodels that perturb $P_{T|W}$ and find $h^*(t,w)$. Specifically, we now consider a submodel $\{P_\epsilon^I:\epsilon\}$ such that the conditional density of $T$ given $W$ under $P_\epsilon^I$ is $f(t|w)(1+\epsilon h(t,w))$. For the ease of notation, define the function $H^*(c,w) = \int_0^c f(t|w)h^*(t,w)dt$.
\begin{align*}
    \frac{d\Psi(P_\epsilon)}{d\epsilon} &= \int \left( I\{t>t^*\} - \mu(w)\right) h(t,w) f(t|w)  p_W(w) dtdw \\
    &= \EE_P\Bigg[\left\{Sh^*(T,W) + \frac{(1-S)(\Delta_C - F_{T|W}(C|W))}{F_{T|W}(C|W)(1-F_{T|W}(C|W))}\int_0^C f(u|W)h^*(u,W)du\right\} \\
    &\quad \qquad \times \left\{Sh(T,W) + \frac{(1-S)(\Delta_C - F_{T|W}(C|W))}{F_{T|W}(C|W)(1-F_{T|W}(C|W))}\int_0^C f(u|W)h(u,W)du\right\}\Bigg] \\
    &= \EE_P[Sh(T,W)h^*(T,W)] \\
    &\quad + \EE_P\left[(1-S)\frac{(\Delta_C - F_{T|W}(C|W))^2}{F_{T|W}(C|W)^2(1-F_{T|W}(C|W))^2}\int_0^C f(v|W)h(v,W)dv\int_0^C f(u|W)h^*(u,W)du\right] \\
    &= \pi \int h(t,w)h^*(t,w) f(t|w)p_W(w)dtdw \\
    &\quad + (1-\pi)\int \frac{H^*(c,w)}{F_{T|W}(c|w)(1-F_{T|W}(c|w))} \left(\int_0^c f(t|w)h(t,w)dt \right) g(c|w) p_W(w) dcdw \\
    &= \pi \int h(t,w)h^*(t,w) f(t|w)p_W(w)dtdw \\
    &\quad + (1-\pi)\int \left( \int_t^\infty \frac{H^*(c,w) g(c|w)}{F_{T|W}(c|w)(1-F_{T|W}(c|w))} dc \right) h(t,w) f(t|w) p_W(w)\ dtdw \\
    &= \int \left\{ \pi h^*(t,w)  + (1-\pi)  \int_t^\infty \frac{H^*(c,w) g(c|w)}{F_{T|W}(c|w)(1-F_{T|W}(c|w))} dc \right\} h(t,w) f(t|w) p_W(w)\ dtdw
\end{align*}
Hence, the following holds for all $h(t,w)$ such that $\EE_{P_{T|W}}[h(T,W)|W=w] =0$,
\begin{multline*}
    \int \left\{ \pi h^*(t,w)  + (1-\pi)  \int_t^\infty \frac{H^*(c,w) g(c|w)}{F_{T|W}(c|w)(1-F_{T|W}(c|w))} dc - \gamma(w) - I\{t>t^*\} + \mu(w)\right\} \times \\
    h(t,w) f(t|w) p_W(w)\ dtdw = 0,
\end{multline*}
where
\begin{equation*}
    \gamma(w) = (1-\pi)  \int \int_t^\infty \frac{H^*(c,w) g(c|w)}{F_{T|W}(c|w)(1-F_{T|W}(c|w))} f(t|w)dcdt.
\end{equation*}
By construction, the function in the brackets has mean 0 given $W=w$ for all $w$. Therefore, we must have that, for almost every $(t,w)$,
\begin{equation*}
    \pi h^*(t,w)  + (1-\pi)  \int_t^\infty \frac{H^*(c,w) g(c|w)}{F_{T|W}(c|w)(1-F_{T|W}(c|w))} dc - \gamma(w) - I\{t>t^*\} + \mu(w) = 0.
\end{equation*}
The gives an integral equation in $t$ for every given value of $w$. Solving this equation, we obtain $h^*$ and get the canonical gradient
\begin{equation*}
    \tau_{P^I}: x^I \mapsto sh^*(t,w) + \frac{(1-s)(\delta_C - F_{T|W}(c|w))}{F_{T|W}(c|w)(1-F_{T|W}(c|w))}\int_0^c f(u|w)h^*(u,w)du + \mu(w) - \psi.
\end{equation*}
\end{proof}

\begin{proof}[Proof of Lemma~\ref{lemma: IF RC and CS}]
Recall that the observation unit is $X = (S,W,S\Delta_R,SY,(1-S)C,(1-S)\Delta_C)$. That is, for observations with $S=1$, the time-to-event outcome is subject to right censoring. In fact, $X$ can be regarded as a further coarsening of $X^I = (S, W,ST,(1-S)\Delta_C, (1-S)C)$, and this coarsening satisfies the coarsening-at-random (CAR) assumption. To see this, we note that for observations with $S=0$, $X=X^I$ as only current status information is available. Meanwhile, for $S=1$, under the conditional independent censoring in Assumption~\ref{cond:uninformative censoring}, CAR reduces to the CAR in the usual setting of time-to-event outcomes subject to right censoring.  

Therefore, applying the results in Example 1.12 of \citet{van2003unified} and results in \citet{van2007note}, a valid gradient relative to the observed data model is given by
\begin{align*}
    \tau: x &\mapsto s\int_0^\infty \frac{h^*(u,w) - \EE_{T|W}[h^*(T,W)|T\geq u,W=w]}{\Gamma(u|w)}d\left\{I\{y \leq u,\delta_R = 1\} -\int_0^u I\{y \geq v\} d\Lambda_{T|W}(v|w)\right\} \\
    &\quad + \frac{(1-s)(\delta_C - F_{T|W}(c|w))}{F_{T|W}(c|w)(1-F_{T|W}(c|w))}\int_0^c f(t|w)h^*(t,w)dt + \mu(w) - \phi,
\end{align*}
where $\Gamma(u|w) = P_{R|W}(R\geq u|W=w)$ and $\Lambda_{T|W}(t|w)$ is the cumulative hazard function at $t$ given $W=w$. A similar result is available in \citet{tsiatis2006semiparametric} (10.76), although in a different form involving the martingale associated with the censoring process.

Finally, we have the following equality
\begin{align*}
    \EE_{T|W}[h^*(T,W)|T\geq u,W=w] &= \frac{1}{P_{T|W}(T \geq u |W=w)} \EE_{T|W}\left[h^*(T,W)I\{T\geq u\}|W=w\right] \\
    &= \frac{1}{P_{T|W}(T \geq u |W=w)} \EE_{P|S=1}\left[\frac{\Delta_R}{\Gamma(Y|W)}h^*(Y,W)I\{Y\geq u\}|W=w\right].
\end{align*}
\end{proof}

Before proving the results in Lemma~\ref{lemma: EIF RC and CS}, we first present a lemma characterizing the observed data tangent space under the fusion model.
\begin{lemma}\label{lemma: tangent space observed}
The tangent space at $P$ reletive to the observed data model $\calM$ is given by
    \begin{align}
    \calT &= \overline{\textnormal{span}}\Bigg\{s \left\{\delta_R \eta(y,w) - \int_0^y \eta(t,w)\lambda_{T|W}(t|w) dt\right\} \nonumber + (1-s)\left\{\frac{\delta_C - F_{T|W}(c|w)}{F_{T|W}(c|w)}\right\}\int_0^c \eta(t,w)\lambda_{T|W}(t|w) dt \nonumber \\
    &\quad \quad \quad + s \left\{(1-\delta_R) \eta_R(y,w) - \int_0^y \eta_R(t,w)\lambda_{R|W}(t|w) dt\right\} + (1-s)h_C(c,w) + h_W(w), \nonumber \\
    &\quad \quad \quad h_W \in L_2^0(P_W), h_C(c,w) \in L_2^0(P_{C|W})\Bigg\}.
\end{align}
\end{lemma}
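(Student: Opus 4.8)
The plan is to obtain $\calT$ directly from the likelihood by perturbing each variationally independent component of $\calM$. Under Assumptions~\ref{cond:exchangeability}--\ref{cond:uninformative censoring}, the log-likelihood of a single observation (dropping the known factor from $\Pi$) splits into an $S=1$ piece and an $S=0$ piece,
\begin{align*}
\ell(x) &= \log p_W(w) + s\Big\{\delta_R\log\lambda_{T|W}(y|w) + \log S_{T|W}(y|w) + (1-\delta_R)\log\lambda_{R|W}(y|w) + \log\Gamma(y|w)\Big\} \\
&\quad + (1-s)\Big\{\log g(c|w) + \delta_C\log F_{T|W}(c|w) + (1-\delta_C)\log\big(1-F_{T|W}(c|w)\big)\Big\},
\end{align*}
so the free parameters are $P_W$, $P_{T|W}$, $P_{C|W}$, and $P_{R|W}$. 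I would introduce a one-dimensional regular submodel for each, read off the corresponding score, and note that, since these components are variationally independent, $\calT$ is the closure of the linear span of the union of the resulting score subspaces.

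The contributions from $P_W$ and $P_{C|W}$ are immediate: the multiplicative perturbations $p_{W,\epsilon}=p_W(1+\epsilon h_W)$ and $g_\epsilon = g(1+\epsilon h_C)$ give scores $h_W(w)$ and $(1-s)h_C(c,w)$ with $h_W\in L_2^0(P_W)$ and $h_C\in L_2^0(P_{C|W})$. For $P_{R|W}$ I would perturb the censoring hazard as $\lambda_{R|W,\epsilon}=\lambda_{R|W}(1+\epsilon\eta_R)$; because $R$ appears only in the $S=1$ piece, differentiating yields the standard right-censored martingale score $s\{(1-\delta_R)\eta_R(y,w) - \int_0^y\eta_R(t,w)\lambda_{R|W}(t|w)\,dt\}$, matching the third term of the claimed expression.

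The crux is the perturbation of $P_{T|W}$, which by the exchangeability in Assumption~\ref{cond:exchangeability} is a single shared component entering \emph{both} data types. Taking $\lambda_{T|W,\epsilon}=\lambda_{T|W}(1+\epsilon\eta)$, the $S=1$ piece differentiates, exactly as for $\eta_R$, to $\delta_R\eta(y,w) - \int_0^y\eta(t,w)\lambda_{T|W}(t|w)\,dt$. For the $S=0$ piece I would use $\frac{d}{d\epsilon}\big|_0 F_{T|W,\epsilon}(c|w) = S_{T|W}(c|w)\int_0^c\eta(t,w)\lambda_{T|W}(t|w)\,dt$ together with the identity $\frac{\delta_C}{F_{T|W}} - \frac{1-\delta_C}{1-F_{T|W}} = \frac{\delta_C - F_{T|W}}{F_{T|W}(1-F_{T|W})}$, which collapses the current-status score to $\frac{\delta_C-F_{T|W}(c|w)}{F_{T|W}(c|w)}\int_0^c\eta(t,w)\lambda_{T|W}(t|w)\,dt$. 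Because the same $\eta$ indexes both contributions, they cannot be chosen independently, and this coupling --- the main obstacle, and the feature that separates $\calM$ from an unconstrained two-sample model --- produces precisely the first two lines of $\calT$. I would close the argument by verifying that every such score is mean-zero and square-integrable under $P$, hence lies in $L_2^0(P)$, and that letting $(h_W,\eta,h_C,\eta_R)$ range over all admissible functions exhausts the tangent space; as a sanity check, setting $\Gamma(\cdot|w)\equiv 1$ reduces the $S=1$ term to the density score $\eta(y,w)-\int_0^y\eta\lambda_{T|W}$ and recovers Lemma~\ref{lemma: tangent space FO and CS}.
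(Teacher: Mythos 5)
Your proposal is correct and follows essentially the same route as the paper's proof: write the observed-data log-likelihood, perturb each of the four variationally independent components ($P_W$, $P_{T|W}$, $P_{C|W}$, $P_{R|W}$), and observe that the single shared hazard perturbation $\eta$ of $P_{T|W}$ simultaneously generates the right-censored martingale score and the current-status score $\{(\delta_C-F_{T|W})/F_{T|W}\}\int_0^c\eta\,\lambda_{T|W}\,dt$, which is exactly the coupling the paper emphasizes. The only cosmetic difference is that the paper uses the multiplicative hazard path $\lambda_{T|W}\exp(\epsilon\eta)$ rather than $\lambda_{T|W}(1+\epsilon\eta)$, which yields the identical score at $\epsilon=0$.
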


\begin{proof}[Proof of Lemma~\ref{lemma: tangent space observed}]
To derive the tangent space at the observed data distribution $P$ with respect to the observed data model, we start with the likelihood function of the observed data. Recall that we use $\Lambda_{R|W}$ and $\Lambda_{T|W}$ to denote the cumulative hazard function corresponding to the conditional distribution of the censoring time and the event time. Moreover, $\Lambda_{R|W}(t|w) = \int_0^t \lambda_{R|W}(u|w)du$ and $\Lambda_{T|W}(t|w) = \int_0^t \lambda_{T|W}(u|w)du$, where $\lambda_{R|W}$ and $\lambda_{T|W}$ are the corresponding hazard functions.

Under Assumptions~\ref{cond:exchangeability} to \ref{cond:uninformative censoring}, the likelihood function is given by
\begin{multline*}
    L(x) = p_W(w)\left\{\pi \Gamma(y|w)S_{T|W}(y|w)\lambda_{R|W}(y|w)^{1-\delta_R}\lambda_{T|W}(y|w)^{\delta_R}\right\}^s \times \\
    \left\{(1-\pi)g(c|w)F_{T|W}(c|w)^{\delta_C}S_{T|W}(c|w)^{1-\delta_C}\right\}^{1-s},
\end{multline*}
with log likelihood function
\begin{align*}
    \log L(x) &= \log p_w(w) + s\left\{\log\pi-\Lambda_{R|W}(y|w)-\Lambda_{T|W}(y|w)+\delta_R\log \lambda_{T|W}(y|w) + (1-\delta_R)\log \lambda_{R|W}(y|w)\right\} \\
    &\quad + (1-s)\left\{\log (1-\pi) + \log g(c|w) - (1-\delta_C)\Lambda_{T|W}(c|w) + \delta_c \log F_{T|W}(c|w)\right\}.
\end{align*}

First, we consider submodels that perturb the conditional distribution $T|W$ such that $\lambda_{\epsilon,T|W}(t|w) = \lambda_{T|W}(t|w)\exp(\epsilon \eta(t,w))$. Such a perturbation leads to the following score
\begin{align*}
    &\quad s \left\{\delta_R \eta(y,w) - \int_0^y \eta(t,w)\lambda_{T|W}(t|w) dt\right\} + (1-s)\left\{\delta_C \frac{S_{T|W}(c|w)}{F_{T|W}(c|w)}-(1-\delta_C)\right\}\int_0^c \eta(t,w)\lambda_{T|W}(t|w) dt \\
    &= s \left\{\delta_R \eta(y,w) - \int_0^y \eta(t,w)\lambda_{T|W}(t|w) dt\right\} + (1-s)\left\{\frac{\delta_C - F_{T|W}(c|w)}{F_{T|W}(c|w)}\right\}\int_0^c \eta(t,w)\lambda_{T|W}(t|w) dt.
\end{align*}
For notational convenience, we define the Martingale $M_T(t) = I\{Y\leq t, \Delta_R = 1\} - \int_0^t I\{Y\geq u\}\lambda_{T|W}(u|W)du$ and let $\tilde M_T(t)$ denote a realized version of it with $(Y,\Delta_R, W)$ replaced by $(y,\delta_R,w)$. Then the above score can be compactly written as
\begin{equation*}
    s\int \eta(t,w)d\tilde{M}_T(t)+ (1-s)\left\{\frac{\delta_C - F_{T|W}(c|w)}{F_{T|W}(c|w)}\right\}\int_0^c \eta(t,w)\lambda_{T|W}(t|w) dt.
\end{equation*}

Similarly, we consider perturbations to the conditional distribution $R|W$ such that $\lambda_{\epsilon,R|W}(t|w) = \lambda_{T|W}(t|w)\exp(\epsilon \eta_R(t,w))$. The corresponding score is as follows
\begin{equation*}
    s \left\{(1-\delta_R) \eta_R(y,w) - \int_0^y \eta_R(t,w)\lambda_{R|W}(t|w) dt\right\} = s\int \eta_R(t,w)d\tilde{M}_R(t),
\end{equation*}
where we define the censoring Martingale $M_R(t) = I\{Y\leq t, \Delta_R = 0\} - \int_0^t I\{Y\geq u\}\lambda_{R|W}(u|W)du$ and the corresponding $\tilde M_R(t)$ in a similar fashion.

The scores corresponding to perturbations to the conditional distribution $C|W$ and the marginal distribution of $W$ are the same as in Section~\ref{sec: fully observed and current status}. Thus, we obtain the observed data tangent space. Moreover, the scores generated by perturbing different component distributions to $P$ are orthogonal. See also Chapter 5 of \citet{tsiatis2006semiparametric}.
\end{proof}

\begin{proof}[Proof of Lemma~\ref{lemma: EIF RC and CS}]
Recall that with right-censored data and current status data, the augmented inverse probability of censoring weighted estimator using the right-censored data only is an RAL estimator, although it is in general not the most efficient one. Therefore, the following function is a valid gradient
\begin{align*}
    \tau_P^{\text{rc only}}:
    x &\mapsto \frac{s}{\pi} \int_0^\infty \frac{-S_{T|W}(t^*|w)I\{u \leq t^*\}}{\Gamma(u|w)S_{T|W}(u_{-}|w)} d\tilde{M}_T(u) + \frac{s}{\pi}(\mu(w)-\phi).
\end{align*}
The canonical gradient can be obtained by projecting $\tau_P^{\text{rc only}}$ onto the observed data tangent space $\calT$. Note that the second term is a function of $s$ and $w$ only, therefore we only need to project it onto the subspace $L_2^0(P_W)$. This projection is given by $\mu(w)-\phi$ due to the independence between $S$ and $W$. We thus focus on the projection of the first term.

Due to the presence of the indicator $I\{s=1\}$ (or equivalently the factor $s$) in this first term and its conditional mean given $W$ being 0, we only need to project it onto the subspace of scores generated by perturbing $R|W$ and $T|W$. Moreover, as the first term is an integral involving the event Martingale, it is already orthogonal to the censoring time score. This implies that the projection takes the following form
\begin{align*}
    &\quad s \left\{\delta_R \eta^*(y,w) - \int_0^y \eta^*(t,w)\lambda_{T|W}(t|w) dt\right\} \nonumber + (1-s)\left\{\frac{\delta_C - F_{T|W}(c|w)}{F_{T|W}(c|w)}\right\}\int_0^c \eta^*(t,w)\lambda_{T|W}(t|w) dt \\
    &= s\int \eta^*(t,w)d\tilde{M}_T(t)+ (1-s)\left\{\frac{\delta_C - F_{T|W}(c|w)}{F_{T|W}(c|w)}\right\}\int_0^c \eta^*(t,w)\lambda_{T|W}(t|w) dt.
\end{align*}
for some function $\eta^*$. Therefore, finding the projection reduces to finding the function $\eta^*$. 

For the convenience of notation, let us define a function $\eta^\dagger(u,w)$
\begin{equation*}
    \eta^\dagger(u,w) = \frac{-S_{T|W}(t^*|w)I\{u \leq t^*\}}{\pi\Gamma(u|w)S_{T|W}(u_{-}|w)}
\end{equation*}

By the definition of projection, we have the following holds for any function $\eta$:
\begin{align*}
    0 &= \EE\Bigg[\left\{S\int (\eta^*(t,W) - \eta^\dagger(t,W))dM_T(t)+ (1-S)\left\{\frac{\Delta_C - F_{T|W}(C|W)}{F_{T|W}(C|W)}\right\}\int_0^C \eta^*(t,W)\lambda_{T|W}(t|W) dt \right\} \times \\
    &\qquad \quad \left\{ S\int \eta(t,W)dM_T(t)+ (1-S)\left\{\frac{\Delta_C - F_{T|W}(C|W)}{F_{T|W}(C|W)}\right\}\int_0^C \eta(t,W)\lambda_{T|W}(t|W) dt\right\}\Bigg] \\
    &= \EE\Bigg[S\int (\eta^*(t,W) - \eta^\dagger(t,W))dM_T(t) \int \eta(t,W)dM_T(t) \Bigg] \\
    &\quad + \EE\Bigg[(1-S)\left\{\frac{\Delta_C - F_{T|W}(C|W)}{F_{T|W}(C|W)}\right\}^2\int_0^C \eta^*(t,W)\lambda_{T|W}(t|W) dt \int_0^C \eta(t,W)\lambda_{T|W}(t|W) dt \Bigg] \\
    &= \pi\EE\Bigg[\int (\eta^*(t,W) - \eta^\dagger(t,W))dM_T(t) \int \eta(t,W)dM_T(t) \Bigg] \\
    &\quad + (1-\pi)\EE\Bigg[\frac{S_{T|W}(C|W)}{F_{T|W}(C|W)}\int_0^C \eta^*(t,W)\lambda_{T|W}(t|W) dt \int_0^C \eta(t,W)\lambda_{T|W}(t|W) dt \Bigg] 
    &\intertext{The covariance of martingale stochastic integrals in the first term can be computed by deriving the expectation of the predictable covariation process \citep{fleming2013counting}. Continuing the above, we have}
    &= \pi\EE\Bigg[\int (\eta^*(t,W) - \eta^\dagger(t,W))\eta(t,W) Y(t)\lambda_{T\mid W}(t\mid W)dt \Bigg] \\
    &\quad + (1-\pi)\EE\Bigg[\frac{S_{T|W}(C|W)}{F_{T|W}(C|W)}\int_0^C \eta^*(t,W)\lambda_{T|W}(t|W) dt \int_0^C \eta(t,W)\lambda_{T|W}(t|W) dt \Bigg] \\
    &= \pi\int \int (\eta^*(t,W) - \eta^\dagger(t,W))\eta(t,w)S_{T|W}(t_{-}|w)\Gamma(t|w)\lambda_{T|W}(t|w)dt\ p(w)dw \\
    &\quad + (1-\pi)\int \int \frac{S_{T|W}(c|w)}{F_{T|W}(c|w)}\int_0^c \eta^*(u,w)\lambda_{T|W}(u|w) du \int_0^c \eta(t,w)\lambda_{T|W}(t|w) dt g(c|w)dc \ p(w)dw \\
    &= \pi\int \int (\eta^*(t,W) - \eta^\dagger(t,W))\eta(t,w)S_{T|W}(t_{-}|w)\Gamma(t|w)\lambda_{T|W}(t|w)dt\ p(w)dw \\
    &\quad + (1-\pi) \int \int \left(\int_t^\infty \frac{S_{T|W}(c|w)}{F_{T|W}(c|w)}\left\{\int_0^c \eta^*(u,w)\lambda_{T|W}(u|w) du\right\}g(c|w) dc\right) \eta(t,w)\lambda_{T|W}(t|w)  dt \ p(w)dw \\
    &= \pi\int \int (\eta^*(t,W) - \eta^\dagger(t,W))\eta(t,w)\Gamma(t|w)f_{T|W}(t|w)dt\ p(w)dw \\
    &\quad + (1-\pi) \int \int \left(\int_t^\infty \frac{S_{T|W}(c|w)}{S_{T|W}(t_{-}|w)F_{T|W}(c|w)}\left\{\int_0^c \eta^*(u,w)\lambda_{T|W}(u|w) du\right\}g(c|w) dc\right) \\
    &\qquad \qquad \qquad \qquad \qquad \qquad \times \eta(t,w)f_{T|W}(t|w)  dt \ p(w)dw \\
    &= \int\int \Bigg\{\pi(\eta^*(t,W) - \eta^\dagger(t,W))\Gamma(t|w) \\
    &\qquad + (1-\pi)\left(\int_t^\infty \frac{S_{T|W}(c|w)}{S_{T|W}(t_{-}|w)F_{T|W}(c|w)}\left\{\int_0^c \eta^*(u,w)\lambda_{T|W}(u|w) du\right\}g(c|w) dc\right)\Bigg\} \\
    &\qquad \qquad \qquad \qquad \qquad \qquad \times \eta(t,w)f_{T|W}(t|w)  dt \ p(w)dw.
\end{align*}
As $\eta$ is arbitrary, we can take it to be the function in the curly brackets and the integral becomes the second moment of this function which needs to be 0. This implies that the function in the curly brackets is 0 for almost every $(t,w)$.

We thus obtain the following equation
\begin{equation}\label{eq: integral equation new}
    \pi(\eta^*(t,w) - \eta^\dagger(t,W))\Gamma(t|w) + (1-\pi)\left(\int_t^\infty \frac{S_{T|W}(c|w)g(c|w)}{S_{T|W}(t_{-}|w)F_{T|W}(c|w)}\left\{\int_0^c \eta^*(u,w)\lambda_{T|W}(u|w) du\right\} dc\right) = 0.
\end{equation}
Substituting the expression for $\eta^\dagger$ back into \eqref{eq: integral equation new}, we get
\begin{equation*}
    \pi\eta^*(t,w)\Gamma(t|w)S_{T|W}(t_{-}|w) + S_{T|W}(t^*|w)I\{t \leq t^*\} + (1-\pi)\int_t^\infty \frac{S_{T|W}(c|w)\Theta^*(c,w)}{F_{T|W}(c|w)} g(c|w) dc = 0.
\end{equation*}
with
\begin{equation*}
    \Theta^*(c,w) = \int_0^c \eta^*(u,w)\lambda_{T|W}(u|w) du.
\end{equation*}
The EIF is therefore
\begin{equation*}
     x\mapsto s \left\{\delta_R \eta^*(y,w) - \int_0^y \eta^*(t,w)\lambda_{T|W}(t|w) dt\right\} \nonumber + (1-s)\left\{\frac{\delta_C - F_{T|W}(c|w)}{F_{T|W}(c|w)}\right\}\Theta^*(c,w) + \mu(w)-\phi.
\end{equation*}
Note that if the distribution of the event time is continuous, we have $S_{T|W}(t_{-}|w) = S_{T|W}(t|w)$.
\end{proof}

Finally, we note that the case where $T$ is fully observed for observations with $S=1$ can be regarded as a special case with $\Gamma(t|w) \equiv 1$ for all $(t,w)$. Then, equation \eqref{eq: integral equation with rc} becomes
\begin{equation}\label{eq: integral equation fo alt}
    \pi\eta^*(t,w)S_{T|W}(t|w) + S_{T|W}(t^*|w)I\{t \leq t^*\} + (1-\pi)\int_t^\infty \frac{S_{T|W}(c|w)\Theta^*(c,w)}{F_{T|W}(c|w)} g(c|w) dc = 0
\end{equation}
with
\begin{equation*}
    \Theta^*(c,w) = \int_0^c \eta^*(u,w)\lambda_{T|W}(u|w) du.
\end{equation*}
The corresponding EIF is therefore
\begin{equation*}
     s \left\{\eta^*(t,w) - \int_0^t \eta^*(u,w)\lambda_{T|W}(u|w) du\right\} \nonumber + (1-s)\left\{\frac{\delta_C - F_{T|W}(c|w)}{F_{T|W}(c|w)}\right\}\Theta^*(c,w) + \mu(w)-\phi.
\end{equation*}
This should result in the same EIF as the one presented in Lemma~\ref{lemma: EIF FO and CS}, and the following lemma shows that this is indeed the case.

\begin{lemma}\label{lemma: equivalent of integral equations}
    Let $h^*$ and $\eta^*$ be the solutions to equations \eqref{eq: integral equation} and \eqref{eq: integral equation fo alt} respectively. Then, the following holds for almost every $(t,w)$:
    \begin{align*}
        \eta^*(t,w) &= h^*(t,w) + \frac{H^*(t,w)}{S_{T|W}(t|w)} \\
        \Theta^*(t,w) &= \frac{H^*(t,w)}{S_{T|W}(t|w)}.
    \end{align*}
\end{lemma}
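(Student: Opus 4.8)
The plan is to start from the solution $h^*$ of \eqref{eq: integral equation}, define the candidate functions $\tilde\eta(t,w) := h^*(t,w) + H^*(t,w)/S_{T|W}(t|w)$ and $\tilde\Theta(c,w) := \int_0^c \tilde\eta(u,w)\lambda_{T|W}(u|w)\,du$, and then show that $(\tilde\eta,\tilde\Theta)$ solves the integral equation \eqref{eq: integral equation fo alt}. Since \eqref{eq: integral equation fo alt} is a Fredholm equation of the second kind whose solution is unique (being tied to the unique canonical gradient, exactly as for \eqref{eq: integral equation}), this identifies $\eta^* = \tilde\eta$ and $\Theta^* = \tilde\Theta$, which is the claim. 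Conceptually, the relation $\tilde\eta = h^* + H^*/S_{T|W}$ is precisely the correspondence between a mean-zero score along a density perturbation $f_{T|W}(1+\epsilon h^*)$ and the induced score along the corresponding hazard perturbation, using $\int_t^\infty f_{T|W}h^* = -H^*$ by the mean-zero property $H^*(\infty,w)=0$ satisfied by solutions of \eqref{eq: integral equation}; the verification below is, however, self-contained.

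Step A establishes $\tilde\Theta = H^*/S_{T|W}$ purely algebraically. Differentiating the target in $c$ gives $\partial_c\{H^*/S_{T|W}\} = (f_{T|W}h^* S_{T|W} + H^* f_{T|W})/S_{T|W}^2 = \lambda_{T|W}(h^* + H^*/S_{T|W}) = \lambda_{T|W}\tilde\eta = \partial_c \tilde\Theta$, using $\partial_c H^*=f_{T|W}h^*$, $\partial_c S_{T|W} = -f_{T|W}$, and $\lambda_{T|W}=f_{T|W}/S_{T|W}$. Since both sides vanish at $c=0$ (because $H^*(0,w)=0$, $S_{T|W}(0|w)=1$, and $\tilde\Theta(0,w)=0$), they coincide for all $c$. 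Hence $\Theta^* = H^*/S_{T|W}$ is an automatic consequence of the first relation.

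Step B verifies \eqref{eq: integral equation fo alt}. Substituting $\tilde\eta$ and $\tilde\Theta = H^*/S_{T|W}$ into its left-hand side and simplifying the last integrand via $S_{T|W}(c|w)\tilde\Theta(c,w)/F_{T|W}(c|w) = H^*(c,w)/F_{T|W}(c|w)$ yields
$$A(t,w) := \pi h^*(t,w)S_{T|W}(t|w) + \pi H^*(t,w) + S_{T|W}(t^*|w)\mathbbm{1}(t\le t^*) + (1-\pi)\int_t^\infty \frac{H^*(c,w)g(c|w)}{F_{T|W}(c|w)}\,dc.$$
I would show $A\equiv 0$ by a derivative-and-boundary argument. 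Differentiating \eqref{eq: integral equation} in $t$ (for $t\ne t^*$) gives $\pi\,\partial_t h^* = (1-\pi)H^* g/(F_{T|W}S_{T|W})$; substituting this into $\partial_t A$ (where the $-\pi h^* f_{T|W}$ and $+\pi f_{T|W} h^*$ contributions cancel) produces $A'(t,w) = (1-\pi)H^*g/F_{T|W} - (1-\pi)H^*g/F_{T|W} = 0$ almost everywhere. Thus $A(\cdot,w)$ is piecewise constant with a possible jump only at $t^*$. Reading \eqref{eq: integral equation} across $t^*$ shows $h^*$ has jump $h^*(t^{*+},w)-h^*(t^{*-},w)=1/\pi$; consequently the jump of $\pi h^* S_{T|W}$ at $t^*$ is $S_{T|W}(t^*|w)$, which exactly cancels the $-S_{T|W}(t^*|w)$ jump contributed by the indicator term, so $A(\cdot,w)$ is in fact constant in $t$. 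Finally, letting $t\to\infty$ and using $S_{T|W}(t|w)\to 0$, $H^*(\infty,w)=0$, and the compact support of $g$ from Assumption~\ref{cond:support of C} (so the tail integral vanishes once $t$ exceeds the support of $C$), each term of $A$ tends to $0$; hence the constant is $0$ and $A\equiv 0$.

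The main obstacle is Step B, and within it the bookkeeping at the discontinuity $t^*$: because $h^*$ is itself discontinuous there, one must verify that the induced jump in $\pi h^* S_{T|W}$ precisely offsets the jump of the indicator term before the ``constant in $t$'' conclusion can be combined with the $t\to\infty$ boundary value to pin the constant to zero. A purely algebraic alternative using the identity $1/(F_{T|W}S_{T|W}) = 1/F_{T|W} + 1/S_{T|W}$ is available, but it leaves an extra $\int_t^\infty H^* g/S_{T|W}$ term requiring further integration by parts, so the differentiation route above is cleaner.
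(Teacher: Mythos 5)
Your proof is correct. Note that the paper states this lemma at the end of Appendix C.1 without supplying any proof, so there is nothing to compare against; your argument fills that gap. The two delicate points are both handled properly: (i) Step A's identity $\partial_c\{H^*/S_{T|W}\}=\lambda_{T|W}\tilde\eta$ together with the common zero boundary value at $c=0$ gives $\tilde\Theta=H^*/S_{T|W}$; and (ii) in Step B, the jump of $\pi h^* S_{T|W}$ at $t^*$ (of size $S_{T|W}(t^*|w)$, coming from the jump $1/\pi$ of $h^*$ forced by the indicator in \eqref{eq: integral equation}) exactly cancels the $-S_{T|W}(t^*|w)$ jump of the term $S_{T|W}(t^*|w)\I{t\le t^*}$, so $A$ is globally constant, and the $t\to\infty$ limit pins the constant to zero using $H^*(\infty,w)=\EE_{P}[h^*(T,W)\mid W=w]=0$ (which the paper establishes in Appendix~\ref{app: solve equation} for any solution of the Fredholm equation) and the compact support of $g$. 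The appeal to uniqueness of the solution of \eqref{eq: integral equation fo alt} mirrors exactly the paper's own justification for uniqueness of $h^*$ (uniqueness of the canonical gradient / orthogonal projection), so that step is consistent with the paper's conventions. The only cosmetic remark is that you could shorten Step B by observing that, by Step A, $\pi\tilde\eta S_{T|W}=\pi h^*S_{T|W}+\pi H^*$ and then integrating the derivative identity directly, but the derivative-plus-boundary argument you give is complete as written.
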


\subsection{Proof of theorems in Section~\ref{sec: one-step estimation}}
To study the asymptotic properties of our proposed estimator, we will leverage an expansion of a general one-step estimator. Recall that our estimator can be written as $\widehat\phi = \Phi(\widehat P) + P_n \tau_{\widehat P}$, where we use $P_n Q$ to denote the empirical mean of a generic function $Q$. Furthermore, let $PQ$ denote $\EE_P[Q(X)]$ for a generic function $Q$ of $X$.
\begin{align*}
    \widehat\phi - \phi &= \Phi(\widehat P) + P_n \tau_{\widehat P} - \Phi(P) \\
    &= P_n \tau_{\widehat P} - P_n \tau_{P} + P_n \tau_{P} + \Phi(\widehat P) - \Phi(P) \\
    &= (P_n - P)\tau_{P} + (P_n - P) \left\{\tau_{\widehat P} - \tau_{P}\right\} + \{P\tau_{\widehat P} + \Phi(\widehat P) - \Phi(P)\}.
\end{align*}
The first term in the last line of the above display is asymptotically normal by the central limit theorem. The second term is a second-order term, which can be controlled with a Glivenko-Cantelli class or Donsker class theorem. The third term is a remainder term, and we introduce the following notation $\Rem(\widehat P, P) = P\tau_{\widehat P} + \Phi(\widehat P) - \Phi(P)$. Before proving our main theorems, we study $\Rem(\widehat P, P)$ more closely and show that it can be upper bounded using the estimation error of the nuisance functions.
\small
\begin{align*}
    &\Rem(\widehat P, P) \\
    &= \EE_P\left[S\int_0^\infty \frac{ \widehat h(u,W) - \widehat\EE[\widehat h(T,W)|T\geq u,W]}{\widehat\Gamma(u|W)}d\left\{I\{Y \leq u,\Delta_R =1\} -\int_0^u I\{Y \geq v\} d\widehat\Lambda_{T|W}(v|W)\right\} \nonumber\right] \\
    &\quad + \EE_P\left[\frac{(1-S)(\Delta_C- \widehat F_{T|W}(C|W))}{\widehat F_{T|W}(C|W)(1-\widehat F_{T|W}(C|W))}\int_0^C \widehat f_{T|W}(u|W)\widehat h(u,W)du\right] + \EE_P[\widehat\mu(W) - \mu(W)] \\
    &= \underbrace{\EE_P\left[S\int_0^\infty \frac{ \widehat h(u,W) - \widehat\EE[\widehat h(T,W)|T\geq u,W]}{\widehat\Gamma(u|W)}d\left\{I\{Y \leq u,\Delta_R =1\} -\int_0^u I\{Y \geq v\} d\widehat\Lambda_{T|W}(v|W)\right\} \nonumber\right] - \EE_P[S\widehat h(T,W)]}_{\textnormal{term A}}\\
    &\quad + \underbrace{\EE_P[S\widehat h(T,W)] + \EE_P\left[\frac{(1-S)(\Delta_C- \widehat F_{T|W}(C|W))}{\widehat F_{T|W}(C|W)(1-\widehat F_{T|W}(C|W))}\int_0^C \widehat f_{T|W}(u|W)\widehat h(u,W)du\right] + \EE_P[\widehat\mu(W) - \mu(W)]}_{\textnormal{term B}}. \\
\end{align*}

\normalsize 
We will study term A and term B separately. First, we focus on term B. Let $\widehat H(c,w) = \int_0^c \widehat h(t,w)\widehat f(t|w)dt$. We write the expectations as integrals and get
\begin{align*}
    \textnormal{term B} 
    &= \int \pi \widehat h(t,w)f(t|w)p_W(w)dtdw + (1-\pi)\int \frac{F_{T|W}(c|w)\int_0^c \widehat f(t|w)\widehat h(t,w)dt }{\widehat F_{T|W}(c|w)(1-\widehat F_{T|W}(c|w))}g(c|w)p_W(w)dcdw \\
    &\quad - (1-\pi)\int \frac{\widehat F_{T|W}(c|w)\int_0^c \widehat f(t|w)\widehat h(t,w)dt}{\widehat F_{T|W}(c|w)(1-\widehat F_{T|W}(c|w))} g(c|w)p_W(w)dcdw\\
    &\quad + \int \{\widehat \mu(w) - \mu(w)\} p_W(w)dw .
\end{align*}
First, we note that the second term on the right-hand side in the first line of the above display is equivalent to
\begin{align*}
    &\quad (1-\pi)\int \int_0^c \frac{ f(t|w)\widehat H(c,w) }{\widehat F_{T|W}(c|w)(1-\widehat F_{T|W}(c|w))}g(c|w)p_W(w) \ dtdcdw \\
    &= (1-\pi)\int \int_t^\infty \frac{ \widehat H(c,w)g(c|w) }{\widehat F_{T|W}(c|w)(1-\widehat F_{T|W}(c|w))}dc f(t|w) p_W(w) \ dtdw \\
    &= (1-\pi)\int \int_t^\infty \frac{ \widehat H(c,w)\widehat g(c|w) }{\widehat F_{T|W}(c|w)(1-\widehat F_{T|W}(c|w))}dc f(t|w) p_W(w) \ dtdw \\
    &\quad + (1-\pi)\int \int_t^\infty \frac{ \widehat H(c,w)\left\{g(c|w) - \widehat g(c|w)\right\} }{\widehat F_{T|W}(c|w)(1-\widehat F_{T|W}(c|w))}dc f(t|w) p_W(w) \ dtdw.
\end{align*}
Next, we note that $\widehat h$ satisies the following
\begin{equation*}
    \pi \widehat h(t,w)  + (1-\pi)  \int_t^\infty \frac{\widehat H(c,w) \widehat g(c|w)}{\widehat F_{T|W}(c|w)(1-\widehat F_{T|W}(c|w))} dc - \widehat\gamma(w) - I\{t>t^*\} + \widehat\mu(w) = 0.
\end{equation*}
with
\begin{equation*}
    \widehat\gamma(w) = (1-\pi)  \int \int_t^\infty \frac{\widehat H(c,w)\widehat g(c|w)}{\widehat F_{T|W}(c|w)(1-\widehat F_{T|W}(c|w))} \widehat f(t|w)dcdt.
\end{equation*}
Putting these together, we get that
\begin{align*}
    \textnormal{term B} &= \int \left\{\pi \widehat h(t,w) + (1-\pi) \int_t^\infty \frac{ \widehat H(c,w)\widehat g(c|w) }{\widehat F_{T|W}(c|w)(1-\widehat F_{T|W}(c|w))}dc \right\}f(t|w) p_W(w) \ dtdw \\
    &\quad + (1-\pi)\int \int_t^\infty \frac{ \widehat H(c,w)\left\{g(c|w) - \widehat g(c|w)\right\} }{\widehat F_{T|W}(c|w)(1-\widehat F_{T|W}(c|w))}dc f(t|w) p_W(w) \ dtdw \\
    &\quad - (1-\pi)\int \frac{\widehat F_{T|W}(c|w)\int_0^c \widehat f(t|w)\widehat h(t,w)dt}{\widehat F_{T|W}(c|w)(1-\widehat F_{T|W}(c|w))} g(c|w)p_W(w)dcdw\\
    &\quad + \int \{\widehat \mu(w) - \mu(w)\} p_W(w)dw \\
    &= \int \left\{\widehat\gamma(w) + I\{t>t^*\} - \widehat\mu(w) \right\}f(t|w) p_W(w) \ dtdw \\
    &\quad + (1-\pi)\int \int_t^\infty \frac{ \widehat H(c,w)\left\{g(c|w) - \widehat g(c|w)\right\} }{\widehat F_{T|W}(c|w)(1-\widehat F_{T|W}(c|w))}dc f(t|w) p_W(w) \ dtdw \\
    &\quad - (1-\pi)\int \frac{\widehat F_{T|W}(c|w)\int_0^c \widehat f(t|w)\widehat h(t,w)dt}{\widehat F_{T|W}(c|w)(1-\widehat F_{T|W}(c|w))} g(c|w)p_W(w)dcdw\\
    &\quad + \int \{\widehat \mu(w) - \mu(w)\} p_W(w)dw \\
    &= (1-\pi)  \int \int_t^\infty \frac{\widehat H(c,w)\widehat g(c|w)}{\widehat F_{T|W}(c|w)(1-\widehat F_{T|W}(c|w))} \widehat f(t|w)dcdt p_W(w) dw \\
    &\quad + (1-\pi)\int \int_t^\infty \frac{ \widehat H(c,w)\left\{g(c|w) - \widehat g(c|w)\right\} }{\widehat F_{T|W}(c|w)(1-\widehat F_{T|W}(c|w))}dc f(t|w) p_W(w) \ dtdw \\
    &\quad - (1-\pi)\int \frac{\widehat F_{T|W}(c|w)\widehat H(c,w)}{\widehat F_{T|W}(c|w)(1-\widehat F_{T|W}(c|w))} g(c|w)p_W(w)dcdw \\
    &= (1-\pi)  \int \frac{\widehat H(c,w)\widehat g(c|w)}{\widehat F_{T|W}(c|w)(1-\widehat F_{T|W}(c|w))} \widehat F_{T|W}(c|w) p_W(w) dc dw \\
    &\quad + (1-\pi)\int \frac{ \widehat H(c,w)\left\{g(c|w) - \widehat g(c|w)\right\} }{\widehat F_{T|W}(c|w)(1-\widehat F_{T|W}(c|w))}F_{T|W}(t|w) dc  p_W(w) dw \\
    &\quad - (1-\pi)\int \frac{\widehat F_{T|W}(c|w)\widehat H(c,w)}{\widehat F_{T|W}(c|w)(1-\widehat F_{T|W}(c|w))} g(c|w)p_W(w)dcdw \\
    &= (1-\pi)  \int \frac{\widehat H(c,w)\left\{\widehat g(c|w) - g(c|w)\right\}}{1-\widehat F_{T|W}(c|w)} \frac{\left\{\widehat F_{T|W}(c|w) - F_{T|W}(c|w)\right\}}{\widehat F_{T|W}(c|w)}p_W(w) dc dw.
\end{align*}

Next, we turn to term A whose expression is repeated here.
\begin{multline*}
    \EE_P\left[S\int_0^\infty \frac{ \widehat h(u,W) - \widehat\EE[\widehat h(T,W)|T\geq u,W]}{\widehat\Gamma(u|W)}d\left\{I\{Y \leq u,\Delta_R =1\} -\int_0^u I\{Y \geq v\} d\widehat\Lambda_{T|W}(v|W)\right\} \nonumber\right] \\
    - \EE_P[S\widehat h(T,W)].
\end{multline*}
Term A can be re-written using projections onto the space generated by the scores arising from the perturbations to $P_{R|W}$. We use $\Pi_{P}\{\cdot \mid \mathcal{A}\}$ to denote the $L_0^2(P)$-projection operator onto a subspace $\mathcal{A}$ of $L_0^2(P)$, and $\mathcal{T}_{CAR}$ to denote the nuisance tangent space generated by scores of sub-models perturbing the censoring mechanism. Then,
\begin{align*}
   \textnormal{term A} 
    & = \underbrace{\EE_P\left[S\left(\frac{\Delta_R}{\widehat{\Gamma}(T\mid W)} -\frac{\Delta_R}{\Gamma(T\mid W)} \right) \widehat{h}(T,W) \right]}_{\textnormal{(term A-I)}} - \underbrace{\EE_P\left[S\Pi_{\widehat{P}}\left\{ \frac{\Delta_R}{\widehat{\Gamma}(T\mid W)}\widehat{h}(T,W)\mid \mathcal{T_{CAR}}\right\}\right]}_{\textnormal{(term A-II)}}, 
\end{align*}
where we use the fact that under conditional independent censoring, we have $\EE_P[S\widehat h(T,W)] = \EE_P[S\Delta_R\widehat h(T,W)/\Gamma(T|W)]$. We now analyze the two terms separately.
\begin{align*}
    \textnormal{(term A-I)} &:=  \EE_P\left[S\left(\frac{\Delta_R}{\widehat{\Gamma}(T\mid W)} -\frac{\Delta_R}{\Gamma(T\mid W)} \right) \widehat{h}(T,W) \right]\\
    & = \EE_P\left[S \cdot \EE_P\left[\frac{\Delta_R}{\widehat{\Gamma}(T\mid W)} -\frac{\Delta_R}{\Gamma(T\mid W)} \mid T,W,S\right] \widehat{h}(T,W) \right] \\
    & = \EE_P\left[S \cdot \left(\frac{\widehat{\Gamma}(T\mid W)}{\widehat{\Gamma}(T\mid W)} -1 \right) \widehat{h}(T,W) \right] \\
    &  = - \EE_P\left[S \cdot \int_0^T \frac{\Gamma(u\mid W)}{\widehat{\Gamma}(u\mid W)} \left(d\Lambda_R(u\mid W) - d\widehat{\Lambda}_R(u\mid W)\right) \widehat{h}(T,W) \right]\\
    & = - \EE_P\left[S \cdot \int \int_0^t \frac{\Gamma(u\mid W)}{\widehat{\Gamma}(u\mid W)} \left(d\Lambda_R(u\mid W) - d\widehat{\Lambda}_R(u\mid W)\right) \widehat{h}(t,W) f_{T\mid W}(t\mid W) dt\right]\\
    & = - \EE_P\left[S \cdot \int \int \mathbbm{1}(t \geq u) \widehat{h}(t,W) f_{T\mid W}(t\mid W) dt \frac{\Gamma(u\mid W)}{\widehat{\Gamma}(u\mid W)} \left(d\Lambda_R(u\mid W) - d\widehat{\Lambda}_R(u\mid W)\right) \right]\\
    & = - \EE_P\left[S \cdot \int S(u\mid W) \EE_{T\mid W}\left[\widehat{h}(T,W) \mid T \geq u, W\right] \frac{\Gamma(u\mid W)}{\widehat{\Gamma}(u\mid W)} \left(d\Lambda_R(u\mid W) - d\widehat{\Lambda}_R(u\mid W)\right) \right],
\end{align*}
where we denote the conditional cumulative hazard of the censoring mechanism at time $u$ as ${\Lambda}_R(u\mid w)$ and its corresponding estimate as $\widehat{\Lambda}_R(u\mid w)$.

Next we study term A-II. By Theorem 1.1 of \cite{van2003unified}, we have that for any $v(X) \in L_0^2(P)$, the projection operator onto $\mathcal{T}_{CAR}$ takes the form of 
\begin{align*}
    \Pi_P (v\mid \mathcal{T}_{CAR})(x) &= \int_0^Y \left\{ \EE_P[v(X)\mid dA(u) = 1,w,s] - \EE_P[v(X)\mid dA(u) = 0, w,s] \right\}dM_G(u\mid w,s),
\end{align*}
\sloppy where $A(u) = \mathbbm{1}(R \leq u)$ is the indicator of censoring up until time $u$, and $dM_G(u\mid w,s) = \mathbbm{1}(R\in du, \Delta_R = 0) - \mathbbm{1}(Y\geq u) d\Lambda_R(u\mid w,s)$. It is straightforward to show that $\EE_{\widehat{P}}\left[\frac{\Delta_R}{\widehat{\Gamma}(T\mid W)}\widehat{h}(T,W)\mid dA(u) = 1,w,s\right] = 0$. Therefore, we have
\begin{align*}
    \Pi_{\widehat{P}}\left\{ \frac{\Delta_R}{\widehat{\Gamma}(T\mid W)}\widehat{h}(T,W)\mid \mathcal{T_{CAR}}\right\}(x) 
    & = - \int_0^Y  \EE_{\widehat{P}}\left[\frac{\Delta_R}{\widehat{G}(T\mid W)}\widehat{h}(T,W)\mid dA(u) = 0, w,s\right] d\widehat{M}_G(u\mid w,s).
\end{align*}
Note that, 
\begin{align*}
     \EE_{\widehat{P}}\left[\frac{\Delta_R}{\widehat{\Gamma}(T\mid W)}\widehat{h}(T,W)\mid dA(u) = 0, W,S\right]
     & =  \EE_{\widehat{P}}\left[\frac{\Delta}{\widehat{\Gamma}(T\mid W)}\widehat{h}(T,W)\mid T \geq u, R \geq u, W,S\right]\\
     & = \EE_{\widehat{P}}\left[\frac{\mathbbm{1}(T \leq R)}{\widehat{P}(R \geq T\mid W)}\widehat{h}(T,W)\mid T \geq u, R \geq u, W,S\right]\\
     & = \int_{u}^\infty \int_c^\infty \frac{\widehat{h}(t,W)}{\widehat{\Gamma}(t\mid W)} \frac{\widehat{f}_{T\mid W}(t\mid W) \widehat{g}(c\mid W)}{\widehat{S}(t\mid W)\widehat{\Gamma}(t\mid W)}dcdt\\
     & = \int_u^\infty \widehat{h}(t,W)\frac{\widehat{f}_{T\mid W}(t\mid W) }{\widehat{S}(t\mid W)\widehat{\Gamma}(t\mid W)}dt \\
     &  = \frac{\widehat \EE_{T\mid W}\left[\widehat{h}(T,W)\mid T \geq u, W\right]}{\widehat{\Gamma}(u\mid W)}.
\end{align*}
As a result, 
\begin{align*}
    (\text{A-II}) & :=- \EE_P\left[S\Pi_{\widehat{P}}\left\{ \frac{\Delta}{\widehat{\Gamma}(T\mid W)}\widehat{h}(T,W)\mid \mathcal{T_{CAR}}\right\}\right]\\
    & = - \EE_P\left[S\EE_P\left[  \Pi_{\widehat{P}}\left\{ \frac{\Delta}{\widehat{\Gamma}(T\mid W)}\widehat{h}(T,W)\mid \mathcal{T_{CAR}}\right\}(X)  \mid W,S\right] \right] \\
    & = \EE_P\left[S\int \frac{\Gamma(u\mid W)}{\widehat{\Gamma}(u\mid W)}S(u\mid W)\widehat \EE_{T\mid W}\left[\widehat{h}(T,W)\mid T \geq u, W\right] \left(d\Lambda_R(u\mid W) - d\widehat{\Lambda}_R(u\mid W) \right)\right], 
\end{align*}
where the last line is true since 
\begin{align*}
    & \EE_P\left[  \Pi_{\widehat{P}}\left\{ \frac{\Delta}{\widehat{\Gamma}(T\mid W)}\widehat{h}(T,W)\mid \mathcal{T_{CAR}}\right\}(X)  \mid W,S\right]\\
    & =  - \EE_P\left[ \int_0^Y \frac{\widehat \EE_{T\mid W}\left[\widehat{h}(T,W)\mid T \geq u, W\right]}{\widehat{\Gamma}(u\mid W)} d\widehat{M}_G(u\mid W,S) \mid W,S\right]\\
    & = - \int \frac{\Gamma(u\mid W)}{\widehat{\Gamma}(u\mid W)}S(u\mid W)\widehat \EE_{T\mid W}\left[\widehat{h}(T,W)\mid T \geq u, W\right] \left(d\Lambda_R(u\mid W) - d\widehat{\Lambda}_R(u\mid W) \right),
\end{align*}
since
\begin{align*}
    \EE_P\left[d\widehat{M}_G(u\mid W,S) \mid W, S\right] &  = S(u\mid W) \Gamma(u\mid W) \left(d\Lambda_R(u\mid W) - d\widehat{\Lambda}_R(u\mid W) \right).
\end{align*}
Summing up term (A-I) and (A-II), we have
\begin{align*}
    \text{(A-I)} + \text{(A-II)} & = - \EE_P\left[S \cdot \int S(u\mid W) \widehat \EE_{T\mid W}\left[\widehat{h}(T,W)\mid T \geq u, W\right] \frac{\Gamma(u\mid W)}{\widehat{\Gamma}(u\mid W)} \left(d\Lambda_R(u\mid W) - d\widehat{\Lambda}_R(u\mid W)\right) \right] \\
    & \quad  + \EE_P\left[S  \int \frac{\Gamma(u\mid W)}{\widehat{\Gamma}(u\mid W)}S(u\mid W)\widehat \EE_{T\mid W}\left[\widehat{h}(T,W)\mid T \geq u, W\right] \left(d\Lambda_R(u\mid W) - d\widehat{\Lambda}_R(u\mid W) \right)\right]\\
    & = \EE_P\Bigg[S  \int \frac{\Gamma(u\mid W)}{\widehat{\Gamma}(u\mid W)}S(u\mid W) \left(\widehat \EE_{T\mid W}\left[\widehat{h}(T,W)\mid T \geq u, W\right]- \EE_{T\mid W}\left[\widehat{h}(T,W)\mid T \geq u, W\right]\right) \\
    & \hspace{3em} \cdot \left(d\Lambda_R(u\mid W) - d\widehat{\Lambda}_R(u\mid W) \right)\Bigg].
\end{align*}
Equipped with the expansion of the remainder $\Rem(\widehat P, P)$, we are now ready to prove the theorems in Section~\ref{sec: one-step estimation}.

\begin{proof}[Proof of Theorem~\ref{thm:double robust phi}]
    We prove Theorem~\ref{thm:double robust phi} by showing 
    \begin{align}
        \phi_{t^*} & = E_{P}\bigg[\widehat \mu(W) +  \frac{(1-S)(\Delta_{C}-\widehat F_{T|W}(C|W))}{\widehat F_{T|W}(C|W)(1-\widehat F_{T|W}(C|W))}\int_0^{C} \widehat h^*(u,W)d\widehat F_{T|W}(u|W) \nonumber \\
    &\quad + S \int_0^\infty \frac{\widehat h^*(u,W) - \widehat\EE_{T|W}[\widehat h^*(T,W)|T\geq u,W]}{\widehat\Gamma(u|W)}d\left\{I\{Y \leq u,\Delta_{R} =1\} -\int_0^u I\{Y \geq v\} d\widehat\Lambda_{T|W}(v|W)\right\} \bigg],\label{eq:consistency}
    \end{align}
    if either $\widehat F_{T\mid W} = F_{T\mid W}$ (and therefore $\widehat\EE_{T|W}[\widehat h^*(T,W)|T\geq u,W] = \EE_{T|W}[\widehat h^*(T,W)|T\geq u,W]$), or \{$\widehat g = g $ and $\widehat \lambda = \lambda$\}. To show \eqref{eq:consistency}, it suffices to show
    \begin{align*}
        0 &  = E_{P}\bigg[\widehat \mu(W) - \mu(W) +  \frac{(1-S)(\Delta_{C}-\widehat F_{T|W}(C|W))}{\widehat F_{T|W}(C|W)(1-\widehat F_{T|W}(C|W))}\int_0^{C} \widehat h^*(u,W)d\widehat F_{T|W}(u|W) \nonumber \\
        &\quad + S \int_0^\infty \frac{\widehat h^*(u,W) - \widehat\EE_{T|W}[\widehat h^*(T,W)|T\geq u,W]}{\widehat\Gamma(u|W)}d\left\{I\{Y \leq u,\Delta_{R} =1\} -\int_0^u I\{Y \geq v\} d\widehat\Lambda_{T|W}(v|W)\right\} \bigg]\\
        \intertext{Note the right hand side of the above is exactly the remainder term $\Rem(\widehat P, P)$,}
        \mathrm{RHS} & = E_{P}\left[\tau_{\widehat P}(X)  + \widehat \phi_{t^*}  - \phi_{t^*}\right]\\
        & = \Rem(\widehat P, P)\\
        &  = \EE_P\Bigg[S  \int \frac{\Gamma(u\mid W)}{\widehat{\Gamma}(u\mid W)}S(u\mid W) \left(\widehat \EE_{T\mid W}\left[\widehat{h}(T,W)\mid T \geq u, W\right]- \EE_{T\mid W}\left[\widehat{h}(T,W)\mid T \geq u, W\right]\right) \\
        & \hspace{3em} \cdot \left(\lambda(u\mid W) - \widehat{\lambda}(u\mid W) \right)du\Bigg]  \\
        & \quad + (1-\pi)  \int \frac{\widehat H(c,w)\left\{\widehat g(c|w) - g(c|w)\right\}}{1-\widehat F_{T|W}(c|w)} \frac{\left\{\widehat F_{T|W}(c|w) - F_{T|W}(c|w)\right\}}{\widehat F_{T|W}(c|w)}p_W(w) dc dw.\\
        & = 0,
        \end{align*}
        if either $\widehat F_{T\mid W} = F_{T\mid W}$ (and therefore $\widehat\EE_{T|W}[\widehat h^*(T,W)|T\geq u,W] = \EE_{T|W}[\widehat h^*(T,W)|T\geq u,W]$), or \{$\widehat g = g $ and $\widehat \lambda = \lambda$\}. Hence, the double robustness in Theorem~\ref{thm:double robust phi} holds.
\end{proof}

\begin{proof}[Proof of Theorem~\ref{thm:efficiency}]
    As stated previously, we have 
    \begin{align*}
        \widehat \phi_{t^*} - \phi_{t^*} & = (P_n - P)\tau_P + (P_n - P)\{\tau_{\widehat P} - \tau_P\} + \Rem(\widehat P, P). 
    \end{align*}
    Of which, $(P_n - P)\{\tau_{\widehat P} - \tau_P\} = o_p(n^{-1/2})$ if $\widehat{g}$, $\widehat{F}_{T\mid W}$,  and $\widehat{\lambda}_{R\mid W}$ all belong to a fixed Donsker class $\mathcal{F}$ of functions with probability tending to one. Alternatively, if $\widehat{g}$, $\widehat{F}_{T\mid W}$,  and $\widehat{\lambda}$ are obtained via cross-fitting \citep{zheng2010asymptotic}, we also have $(P_n - P)\{\tau_{\widehat P} - \tau_P\} = o_p(n^{-1/2})$. Now let us examine the remainder term
    \begin{align*}
        & \Rem(\widehat P, P)\\
        &  = \EE_P\Bigg[S  \int \frac{\Gamma(u\mid W)}{\widehat{\Gamma}(u\mid W)}S(u\mid W) \left(\widehat \EE_{T\mid W}\left[\widehat{h}(T,W)\mid T \geq u, W\right]- \EE_{T\mid W}\left[\widehat{h}(T,W)\mid T \geq u, W\right]\right) \\
        & \hspace{3em} \cdot \left(\lambda_{R\mid W}(u\mid W) - \widehat{\lambda}_{R\mid W}(u\mid W) \right)du\Bigg]  \\
        & \quad + (1-\pi)  \int \frac{\widehat H(c,w)\left\{\widehat g(c|w) - g(c|w)\right\}}{1-\widehat F_{T|W}(c|w)} \frac{\left\{\widehat F_{T|W}(c|w) - F_{T|W}(c|w)\right\}}{\widehat F_{T|W}(c|w)}p_W(w) dc dw.\\
    \end{align*}
    Note under $\widehat{P}$, $\widehat{g}(u\mid w) = 0$ for any $u \notin [c_l,c_u]$. Then for  any $t > \max(t^*,c_u)$, it is straightforward to show that the unique solution to \eqref{eq: integral equation} has a closed form, that is, $h^*(t,w) = (\mathbbm{1}(t>t^*) - \mu(w))/\pi = (1 - \mu(w))/\pi$. Subsequently, for any $u > \max(t^*,c_u)$, we have
    \begin{align*}
        &\widehat{\EE}_{T\mid W}\left[\widehat{h}(T,W)\mid T \geq u, W\right] - \EE_{T\mid W}\left[\widehat{h}(T,W)\mid T \geq u, W\right] \\
        & = \widehat{\EE}_{T\mid W}\left[\frac{\mathbbm{1}(T>t^*) - \widehat{\mu}(W)}{\pi}\mid T \geq \max(t^*,c_u), W\right] - \EE_{T\mid W}\left[\frac{\mathbbm{1}(T>t^*) - \widehat{\mu}(W)}{\pi}\mid T \geq \max(t^*,c_u), W\right]\\
        & = \widehat{\EE}_{T\mid W}\left[\frac{1 - \widehat{\mu}(W)}{\pi}\mid T \geq \max(t^*,c_u), W\right] - \EE_{T\mid W}\left[\frac{1 - \widehat{\mu}(W)}{\pi}\mid T \geq \max(t^*,c_u), W\right]\\
        & = 0
    \end{align*}
    Hence term A reduces to,
    \begin{align*}
        & \EE_P\Bigg[S  \int \frac{\Gamma(u\mid W)}{\widehat{\Gamma}(u\mid W)}S(u\mid W) \left(\widehat \EE_{T\mid W}\left[\widehat{h}(T,W)\mid T \geq u, W\right]- \EE_{T\mid W}\left[\widehat{h}(T,W)\mid T \geq u, W\right]\right) \\
        & \hspace{3em} \cdot \left(\lambda_{R\mid W}(u\mid W) - \widehat{\lambda}_{R\mid W}(u\mid W) \right)du\Bigg]  \\
        &= \EE_P\Bigg[S  \int_0^{\max(c_i,t^*)} \frac{\Gamma(u\mid W)}{\widehat{\Gamma}(u\mid W)}S(u\mid W) \left(\widehat \EE_{T\mid W}\left[\widehat{h}(T,W)\mid T \geq u, W\right]- \EE_{T\mid W}\left[\widehat{h}(T,W)\mid T \geq u, W\right]\right) \\
        & \hspace{3em} \cdot \left(\lambda_{R\mid W}(u\mid W) - \widehat{\lambda}_{R\mid W}(u\mid W) \right)du\Bigg] 
        &\intertext{Since $\widehat\Gamma(\max(c_u,t^*)\mid W)>\epsilon$, by Cauchy Schwarz inequality, the above is bounded up to a multiplicative factor by }
        &  \int_0^{\max(c_i,t^*)} \lVert \left(\widehat \EE_{T\mid W}\left[\widehat{h}(T,W)\mid T \geq u, W=\cdot\right]- \EE_{T\mid W}\left[\widehat{h}(T,W)\mid T \geq u, W=\cdot\right]\right)\rVert_{L_2(P_W)}\\
        & \hspace{6em} \cdot \|\left(\lambda_{R\mid W}(u\mid \cdot) - \widehat{\lambda}_{R\mid W}(u\mid \cdot) \right)\|_{L_2(P_W)} du
        &\intertext{Under the conditions specified in Theorem~\ref{thm:efficiency}, the above term is }
        & = o_p(n^{-1/2}).
    \end{align*}
    Now we examine term $B$. Note again, $\widehat{g}(u\mid w) = 0$ for any $u \notin [c_l,c_u]$. Term $B$ thus reduces to,
    \begin{align*}
        & (1-\pi)  \int \frac{\widehat H(c,w)\left\{\widehat g(c|w) - g(c|w)\right\}}{1-\widehat F_{T|W}(c|w)} \frac{\left\{\widehat F_{T|W}(c|w) - F_{T|W}(c|w)\right\}}{\widehat F_{T|W}(c|w)}p_W(w) dc dw\\
        & = (1-\pi)  \int_0^{c_u} \frac{\widehat H(c,w)\left\{\widehat g(c|w) - g(c|w)\right\}}{1-\widehat F_{T|W}(c|w)} \frac{\left\{\widehat F_{T|W}(c|w) - F_{T|W}(c|w)\right\}}{\widehat F_{T|W}(c|w)}p_W(w) dc dw
        & \intertext{By Cauchy Schwarz inequality, the above is bounded up to a multiplicative factor by}
        &   \int_0^{c_u} \|\widehat g(c|\cdot) - g(c|\cdot) \|_{L_2(P_W)} \|\widehat F_{T|W}(c|\cdot) - F_{T|W}(c|\cdot)\|_{L_2(P_W)}  dc
        &\intertext{Under the conditions specified in Theorem~\ref{thm:efficiency}, the above term is }
        & = o_p(n^{-1/2}).
    \end{align*}
As a result, we have $\Rem(\widehat P,P) = o_p(n^{-1/2})$. And therefore, 
        \begin{align*}
        \widehat \phi_{t^*} - \phi_{t^*} & = (P_n - P)\tau_P + o_p(n^{-1/2}). 
    \end{align*}
\end{proof}

Before proving Theorem~\ref{thm:efficiency real}, we need to analyze a similar remainder term but defined with the canonical gradient $\tau_P\eff$. With a slight abuse of notation, we still denote it as $\textnormal{Rem}(\widehat P, P)$. In the following, all the expectations are taken with respect to the observation unit while treating all (estimated) nuisance functions as fixed. We have the remainder term in the asymptotic linear expansion of our one-step estimator
\begin{align*}
    \textnormal{Rem}(\widehat P, P) &= \EE_P\Bigg[S \left\{\Delta_R \widehat\eta^*(Y,W) - \int_0^Y \widehat\eta^*(t,W)\widehat\lambda_{T|W}(t|W) dt\right\} + \\
    &\quad \quad (1-S)\left\{\frac{\Delta_C - \widehat F_{T|W}(C|W)}{\widehat F_{T|W}(C|W)}\right\}\widehat\Theta^*(C,W) + \widehat\mu(W)-\mu(W)\Bigg].
\end{align*}
We first compute the expectation of the terms related to the right-censored data only. 
\begin{align*}
    &\quad \EE_P\left[S \left\{\Delta_R \widehat\eta^*(Y,W) - \int_0^Y \widehat\eta^*(t,w)\widehat\lambda_{T|W}(t|w) dt\right\}\right] \\
    &= \pi \EE_P[\EE_P[\Delta_R \mid T,W,S=1]\widehat\eta^*(T,W) \mid S=1] \\
    &\quad - \pi \EE_P\left[\int \EE[I\{Y \geq t\}|W,S=1]\widehat\eta^*(t,W)\widehat\lambda_{T|W}(t|W) dt \mid S=1\right] \\
    &= \pi \EE_P[\Gamma(T|W)\widehat\eta^*(T,W) \mid S=1] - \pi \EE_P\left[\int \Gamma(t|W)S_{T|W}(t|W)\widehat\eta^*(t,W)\widehat\lambda_{T|W}(t|W) dt \mid S=1\right] \\
    &= \pi \int \int \Gamma(t|w)S_{T|W}(t|w)\widehat\eta^*(t,w)\left\{\lambda_{T|W}(t|w) - \widehat\lambda_{T|W}(t|W)\right\} p_W(w) dtdw.
\end{align*}
At the same time, the expectation of the term related to the current status data only is
\begin{align*}
    &\quad \EE_P\left[(1-S)\left\{\frac{\Delta_C - \widehat F_{T|W}(C|W)}{\widehat F_{T|W}(C|W)}\right\}\widehat\Theta^*(C,W)\right] \\
    &= (1-\pi)\EE_P\left[\frac{F_{T|W}(C|W) - \widehat F_{T|W}(C|W)}{\widehat F_{T|W}(C|W)}\widehat\Theta^*(C,W) \mid S=0\right] \\
    &= (1-\pi)\int \int \frac{F_{T|W}(c|w) - \widehat F_{T|W}(c|w)}{\widehat F_{T|W}(c|w)}\widehat\Theta^*(c,w) g(c|w)p_W(w)dcdw.
\end{align*}
Thus, the remainder term is
\begin{align*}
    &\quad\textnormal{Rem}(\widehat P, P) \\
    &= \pi \int \int \Gamma(t|w)S_{T|W}(t|w)\widehat\eta^*(t,w)\left\{\lambda_{T|W}(t|w) - \widehat\lambda_{T|W}(t|w)\right\} p_W(w) dtdw \\
    &\quad + (1-\pi)\int \int \frac{F_{T|W}(c|w) - \widehat F_{T|W}(c|w)}{\widehat F_{T|W}(c|w)}\widehat\Theta^*(c,w) g(c|w)p_W(w)dcdw \\
    &\quad +\int \{\widehat\mu(w)-\mu(w)\}p_W(w)dw \\
    &= \pi \int \int \left\{\Gamma(t|w)S_{T|W}(t|w) - \widehat\Gamma(t|w)\widehat S_{T|W}(t|w)\right\}\widehat\eta^*(t,w)\left\{\lambda_{T|W}(t|w) - \widehat\lambda_{T|W}(t|w)\right\} p_W(w) dtdw \\
    &\quad + \pi \int \int \widehat\Gamma(t|w)\widehat S_{T|W}(t|w)\widehat\eta^*(t,w)\left\{\lambda_{T|W}(t|w) - \widehat\lambda_{T|W}(t|w)\right\} p_W(w) dtdw \\
    &\quad + (1-\pi)\int \int \frac{F_{T|W}(c|w) - \widehat F_{T|W}(c|w)}{\widehat F_{T|W}(c|w)}\widehat\Theta^*(c,w) g(c|w)p_W(w)dcdw \\
    &\quad +\int \{\widehat\mu(w)-\mu(w)\}p_W(w)dw \\
    &= \pi \int \int \left\{\Gamma(t|w)S_{T|W}(t|w) - \widehat\Gamma(t|w)\widehat S_{T|W}(t|w)\right\}\widehat\eta^*(t,w)\left\{\lambda_{T|W}(t|w) - \widehat\lambda_{T|W}(t|w)\right\} p_W(w) dtdw \\
    &\quad - \int \int  (1-\pi) \left\{ \int_t^\infty \frac{\widehat S_{T|W}(c|w)\widehat\Theta^*(c,w)}{\widehat F_{T|W}(c|w)} \widehat g(c|w) dc \right\}
    \left\{\lambda_{T|W}(t|w) - \widehat\lambda_{T|W}(t|w)\right\} p_W(w) dtdw \\
    &\quad + (1-\pi)\int \int \frac{F_{T|W}(c|w) - \widehat F_{T|W}(c|w)}{\widehat F_{T|W}(c|w)}\widehat\Theta^*(c,w) g(c|w)p_W(w)dcdw \\
    &\quad +\int \{\widehat\mu(w)-\mu(w)\}p_W(w)dw - \int \int  \widehat\mu(w)I\{t \leq t^*\}\left\{\lambda_{T|W}(t|w) - \widehat\lambda_{T|W}(t|w)\right\} p_W(w) dtdw \\
    &= \pi \int \int \left\{\Gamma(t|w)S_{T|W}(t|w) - \widehat\Gamma(t|w)\widehat S_{T|W}(t|w)\right\}\widehat\eta^*(t,w)\left\{\lambda_{T|W}(t|w) - \widehat\lambda_{T|W}(t|w)\right\} p_W(w) dtdw \\
    &\quad - \int \int  (1-\pi) \frac{\widehat S_{T|W}(c|w)\widehat \Theta^*(c,w)}{\widehat F_{T|W}(c|w)} \widehat g(c|w) \left\{\Lambda_{T|W}(c|w) - \widehat\Lambda_{T|W}(c|W)\right\} p_W(w) dcdw \\
    &\quad + (1-\pi)\int \int \frac{F_{T|W}(c|w) - \widehat F_{T|W}(c|w)}{\widehat F_{T|W}(c|w)}\widehat\Theta^*(c,w) g(c|w)p_W(w)dcdw \\
    &\quad +\int \{\widehat\mu(w)-\mu(w)\}p_W(w)dw - \int \int  \widehat\mu(w)I\{t \leq t^*\}\left\{\lambda_{T|W}(t|w) - \widehat\lambda_{T|W}(t|w)\right\} p_W(w) dtdw
\end{align*}
Further splitting the terms, we get
\begin{align*}
&\quad\textnormal{Rem}(\widehat P, P) \\
    &=\underbrace{\pi \int \int \left\{\Gamma(t|w)S_{T|W}(t|w) - \widehat\Gamma(t|w)\widehat S_{T|W}(t|w)\right\}\widehat\eta^*(t,w)\left\{\lambda_{T|W}(t|w) - \widehat\lambda_{T|W}(t|w)\right\} p_W(w) dtdw}_{\textnormal{term 1}} \\
    &\quad - \underbrace{(1-\pi) \int \int \frac{\widehat S_{T|W}(c|w)\widehat \Theta^*(c,w)}{\widehat F_{T|W}(c|w)} \left\{\widehat g(c|w) - g(c|w)\right\} \left\{\Lambda_{T|W}(c|w) - \widehat\Lambda_{T|W}(c|w)\right\} p_W(w) dcdw}_{\textnormal{term 2}} \\
    &\quad + \underbrace{(1-\pi)\int \int \frac{F_{T|W}(c|w) - \widehat F_{T|W}(c|w) - \widehat S_{T|W}(c|w)\left\{\Lambda_{T|W}(c|w) - \widehat\Lambda_{T|W}(c|w)\right\}}{\widehat F_{T|W}(c|w)}\widehat\Theta^*(c,w) g(c|w)p_W(w)dcdw}_{\textnormal{term 3}} \\
    &\quad +\underbrace{\int \left(\widehat\mu(w)-\mu(w) - \widehat\mu(w)\left\{\Lambda_{T|W}(t^*|w) - \widehat\Lambda_{T|W}(t^*|w)\right\}\right) p_W(w) dw}_{\textnormal{term 4}}.
\end{align*}

Both term 1 and term 2 are already second-order. We now study terms 3 and 4. Noting that $\Lambda = -\log S$, we re-write the integrand in term 4 and apply a Taylor expansion
\begin{align*}
    &\quad \widehat\mu(w)-\mu(w) + \widehat\mu(w)\left\{\log \mu(w) - \log \widehat \mu(w)\right\} \\
    &= \widehat\mu(w)-\mu(w) + \widehat\mu(w)\left\{\frac{1}{\widehat\mu(w)}(\mu(w) - \widehat \mu(w)) - \frac{1}{2\bar\mu(w)^2}(\mu(w) - \widehat \mu(w))^2\right\} \\
    &= - \frac{\widehat\mu(w)}{2\bar\mu(w)^2}(\mu(w) - \widehat \mu(w))^2
\end{align*}
for some value $\bar\mu(w)$ between $\mu(w)$ and $\widehat\mu(w)$. Hence, term 4 is equal to
\begin{equation*}
    - \int \frac{\widehat\mu(w)}{2\bar\mu(w)^2}(\mu(w) - \widehat \mu(w))^2 p_W(w) dw
\end{equation*}
which is second-order provided that both $\mu(w)$ and $\widehat\mu(w)$ are lower bounded away from 0.

We handle the numerator in the fraction in the integrand of term 3 in a similar fashion. Specifically,
\begin{align*}
    &\quad F_{T|W}(c|w) - \widehat F_{T|W}(c|w) + \widehat S_{T|W}(c|w)\left\{\log S_{T|W}(c|w) - \log \widehat S_{T|W}(c|w)\right\} \\
    &= F_{T|W}(c|w) - \widehat F_{T|W}(c|w) + \\
    &\quad + \widehat S_{T|W}(c|w)\left\{\frac{1}{\widehat S_{T|W}(c|w)} (S_{T|W}(c|w) - \widehat S_{T|W}(c|w)) - \frac{1}{2\bar S_{T|W}(c|w)^2}(S_{T|W}(c|w) - \widehat S_{T|W}(c|w))^2\right\} \\
    &= F_{T|W}(c|w) - \widehat F_{T|W}(c|w) + S_{T|W}(c|w) - \widehat S_{T|W}(c|w) - \frac{\widehat S_{T|W}(c|w)}{2\bar S_{T|W}(c|w)^2}(S_{T|W}(c|w) - \widehat S_{T|W}(c|w))^2 \\
    &= - \frac{\widehat S_{T|W}(c|w)}{2\bar S_{T|W}(c|w)^2}(S_{T|W}(c|w) - \widehat S_{T|W}(c|w))^2
\end{align*}
for some $\bar S_{T|W}(c|w)$ between $\widehat S_{T|W}(c|w)$ and $S_{T|W}(c|w)$. Thus, term 3 is proportional to
\begin{equation*}
    -\int \int \frac{\widehat S_{T|W}(c|w)(S_{T|W}(c|w) - \widehat S_{T|W}(c|w))^2}{2\bar S_{T|W}(c|w)^2\widehat F_{T|W}(c|w)}\widehat\Theta^*(c,w) g(c|w)p_W(w)dcdw
\end{equation*}
which is second-order provided that $S(c|w)$ is bounded away from 0 on the support of $C$ and both $\widehat F_{T|W}(c|w)$ and $\widehat S_{T|W}(c|w)$ are bounded away from 0.

\begin{proof}[Proof of Theorem~\ref{thm:efficiency real}]
To begin with, we point out that the nuisance functions involved in constructing $\widehat \phi^{\text{eff}}_{t^*}$ are as follows: $\widehat{\eta^*}$, $\widehat \lambda_{T\mid W}$, $\widehat F_{T\mid W}$, $\widehat \Gamma$, $\widehat S_{T\mid W}$, $\widehat g$ and $\widehat \mu$. Since the estimated nuisance functions must be compatible under $\widehat P$,  $\widehat \mu$, $\widehat F_{T\mid W}$ and $\widehat S_{T\mid W}$ are all dictated by $\widehat \lambda_{T\mid W}$. Subsequently, the list of nuisance functions reduces to $\widehat{\eta^*}$, $\widehat \lambda_{T\mid W}$, $\widehat \lambda_{R\mid W}$ and  $\widehat g$. We will later show that the estimation uncertainty that $\widehat{\eta^*}$ introduces can be fully accounted by the estimation errors of $\widehat \lambda_{T\mid W}$, $\widehat \lambda_{R\mid W}$ and  $\widehat g$. As stated previously, we have
    \begin{align*}
        \widehat \phi^{\text{eff}}_{t^*} - \phi_{t^*} & = (P_n - P)\tau^{\text{eff}}_P + (P_n - P)\{\tau^{\text{eff}}_{\widehat P} - \tau_P\} + \Rem(\widehat P, P). 
    \end{align*}
    We first examine the remainder term. Based on previous derivations, we have
    \begin{align*}
        & \text{Rem}(\widehat P, P) \\
        & = 
\pi \int \int \left\{\Gamma(t|w)S_{T|W}(t|w) - \widehat\Gamma(t|w)\widehat S_{T|W}(t|w)\right\}\widehat\eta^*(t,w)\left\{\lambda_{T|W}(t|w) - \widehat\lambda_{T|W}(t|w)\right\} p_W(w) dtdw \\
    &\quad - (1-\pi) \int \int \frac{\widehat S_{T|W}(c|w)\widehat \Theta^*(c,w)}{\widehat F_{T|W}(c|w)} \left\{\widehat g(c|w) - g(c|w)\right\} \left\{\Lambda_{T|W}(c|w) - \widehat\Lambda_{T|W}(c|w)\right\} p_W(w) dcdw\\
    &\quad + (1-\pi)\int \int \frac{F_{T|W}(c|w) - \widehat F_{T|W}(c|w) - \widehat S_{T|W}(c|w)\left\{\Lambda_{T|W}(c|w) - \widehat\Lambda_{T|W}(c|w)\right\}}{\widehat F_{T|W}(c|w)}\widehat\Theta^*(c,w) g(c|w)p_W(w)dcdw\\
    &\quad +\int \left(\widehat\mu(w)-\mu(w) - \widehat\mu(w)\left\{\Lambda_{T|W}(t^*|w) - \widehat\Lambda_{T|W}(t^*|w)\right\}\right) p_W(w) dw.
    \end{align*}
    It is straightforward to verify for any $t > \max (c_u, t^*)$, the unique solution to \eqref{eq: integral equation with rc} is $\eta^* = 0$. In addition, under $\widehat P$, $\widehat{g}(u\mid w) - g(u\mid w) = 0$ for any $u \notin \mathcal{C}$. Hence the remainder term reduces to,
    \begin{align*}
        & \text{Rem}(\widehat P, P) \\
        & = \pi \int \int_0^{\max(c_u, t^*)} \left\{\Gamma(t|w)S_{T|W}(t|w) - \widehat\Gamma(t|w)\widehat S_{T|W}(t|w)\right\}\widehat\eta^*(t,w)\left\{\lambda_{T|W}(t|w) - \widehat\lambda_{T|W}(t|w)\right\} p_W(w) dtdw \\
    &\quad - (1-\pi) \int \int_{c_l}^{c_u} \frac{\widehat S_{T|W}(c|w)\widehat \Theta^*(c,w)}{\widehat F_{T|W}(c|w)} \left\{\widehat g(c|w) - g(c|w)\right\} \left\{\Lambda_{T|W}(c|w) - \widehat\Lambda_{T|W}(c|w)\right\} p_W(w) dcdw\\
    &\quad + (1-\pi)\int \int_{c_l}^{c_u} \frac{F_{T|W}(c|w) - \widehat F_{T|W}(c|w) - \widehat S_{T|W}(c|w)\left\{\Lambda_{T|W}(c|w) - \widehat\Lambda_{T|W}(c|w)\right\}}{\widehat F_{T|W}(c|w)}\widehat\Theta^*(c,w) g(c|w)p_W(w)dcdw\\
    &\quad +\int \left(\widehat\mu(w)-\mu(w) - \widehat\mu(w)\left\{\Lambda_{T|W}(t^*|w) - \widehat\Lambda_{T|W}(t^*|w)\right\}\right) p_W(w) dw.
    \end{align*}
    By Cauchy Schwarz inequality, the first term is bounded up to a multiplicative factor by
    \begin{align*}
        \int_0^{\max(c_u, t^*)} \left\{\lVert  \Gamma(t\mid \cdot) - \widehat \Gamma(t\mid \cdot) \rVert_2 + \lVert  S_{T\mid W}(t\mid \cdot) - \widehat S_{T\mid W}(t\mid \cdot) \rVert_{L_2(P_W)}  \right\} \lVert  \lambda_{T\mid W}(t\mid \cdot) - \widehat \lambda_{T\mid W}(t\mid \cdot) \rVert_{L_2(P_W)} dt.
    \end{align*}
    The second term is bounded up to a multiplicative factor by
    \begin{align*}
        \int_{c_l}^{c_u} \lVert \widehat g(c|\cdot) - g(c|\cdot)\rVert_{L_2(P_W)} \lVert  \Lambda_{T\mid W}(c\mid \cdot) - \widehat \Lambda_{T\mid W}(c\mid \cdot) \rVert_{L_2(P_W)} dc.
    \end{align*}
    The third term is bounded up to a multiplicative factor by
    \begin{align*}
               \int_{c_l}^{c_u} \lVert  S_{T\mid W}(c\mid \cdot) - \widehat S_{T\mid W}(c\mid \cdot) \rVert^2_{L_2(P_W)} dc.
    \end{align*}
    The fourth term is bounded up to a multiplicative factor by
    \begin{align*}
                \lVert  \mu(\cdot) - \widehat \mu(\cdot)  \rVert^2_{L_2(P_W)} .
    \end{align*}
    Under conditions specified in Theorem~\ref{thm:efficiency real}, that is, each of the nuisance functions $\widehat g$, $\widehat \lambda_{T\mid W}$ and $\widehat \lambda_{R\mid W}$ is being estimated at $o_p(n^{-1/4})$ rate, all of the four terms above are $o_p(n^{-1/2})$. 
    
    Lastly, since  $\widehat{\eta^*}$ does not introduce additional estimation errors, $(P_n - P)\{\tau^{\text{eff}}_{\widehat P} - \tau^{\text{eff}}_P\} = o_p(n^{-1/2})$ if $\widehat{g}$, $\widehat{\lambda}_{T\mid W}$,  and $\widehat{\lambda}_{R\mid W}$ all belong to a fixed Donsker class $\mathcal{F}$ of functions with probability tending to one. Alternatively, if $\widehat{g}$, $\widehat{\lambda}_{T\mid W}$,  and $\widehat{\lambda}_{R \mid W}$ are obtained via cross-fitting \citep{zheng2010asymptotic}, we also have $(P_n - P)\{\tau^{\text{eff}}_{\widehat P} - \tau^{\text{eff}}_P\} = o_p(n^{-1/2})$. 
\end{proof}

\subsection{Proof of Lemma~\ref{lemma: EIF covariate shift}}
\begin{proof}
Let us again start with the intermediate observation unit $X^I = (S,W,ST,(1-S)C,(1-S)\Delta_C)$ with distribution $P^I$ belonging to the model $\calM_2^I = \{\mathbb{P}_2(\Pi,\widetilde P_{0,W},\widetilde P_{1,W}, \widetilde P_{T|W},\widetilde P_{C|W})\}$ for some functional $\mathbb{P}_2$. The tangent space at $P^I$ with respect to the model $\calM_2^I$ is as follows.
\begin{align}
    \calT_2^I &= \overline{\textnormal{span}}\Bigg\{sh(t,w) + (1-s)h_C(c,w) + \frac{(1-s)(\delta_C - F_{T|W}(c|w))}{F_{T|W}(c|w)(1-F_{T|W}(c|w))}\int_0^c f_{T|W}(u|w)h(u,w)du \nonumber \\
    &\quad + s h_{1,W}(w) + (1-s)h_{0,W}(w), h_{i,W} \in L_2^0(P_{i,W}) \textnormal{ for } i \in \{0,1\}, h(t,w) \in L_2^0(P_{T|W}), h_C(c,w) \in L_2^0(P_{C|W})\Bigg\}.
\end{align}
This can be derived in the same way as in the proof of Lemma~\ref{lemma: tangent space FO and CS} except that $P_{0,W}$ and $P_{1,W}$ need to be perturbed separately, and the details are omitted.

Define the functionals $\Psi_1:\calM_2^I \rightarrow \mathbb{R}$ and $\Psi_0:\calM_2^I \rightarrow \mathbb{R}$ such that $\Psi_i(P^I) = \EE_{P_{i,W}}[P_{T|W}(T>t^*|W)]$ for $i \in \{0,1\}$. Note that the functional $\Psi_i$ does not depend on $P_{C|W}$ and $P_{1-i,W}$. Thus, for the purpose of finding the canonical gradient, it suffices to consider a reduced model where $P_{1-i,W}$ and $P_{C|W}$, and to find elements $h_{i,W}^* \in L_2^0(P_{i,W})$ and $h^* \in L_2^0(P_{T|W})$ that satisfies the equation that defines a gradient. The rest of the proof is similar to that of Lemma~\ref{lemma: EIF FO and CS}. 

Fix a value of $i \in \{0,1\}$. First, consider a submodel $\{P_\epsilon^I:\epsilon\}$ such that the density of $P_{\epsilon, i,W}$ is $p_{\epsilon,i,W} = p_{i,W}(w)(1+\epsilon h_{i,W}(w))$ and all other component distributions are the same as under $P^I$. For notational convenience, we write $\Psi_i(P^I)$ as $\mu_i$, and let $\mu(w)$ again denote $P_{T|W}(T>t^*|W=w)$. Then,
\begin{align*}
    \frac{d\Psi_i(P_\epsilon^I)}{d\epsilon} &= \int \left( I\{t > t^* \} - \mu_i \right)  f(t|w) p_{i,W}(w) h_{i,W}(w) dtdw \\
    &= \int (\mu(w)-\mu_i) h_{i,W}(w)p_{i,W}(w)dw \\
    &= \EE_{P^I}\Bigg[\mathbbm{1}(S=i)h_{i,W}(W)\times \\
    &\quad \left\{\mathbbm{1}(S=i)h_{i,W}^*(W) + Sh^*(T,W) + \frac{(1-S)(\Delta_C - F_{T|W}(C|W))}{F_{T|W}(C|W)(1-F_{T|W}(C|W))}\int_0^C f(u|W)h^*(u,W)du\right\}\Bigg] \\
    &= \EE_{P^I}\left[\mathbbm{1}(S=i)h_{i,W}(W)h_{i,W}^*(W) \right] \\
    &= \Pi(S=i)\EE_{P_{i,W}}\left[h_{i,W}(W)h_{i,W}^*(W) \right] \\
    &=  \Pi(S=i) \int h_{i,W}(w)h_{i,W}^*(w)p_{i,W}(w)dw.
\end{align*}
The above holds for all $h_{i,W} \in L_2^0(P_{i,W})$, and therefore, $h_{i,W}^*(w) = (\mu(w) - \mu_i)/\Pi(S=i)$.  

Next, we consider parametric submodels that perturb $P_{T|W}$ and find $h^*(t,w)$. Specifically, we now consider a submodel $\{P_\epsilon^I:\epsilon\}$ such that the conditional density of $T$ given $W$ under $P_\epsilon^I$ is $f(t|w)(1+\epsilon h(t,w))$.
\begin{align*}
    \frac{d\Psi_i(P_\epsilon^I)}{d\epsilon} &= \int \left( I\{t>t^*\} - \mu(w)\right) h(t,w) f(t|w)  p_{i,W}(w) dtdw \\
    &= \EE_{P^I}\Bigg[\left\{\mathbbm{1}(S=i)h_{i,W}^*(W) + Sh^*(T,W) + \frac{(1-S)(\Delta_C - F_{T|W}(C|W))}{F_{T|W}(C|W)(1-F_{T|W}(C|W))}\int_0^C f(u|W)h^*(u,W)du\right\} \\
    &\quad \qquad \times \left\{Sh(T,W) + \frac{(1-S)(\Delta_C - F_{T|W}(C|W))}{F_{T|W}(C|W)(1-F_{T|W}(C|W))}\int_0^C f(u|W)h(u,W)du\right\}\Bigg] \\
    &= \EE_{P^I}[Sh(T,W)h^*(T,W)] \\
    &\quad + \EE_{P^I}\left[(1-S)\frac{(\Delta_C - F_{T|W}(C|W))^2}{F_{T|W}(C|W)^2(1-F_{T|W}(C|W))^2}\int_0^C f(v|W)h(v,W)dv\int_0^C f(u|W)h^*(u,W)du\right] \\
    &= \pi \int h(t,w)h^*(t,w) f(t|w)p_{1,W}(w)dtdw \\
    &\quad + (1-\pi)\int \frac{H^*(c,w)}{F_{T|W}(c|w)(1-F_{T|W}(c|w))} \left(\int_0^c f(t|w)h(t,w)dt \right) g(c|w) p_{0,W}(w) dcdw \\
    &= \pi \int h(t,w)h^*(t,w) f(t|w)\frac{p_{1,W}(w)}{p_{i,W}(w)}p_{i,W}(w)dtdw \\
    &\quad + (1-\pi)\int \left( \int_t^\infty \frac{H^*(c,w) g(c|w)}{F_{T|W}(c|w)(1-F_{T|W}(c|w))} dc \right) h(t,w) f(t|w) \frac{p_{0,W}(w)}{p_{i,W}(w)}p_{i,W}(w)\ dtdw \\
    &= \int \left\{ \pi \frac{p_{1,W}(w)}{p_{i,W}(w)} h^*(t,w)  + (1-\pi) \frac{p_{0,W}(w)}{p_{i,W}(w)} \int_t^\infty \frac{H^*(c,w) g(c|w)}{F_{T|W}(c|w)(1-F_{T|W}(c|w))} dc \right\} h(t,w) f(t|w) p_{i,W}(w)dtdw
\end{align*}
The first term in the curly brackets has been 0 conditional on $W$, and the second term can be centered with $\gamma(w)$. As $h$ is arbitrary, the following must hold for almost every $(t,w)$,
\begin{equation}\label{eq: integral equation covariate shift repeat}
   \pi \frac{p_{1,W}(w)}{p_{i,W}(w)} h^*(t,w)  + \frac{p_{0,W}(w)}{p_{i,W}(w)}\left\{(1-\pi) \int_t^\infty \frac{H^*(c,w) g(c|w)}{F_{T|W}(c|w)(1-F_{T|W}(c|w))}dc -\gamma(w) \right\} - I\{t>t^*\} + \mu(w) = 0.
\end{equation}

Given the derivations above, we obtain the canonical gradient of $\Phi_i$ evaluated at $P^I$ relative to the model $\calM_2^I$ as follows:
\begin{equation*}
    \xi_{i,P^I}: x^I \mapsto sh^*(t,w) + \frac{(1-s)(\delta_C - F_{T|W}(c|w))}{F_{T|W}(c|w)(1-F_{T|W}(c|w))}\int_0^c f_{T|W}(u|w)h^*(u,w)du + \frac{\mathbbm{1}(s=i)}{\Pi(S=i)}\left\{\mu(w) - \mu_i\right\},
\end{equation*}
where $h^*$ solves the equation in \eqref{eq: integral equation covariate shift repeat}. Finally, applying the results for coarsening-at-random, for example, that in \citet{van2007note}, we obtain a valid gradient of $\Phi_i$ evaluated at the observed data distribution $P$ relative to the observed data model $\calM_2$ is
\begin{align*}
    \xi_{i,P}: x &\mapsto s\int_0^\infty \frac{h^*(u,w) - \EE_{T|W}[h^*(T,W)|T\geq u,W=w]}{\Gamma(u|w)}d\left\{I\{y \leq u,\delta_R =1\} -\int_0^u I\{y \geq v\} d\Lambda_{T|W}(v|w)\right\} \nonumber \\
    &\quad + \frac{(1-s)(\delta_C-F_{T|W}(c|w))}{F_{T|W}(c|w)(1-F_{T|W}(c|w))}\int_0^c f_{T|W}(u|w)h^*(u,w)du + \frac{\mathbbm{1}(s=i)}{\Pi(S=i)}\left\{\mu(w) - \mu_i\right\}.
\end{align*}
The canonical gradient can be obtained by projecting a gradient based on right-censored data alone onto the observed data tangent space.
\end{proof}

\end{document}